\documentclass[twocolumn,amssymb,aps,notitlepage,pra,10pt]{revtex4-2}
\pdfoutput=1

\usepackage[utf8]{inputenc}
\usepackage{natbib}
\usepackage{amsmath}
\usepackage{amsfonts}
\usepackage{amsthm}
\usepackage{qcircuit}  
\usepackage{braket}
\usepackage{braket}
\usepackage{booktabs}
\usepackage{mathdots}
\usepackage{hyperref}
\usepackage{tabularx}
\usepackage{fullpage}
\usepackage{graphicx}
\usepackage{color}
\usepackage{relsize}
\usepackage[inline]{enumitem}
\usepackage{xspace}
\usepackage[normalem]{ulem}
\usepackage[noabbrev,capitalize]{cleveref}

\theoremstyle{plain}
\newtheorem{theorem}{Theorem}

\newtheorem{proposition}[theorem]{Proposition}

\theoremstyle{definition}

\newtheorem{construction}[theorem]{Construction}

\theoremstyle{remark}
\newtheorem{remark}[theorem]{Remark}
\newtheorem*{notation*}{Notation}


\newcommand{\Z}{\mathbb{Z}}

\newcommand{\C}{\mathbb{C}}


\renewcommand{\deg}{\operatorname{deg}}


\renewcommand{\vec}[1]{\mathbf{#1}}

\newcommand{\x}{\vec{x}}

\newcommand{\etal}{{\it et al. }}
\setlength{\lightrulewidth}{.5pt}
\setlength{\heavyrulewidth}{1.1pt}

\makeatletter
\newcommand\xleftrightarrow[2][]{%
  \ext@arrow 9999{\longleftrightarrowfill@}{#1}{#2}}
\newcommand\longleftrightarrowfill@{%
  \arrowfill@\leftarrow\relbar\rightarrow}
\makeatother

\newcolumntype{C}{>{$}c<{$}}
\AtBeginDocument{
\heavyrulewidth=.08em
\lightrulewidth=.05em
\cmidrulewidth=.03em
\belowrulesep=.65ex
\belowbottomsep=0pt
\aboverulesep=.4ex
\abovetopsep=0pt
\cmidrulesep=\doublerulesep
\cmidrulekern=.5em
\defaultaddspace=.5em
}

\newlength{\localh}
\newlength{\locald}
\newbox\mybox
\def\mp#1#2{\scalebox{1}{\setbox\mybox\hbox{#2}\localh\ht\mybox\locald\dp\mybox\addtolength{\localh}{-\locald}\raisebox{-#1\localh}{\box\mybox}}}

\newcommand\showeqno{\addtocounter{equation}{1}\the\numexpr \value{equation} - 14 \relax}

\begin{document}

\title{The phase/state duality in reversible circuit design}%
\author{Matthew Amy}
\email{matt.amy@dal.ca}%
\affiliation{Department of Mathematics and Statistics \\ Dalhousie University, Halifax, Canada}

\author{Neil J. Ross}
\email{neil.jr.ross@dal.ca}
\affiliation{Department of Mathematics and Statistics \\ Dalhousie University, Halifax, Canada}%

\begin{abstract}
  The reversible implementation of classical functions accounts
  for the bulk of most known quantum algorithms. As a result,
  a number of reversible circuit constructions over the Clifford+$T$
  gate set have been developed in recent years which use both the state and 
  phase spaces, or $X$ and $Z$ bases, to reduce circuit costs beyond what is possible
  at the strictly classical level. We study and generalize two particular classes of these
  constructions: relative phase circuits, including Giles and Selinger's 
  multiply-controlled $iX$ gates and Maslov's $4$ qubit Toffoli gate,
  and measurement-assisted circuits, including Jones' Toffoli gate and 
  Gidney's temporary logical-AND.
  In doing so, we introduce general methods for
  implementing classical functions up to phase and for
  measurement-assisted termination of temporary values. 
  We then apply these
  techniques to find novel $T$-count efficient constructions of
  some classical functions in space-constrained regimes, notably multiply-controlled Toffoli gates and 
  temporary products.
\end{abstract}

\maketitle

\section{Introduction}

The reversible implementation of classical functions on a quantum
computer is crucial to many quantum algorithms, including Grover's
search algorithm \cite{Grover96} and Shor's factoring algorithm
\cite{ar:shor}. In such algorithms, oracles for classical subroutines
account for the bulk of the total circuit volume. As a result,
the optimization of quantum circuits for classical reversible
functions is central to the resource-efficient implementation of
quantum algorithms.

Two complexity measures in the design of quantum circuits are
the circuit \emph{time} or \emph{depth} and \emph{space}
or \emph{width}. 
The former corresponds roughly to the number of gates that
appear in the circuit and is sometimes weighted to account for
the fact that some gates are more costly than others, while the latter
is given by the number of qubits used in the the circuit. 
Changes to these costs are amplified 
in fault-tolerant contexts; each additional logical qubit
requires a large number of physical qubits, while longer computations
require more error correction, further increasing the physical footprint
of a quantum algorithm.

Standard
techniques for synthesizing reversible circuits can lead to
massive space overheads, 
as they rely on ancillary qubits to hold
intermediate values. This overhead can be often be
mitigated by \emph{uncomputing} intermediate values once they are no longer
useful, at the expense of extra gates.
This space-time trade-off is explored at the level of
reversible circuit synthesis through pebble games \cite{pebbles}.

Over the past decade, significant effort has been devoted to
efficiently implementing classical functions over the
Clifford+$T$ gate set, motivated by the fact that Clifford+$T$ gates
are well-suited for fault-tolerant quantum computation \cite{buslcl06}. 
In this context
the number of $T$ gates in a circuit, its \emph{$T$-count}, 
often dominates the cost in time due to the difficulty of implementing 
the $T$ gate in a fault-tolerant manner.
These recent efforts resulted in a variety of optimized Clifford+$T$
implementations of reversible gates, many of which leverage the
\emph{phase space}, or the $X$ basis, to go beyond optimizations
possibly purely in the \emph{state space}, or $Z$ basis. By
these we mean information encoded in either
the phase of a quantum state or the computational basis state,
respectively. As information in the phase and state can be freely
exchanged and independently operated on, 
we refer to this as the \emph{phase/state duality}.

In the present work we study further applications of
the phase/state duality to reversible circuit design, 
generalizing several recent constructions:
\begin{itemize}
\setlength\itemsep{-0.2em}
\item the multiply-controlled $iX$ gates of \cite{gs13} and
  \cite{s13},
\item the measurement-assisted Toffoli of \cite{j13},
\item the relative phase Toffoli-$4$ of \cite{m16}, and
\item the temporary logical-AND of \cite{g18}.
\end{itemize}
We then apply these methods to introduce new circuit designs over the
Clifford+$T$ gate set which improve the cost primarily of 
space-constrained implementations of oracles for classical functions. 

\begin{table*}
\begin{tabularx}{\textwidth}{Xclclc}
\toprule
Gate & \hspace{.5em} Ancillary state \hspace{.5em} & $T$-count \hspace{3em} & \hspace{.5em} Valid \hspace{.5em} & Notes & Ref. \\ \midrule
$U_{f\cdot g}$ & $\ket{00}$ & $2\tau(U_f) + \tau(U_g) + 8$ & -- & & \ref{circ:oracle1} \\
$U_{f\cdot g}$ & -- & $2\tau(U_f) + 2\tau(U_g) + 4$ & -- & Relative phase in the controls & \ref{circ:oracle2} \\
$U_{f\cdot g}$ & -- & $2\tau(U_f) + \tau(U_g) + 4$ & -- & Relative phase in the controls \& target & \ref{circ:oracle3} \\ \midrule
$\Lambda_k(X)$
	& $\ket{z}$ & $16(k-1)$ & $k \geq 6$ & Prior art & \cite{m16} \\
$\Lambda_k(X^\bullet)$ 
	& $\ket{z}$ & $8(k-2) + 4$ & $k\geq 2$ & Relative phase in the controls \& ancilla & \ref{circ:lambdaxbulletdirty} \\
$\Lambda_k(X)$
	& $\ket{z}$ & $16(k-2)$ & $k\geq 4$ & & \ref{circ:lambdaxdirty} \\
$\Lambda_k(X)$
	& $\ket{0}$ & $16(k-3)$ or $16(k-3) + 4$ & $k\geq 4$ & Measurement-assisted & \\
$\Lambda_k(iX)$
	& -- & $16(k-2) + 4$ & $k \geq 6$ & Prior art; Relative phase in the controls & \cite{gs13, m16} \\
$\Lambda_k(iX)$
	& -- & $16(k-3) + 4$ & $k\geq 4$ & Relative phase in the controls & \ref{circ:cix} \\
$\Lambda_k(X^\bullet)$ 
	& -- & $16(k-4) + 4$ & $k\geq 5$ & Relative phase in the controls & \ref{circ:cxbullet} \\
$\Lambda_k(X^\star)$ 
	& -- & $8(k-2)$ & $k \geq 3$ & Relative phase in the controls \& target & \ref{circ:cxstar} \\
$\Lambda_k(X^\star)$
	& $\ket{0}^{\otimes m}$ & $4m + 8(k-m-2)$ & $k \geq 5$ & Relative phase in the controls \& target \\ \midrule
$U_{f_k}$ 
	& $\ket{z}$ & $8(k-1)$ & $k \geq 2$ & \\
$U_{f_k}$ 
	& -- & $4(k-1)$ & $k \geq 2$ & Relative phase in the controls \& target & \ref{circ:fk} \\ \midrule
$3$-AND
	& $\ket{0}$ & $8$ & -- & Prior art; Relative phase in the controls & \cite{m16} \\
$3$-AND$^\dagger$
	& -- & $3$ or $4$ & -- & Relative phase; Measurement-assisted & \ref{circ:3unand} \\
$k$-AND
	& $\ket{0}$ & $16(k-3) + 4$ & $k\geq 4$ & -- & \ref{circ:kand} \\
$k$-AND$^\dagger$
	& -- & $0$ or $16(k-4) + 4$ & $k\geq 6$ & Measurement-assisted & \ref{circ:kunand} \\
$k$-AND
	& $\ket{0}$ & $8(k-2)$ & $k\geq 3$ & Relative phase in the controls & \ref{circ:andk} \\
$k$-AND$^\dagger$
	& -- & $8(k-4)$ or $8(k-4)+4$ & $k\geq 4$ & Relative phase; Measurement-assisted & \ref{circ:unkand} \\ \bottomrule
\end{tabularx}
\caption{$T$ count scaling for various reversible functions and gates.
$\tau(U_f)$ and $\tau(U_g)$ give the $T$-counts of implementations of $U_f$ and 
$U_g$, respectively. References to explicit circuits are given where possible.}
\label{tab:foo}
\end{table*}

\Cref{tab:foo} gives an overview of the novel circuits we give,
as well as the best-known constructions when possible. 
While many of these implement classical functions 
up to relative phase, all constructions with relative phases \emph{in the 
controls and/or target} can be used as drop-in replacements for matched
compute/uncompute pairs.

Our contributions include an ancilla-free $k$-control Toffoli up to 
a relative phase in both the controls and the target 
with $T$-count $8(k-2)$, improving the best known construction by 
a factor of roughly $50$\%. We also show that if this gate is used to 
initialize a temporary product
of $k$ bits as in \cite{g18}, it can be terminated with the aid of
measurement and classical control with at most $8(k-4) + 4$ $T$ gates.
Combined, these constructions give a method of temporarily 
instantiating a logical product of $k$ bits with total $T$-count at most 
$16(k-3) + 4$ and no ancillas, besides the one used to store the product.
Previous techniques require $32(k-2) + 8$ $T$ 
gates, a reduction of over $50$\% compared to the state of the art.

We also give novel space-constrained constructions for the
$k$-control Toffoli gate with reduced $T$-count 
and for the efficient multiplication of classical oracles up to phase.
We additionally show that there
exist classes of Boolean functions of degree $k$ which can be
implemented up to relative phase and without ancillas using $4(k-1)$ $T$
gates. These constructions match or improve on the
$T$-count of the best known generic method \cite{mscrd19} which uses
$O(k)$ ancillas, and have potential applications to automated and
LUT-based \cite{srwm19} synthesis of reversible circuits. More broadly,
these constructions show that there exist functions for which existing 
techniques are not able to reduce the $T$-count through the addition of 
ancillas.

\section{Background}
\label{sec:history}

\subsection{Quantum oracles}

It is well-known that, in the presence of ancillas, the gate set
\[
\{ X, \Lambda_1(X), \Lambda_2(X) \},
\]
consisting of the NOT, controlled-NOT, and Toffoli gates, is
\emph{universal for classical computing} \cite{bbcdmsssw95}. That is,
for any Boolean (or classical) function $f:\Z_2^n\rightarrow
\Z_2^m$, there exists a circuit over the gate set $\{X,
\Lambda_1(X),\Lambda_2(X)\}$ which implements a unitary $U_f$ whose
action on the computational basis is described by
\[
  \ket{\x}\ket{0\cdots 0}\ket{y} \mapsto \ket{\x}\ket{g_1(\x)\cdots
    g_k(\x)}\ket{y \oplus f(\x)}
\]
where $\x=x_1x_2,\dots,x_n$ and the $g_i$ are
 some Boolean functions $g_i:\Z_2^n\to \Z_2$.  The unitary
$U_f$ is an \emph{oracle} for $f$. The qubits beginning in the
$\ket{0}$ state are \emph{clean} ancillas. The values $g_i(x)$ used in
the process of computing $f$ are \emph{temporary values} and are often
referred to as \emph{garbage}.

To reclaim the space used for temporary values, the final result can be copied on an additional ancilla, and the circuit for $U_f$ can be run in reverse. This \emph{uncomputes} the temporary values that are no longer needed, thereby \emph{cleaning up} the garbage. This technique, colloquially known as the \emph{Bennett trick}, is shown below:
\[\hspace{-5pt}
\scalebox{0.85}{
\Qcircuit @C=.5em @R=0.6em {
\lstick{x_1} &\qw &\qw        &\qw    &\multigate{6}{U_f} &\qw      &\multigate{6}{U_f^\dagger} &\qw &\qw &\qw &\rstick{x_1} \qw \\
 & \vdots & & & & & & & & \vdots \\
\lstick{x_k} &\qw &\qw        &\qw    &\ghost{U_f} &\qw      &\ghost{U_f^\dagger} &\qw &\qw &\qw &\rstick{x_k} \qw \\
           &       &\lstick{0} &\qw &\ghost{U_f}        &\qw      &\ghost{U_f^\dagger}        &\qw &\rstick{0} \qw\\
           &       & &\vdots  &       &      &        &\vdots & \\
           &       &\lstick{0} &\qw &\ghost{U_f}        &\qw      &\ghost{U_f^\dagger}        &\qw &\rstick{0} \qw\\
           &       &\lstick{0} &\qw &\ghost{U_f}        &\ctrl{1} &\ghost{U_f^\dagger}        &\qw &\rstick{0} \qw\\
\lstick{y} &\qw &\qw        &\qw    &\qw                &\targ    &\qw                        &\qw &\qw &\qw &\rstick{y\oplus f(x_1,\dots,x_k).} \qw
}
}
\]
Note that, in the circuit above, the target of $U_f$ is in the $\ket{0}$ state. One can relax this requirement at the cost of an extra $\Lambda_1(X)$ gate:
\[\hspace{-10pt}
\scalebox{0.85}{
\Qcircuit @C=.5em @R=0.6em {
\lstick{x_1} &\qw &\qw        &\qw    &\multigate{6}{U_f} &\qw      &\multigate{6}{U_f^\dagger} &\qw &\qw &\qw &\rstick{x_1} \qw \\
 & \vdots & & & & & & & & \vdots \\
\lstick{x_k} &\qw &\qw        &\qw    &\ghost{U_f} &\qw      &\ghost{U_f^\dagger} &\qw &\qw &\qw &\rstick{x_k} \qw \\
           &       &\lstick{0} &\qw &\ghost{U_f}        &\qw      &\ghost{U_f^\dagger}        &\qw &\rstick{0} \qw\\
           &       & &\vdots  &       &      &        &\vdots & \\
           &       &\lstick{0} &\qw &\ghost{U_f}        &\qw      &\ghost{U_f^\dagger}        &\qw &\rstick{0} \qw\\
           &       &\lstick{a} &\ctrl{1} &\ghost{U_f}        &\ctrl{1} &\ghost{U_f^\dagger}        &\qw &\rstick{a} \qw\\
\lstick{y} &\qw &\qw        &\targ    &\qw                &\targ    &\qw                        &\qw &\qw &\qw &\rstick{y\oplus f(x_1,\dots,x_k).} \qw
}
}
\]
In this case we say that the (uninitialized) ancilla is \emph{dirty}.

We use rounded boxes to denote oracles which do not leave any
garbage and do not modify their inputs:
\[\hspace{-5pt}
\Qcircuit @C=.5em @R=0.6em {
\lstick{x_1} & \qw & \multimeasure{2}{f} & \qw & \rstick{x_1}\qw \\
 & \vdots & & \vdots \\
\lstick{x_k} & \qw & \ghost{f} & \qw & \rstick{x_k}\qw \\
\lstick{y} & \qw & \targ\qwx[-1] & \qw & \rstick{y\oplus f(x_1,\dots,x_k).}\qw
}
\]

\subsection{Generalized permutations}

A \emph{(unitary) generalized permutation matrix} is a permutation
matrix whose nonzero entries are elements of $\mathbb{T} = \{ z\in \C
\mid |z|=1\}$, the group of complex numbers of unit length. Every
generalized permutation matrix $U$ can be factored as the product of a
permutation matrix $P$ and a diagonal matrix $D$, i.e. $U = PD$. Note
that, since $D'=PDP^\dagger$ is also diagonal, $U$ can alternatively
be factored as
\[
U=PD=PDP^\dagger P = D'P.
\]
We will sometimes leverage this kind of quasi-commutation throughout
the remainder of this paper.
Restricting the nonzero entries of generalized permutation
matrices to $m$-th roots of unity yields the \emph{generalized
symmetric group} $\mathcal{S}(m,n)$.

\subsection{Relative phases}

Generalized permutations occur in quantum computing as
\emph{relative-phase} implementations of classical functions. In
particular, a generalized permutation $\widetilde{U_f}$ acting on the
computational basis as
\[
	\widetilde{U_f}: \ket{\x}\ket{0\cdots 0}\ket{y}\mapsto e^{ig(\x,y)}\ket{\x}\ket{0\cdots 0}\ket{y\oplus f(\x)}
\]
is called a \emph{relative-phase} implementation or oracle for $f$ and 
$e^{ig(\x, y)}$ is called the \emph{phase}. If $g(\x,y)=g(\x,y')$ for all 
$y'\in\Z_2$, we say that the phase \emph{depends only on the controls}. 
Otherwise, we say that the phase \emph{depends on the controls and the target}.

It can be observed \cite{bbcdmsssw95} that a relative phase
implementation suffices to compute any temporary value in a reversible
circuit or oracle. For example, the circuit below uses the Bennett
trick and a relative phase implementation $\widetilde{U_f}=U_fD$ of $f$ where
$D$ is some diagonal unitary to construct a \emph{phase-free} oracle
for $f$.
\[
\scalebox{0.85}{
\Qcircuit @C=.5em @R=0.6em {
\lstick{x_1} &\qw &\qw        &\qw    &\multigate{6}{D} &\multigate{6}{U_f} &\qw      &\multigate{6}{U_f^\dagger} &\multigate{6}{D^\dagger} &\qw &\qw &\qw &\rstick{x_1} \qw \\
	   & \vdots & & & & & & & & & & \vdots \\
\lstick{x_k} &\qw &\qw &\qw &\ghost{D}        &\ghost{U_f}        &\qw      &\ghost{U_f^\dagger}        &\ghost{D^\dagger} &\qw &\qw &\qw &\rstick{x_k} \qw \\
           &       &\lstick{0} &\qw &\ghost{D}        &\ghost{U_f}        &\qw      &\ghost{U_f^\dagger}        &\ghost{D^\dagger}        &\qw &\rstick{0} \qw\\
	   & & & \vdots & & & & & & \vdots \\
           &       &\lstick{0} &\qw &\ghost{D}        &\ghost{U_f}        &\qw      &\ghost{U_f^\dagger}        &\ghost{D^\dagger}        &\qw &\rstick{0} \qw\\
           &       &\lstick{0} &\qw &\ghost{D}        &\ghost{U_f}        &\ctrl{1} &\ghost{U_f^\dagger}        &\ghost{D^\dagger}        &\qw &\rstick{0} \qw\\
\lstick{y} &\qw &\qw        &\qw    &\qw              &\qw                &\targ    &\qw                        &\qw                      &\qw &\qw &\qw &\rstick{y\oplus f(\x)} \qw
}
}
\]
The correctness of the circuit can be established through the
quasi-commutation noted above. Indeed, we have $U_fD= D'U_f$ for some
diagonal matrix $D'$. The diagonal gates can thus be moved inwards and
cancelled, since diagonal matrices commute with controls.

More generally, an oracle $U_f$ in some compute/uncompute pair
$U_f^\dagger UU_f$
may be implemented up to a relative phase on qubit $i$
whenever the internal computation $U$ is globally constant 
on the \emph{state} space of qubit $i$. In particular,
$U$ is globally constant on the state space of the first qubit if
\[
	U(\ket{x_1}\otimes \ket{x_2\cdots x_n}) = e^{ig(\x)}\ket{x_1}\otimes U_{x_1}\ket{x_2\cdots x_n}
\]
for any $\x\in\Z_2^n$.
In practice, this accounts for the vast majority of cases where
a temporary value is computed and later uncomputed.
Additional discussion can be found in \cref{app:rphase}.

\subsection{Phase space optimizations}

The observation that the phase space can be used to optimize
reversible circuits through the use of generalized permutations dates
back to Norman Margolus \cite{ds94}. Margolus noted that the Toffoli
gate can be implemented up to a phase with just $3$ two-qubit gates,
rather than the otherwise minimal $5$. DiVincenzo and Smolin
\cite{ds94} found similar optimizations using relative phases, albeit
with a stern warning that this ``is often a dangerous thing to do.''
This idea was explored further by Barenco \etal \cite{bbcdmsssw95},
noting that implementing up to phase is generally a safe thing to do
\emph{as long as only classical computations are performed before 
uncomputing the phase}.

The idea of using the phase space to optimize reversible circuits
experienced a recent resurgence, in part due to Peter Selinger's
relative phase Clifford+$T$ implementation of the Toffoli gate. In
particular, Selinger introduced the doubly-controlled $iX$ gate
\[
	\Lambda_2(iX) : \ket{x_1}\ket{x_2}\ket{y} 
		\mapsto i^{x_1x_2}\ket{x_1}\ket{x_2}\ket{y\oplus (x_1x_2)},
\]
which can be implemented with only $4$ $T$ gates (see \cref{fig:ix})
as opposed to the optimal $7$ $T$ gates needed to implement the
Toffoli gate on the nose in the absence of ancillas and measurements
\cite{tcount}.
\begin{figure*}
\[
	\Qcircuit @C=.5em @R=0.4em @!R {
		& \ctrl{2} & \qw  \\
		& \ctrl{1} & \qw  \\
		& \gate{iX} & \qw  \\
	}
~~\mp{4.7}{=}~~
	\Qcircuit @C=.5em @R=0.4em @!R {
		& \ctrl{1} & \ctrl{2} & \qw  \\
		& \gate{S} & \ctrl{1} & \qw  \\
		& \qw & \targ & \qw  \\
	}
~~\mp{4.7}{=}~~
	\Qcircuit @C=.5em @R=0.2em @!R {
		& \qw & \qw & \qw & \qw & \ctrl{2} & \qw & \qw & \qw & \ctrl{2} & \qw & \qw & \qw  \\
		& \qw & \qw & \ctrl{1} & \qw & \qw & \qw & \ctrl{1} & \qw & \qw & \qw & \qw & \qw  \\
		& \gate{H} & \gate{T^\dagger} & \targ & \gate{T} & \targ & \gate{T^\dagger} & \targ & \gate{T} & \targ & \gate{H} & \qw  \\
	}
\]
\caption{The doubly-controlled $iX$ gate \cite{s13}.}
\label{fig:ix}
\end{figure*}
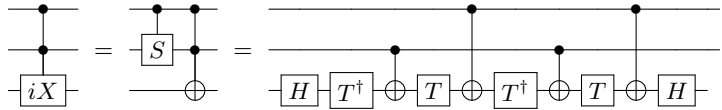
Since the erroneous phase $i^{x_1x_2}$ is irrelevant to computations in the 
state space, the doubly controlled $iX$ gate can be used interchangeably 
with a Toffoli gate to compute a temporary logical AND of two bits. When 
this temporary value is later uncomputed, the extraneous phase is also 
uncomputed, as below:
\[
	\Qcircuit @C=.5em @R=0.05em @!R {
		& \ctrl{3} & \qw  \\
		& \ctrl{2} & \qw  \\
		& \push{\rule{0em}{1.43em}}& \\
		& \ctrl{1} & \qw \\
		& \targ & \qw
	}
~~\mp{8}{=}~~
	\Qcircuit @C=.5em @R=0.05em @!R {
		& \qw & \qw & \qw & \ctrl{2} & \qw & \ctrl{2} & \qw & \qw & \qw \\
		& \qw & \qw & \qw & \ctrl{1} & \qw & \ctrl{1} & \qw & \qw & \qw \\
		& & & \lstick{0} & \gate{iX} & \ctrl{2} & \gate{iX^\dagger} & \rstick{0}\qw & & \\
		& \qw & \qw & \qw & \qw & \ctrl{1} & \qw & \qw & \qw & \qw \\
		& \qw & \qw & \qw & \qw & \targ & \qw & \qw & \qw & \qw
	}
\]
Automated methods were later developed which achieve the same 
$T$-counts by identifying the redundant $i^{x_1x_2}$ phase terms when 
regular Toffoli gates are used instead \cite{amm14,nrscm17}, mitigating 
the need for explicit relative phase constructions.

Cody Jones \cite{j13} used the $iX$ gate to implement a full Toffoli
gate using only $4$ $T$ gates, a measurement and a classically
controlled Clifford correction. The main insights were
\begin{enumerate*}[label=(\roman*)] \item that the phase $i^{x_1x_2}$
  could be corrected with a single $S^\dagger$ gate if the target of
  the $\Lambda_2(iX)$ gate is in the $\ket{0}$ state, and \item that
  the state $\ket{(x_1x_2)}$ can be uncomputed with a measurement and
  classically-controlled Clifford corrections. \end{enumerate*}
Explicitly, with a single clean ancilla, the product $\ket{(x_1x_2)}$ can
first be computed using a $\Lambda_2(iX)$ gate:
\[
	\Lambda_2(iX)\ket{x_1}\ket{x_2}\ket{0} =
        i^{x_1x_2}\ket{x_1}\ket{x_2}\ket{(x_1x_2)}.
\]
The $i^{x_1x_2}$ phase can then be immediately corrected by applying an
$S^\dagger$ gate to the ancilla. And the product $\ket{(x_1x_2)}$ can be
copied into the target register using a $\Lambda(X)$ gate and
uncomputed from the ancilla. Rather than uncomputing the state
$\ket{(x_1x_2)}$, Jones noted that it can be traded for a phase via a
Hadamard gate, since:
\[
	H\ket{(x_1x_2)} = \frac{1}{\sqrt{2}}\sum_{z\in\Z_2}(-1)^{x_1x_2z}\ket{z}.
\]
While correcting this phase with a doubly-controlled $Z$ gate would 
require $7$ $T$ gates, the ancilla can be measured first and then the 
resulting phase --- $1$ if the measurement result is $0$ or $(-1)^{x_1x_2}$ 
otherwise --- can be subsequently corrected. In the case of a measured 
value of $1$, a classically-controlled $\Lambda_1(Z)$ gate is all that is 
needed to correct to the phase. The resulting circuit is shown below.
\[
	\Qcircuit @C=.5em @R=0.05em @!R {
		& \ctrl{3} & \qw  \\
		& \ctrl{2} & \qw  \\
		& \push{\rule{0em}{1.43em}} & \\
		& \targ & \qw
	}
~~\mp{6.5}{=}~~
	\Qcircuit @C=.5em @R=0.05em @!R {
		& \qw & \qw & \qw & \ctrl{2} & \qw & \qw & \qw & \ctrl{1} & \qw & \qw \\
		& \qw & \qw & \qw & \ctrl{1} & \qw & \qw & \qw & \gate{Z} & \qw & \qw \\
		& & & \lstick{0} & \gate{iX} & \gate{S^\dagger} &  \ctrl{1} & \gate{H} & \meter \cwx & \\
		& \qw & \qw & \qw & \qw & \qw & \targ & \qw & \qw & \qw & \qw
	}
\]
As Jones's circuit involves an ancilla, measurement, and a classically
controlled correction, its use in reversible circuit design remained
somewhat limited until Craig Gidney \cite{g18} observed that by
delaying the uncomputation of the temporary product $x_1x_2$, the $T$-cost
of uncomputing certain temporary values in a reversible circuit can be
reduced to $0$. Gidney introduced the \emph{temporary logical-AND}
construction by explicitly separating Jones's Toffoli into a $T$-count
$4$ circuit for initializing an ancilla with a logical AND of two bits and a
corresponding \emph{termination} circuit with $T$-count $0$. We use the
term termination, corresponding to the transformation
$\ket{x_1}\ket{x_2}\ket{(x_1x_2)} \mapsto \ket{x_1}\ket{x_2},$
to denote the fact that the circuit is non-unitary.
Both circuits are shown
below.
\[
	\Qcircuit @C=.5em @R=0.1em @!R {
		& \ctrl{2} & \qw  \\
		& \ctrl{1} & \qw \\
		& {} & \push{\rule{0em}{1.45em}}\qw  \\
	}
~~\mp{4.3}{=}~~
	\Qcircuit @C=.5em @R=0.1em @!R {
		& \qw & \qw & \qw & \ctrl{2} & \qw & \qw \\
		& \qw & \qw & \qw & \ctrl{1} & \qw & \qw \\
		& & & \lstick{0} & \gate{iX} & \gate{S^\dagger} & \qw \\
	}
~\mp{-15}{,}~
\qquad
	\Qcircuit @C=.5em @R=0.1em @!R {
		& \ctrl{2} & \qw  \\
		& \ctrl{1} & \qw \\
		& \qw & \push{\rule{0em}{1.45em}}{} \\
	}
~~\mp{4.3}{=}~~
	\Qcircuit @C=.5em @R=0.1em @!R {
		& \qw & \ctrl{1} & \qw \\
		& \qw & \gate{Z} & \qw \\
		& \gate{H} & \meter \cwx & \push{\rule{0em}{1.45em}} \\
	}
\]
This construction gives rise to a $k$-controlled Toffoli gate
with $4(k-1)$ $T$ gates using $k-1$ clean ancillas, as well as
implementations of classical functions $f$ with multiplicative
complexity \footnote{The minimum number of AND gates 
required to implement
$f$ over $\{$AND, XOR, NOT$\}$.} $c_{\land}(f)$
using at most $4c_{\land}(f)$ $T$ gates and
$c_{\land}(f)$ ancillas \cite{mscrd19}.
More recently, Berry \etal \cite{bgmmb19} designed a
measurement assisted termination circuit for QROM states, while
Soeken and Roetteler \cite{sr20} studied similar 
measurement assisted termination in the context of
Clifford plus arbitrary single-qubit rotations. Gidney also
explored pebble game strategies using measurement-assisted
uncomputation in \cite{g19}.

In a complementary direction, other efficient generalized permutations 
were discovered following \cite{s13}. Giles and Selinger \cite{gs13} 
gave an implementation of the multi-qubit $iX$ gate without ancillas, 
shown in \cref{fig:giles}.
\begin{figure}
\[
	\Qcircuit @C=.5em @R=.9em {
		& \qw & \ctrl{6} & \qw & \qw  \\
		& & & & \\
		& \ustick{\vdots}\qw & \ctrl{4} & \ustick{\vdots}\qw & \qw  \\
		& \qw & \ctrl{3} & \qw & \qw  \\
		& & & & \\
		& \ustick{\vdots}\qw & \ctrl{1} & \ustick{\vdots}\qw & \qw  \\
		& \qw & \gate{iX} & \qw & \qw  \\
	}
~~~\mp{9}{=}~~~
	\Qcircuit @C=.5em @R=0.9em {
		& \qw & \qw & \qw & \qw & \ctrl{6} & \qw & \qw & \qw & \ctrl{6} & \qw & \qw \\
		& & & &  & & & & & & & & & \\
		& \qw & \qw & \qw & \ustick{\vdots}\qw & \ctrl{4} & \ustick{\vdots}\qw & \qw & \ustick{\vdots}\qw & \ctrl{4} & \ustick{\vdots}\qw & \qw \\
		& \qw & \qw & \ctrl{3} & \qw & \qw & \qw & \ctrl{3} & \qw & \qw & \qw & \qw \\
		& & & & & & & & & & & & & \\
		& \qw & \ustick{\vdots}\qw & \ctrl{1} & \ustick{\vdots}\qw & \qw & \ustick{\vdots}\qw & \ctrl{1} & \ustick{\vdots}\qw & \qw & \qw & \qw \\
		& \gate{H} & \gate{T^\dagger} & \targ & \gate{T} & \targ & \gate{T^\dagger} & \targ & \gate{T} & \targ & \gate{H} & \qw
	}
\]
\caption{A circuit implementing a multiply-controlled $iX$ gate \cite{gs13}.}
\label{fig:giles}
\end{figure}
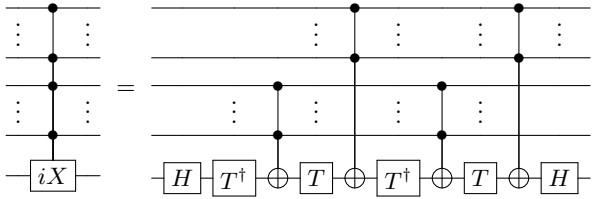
Dmitri Maslov \cite{m16} later looked at implementations of the doubly- and 
triply-controlled Toffoli gates up to other relative phases. One of Maslov's 
discoveries was a relative phase triply-controlled Toffoli gate which is shown 
in \cref{fig:maslov}. The circuit, implementing the generalized permutation
\[
	\ket{\x}\ket{y} \mapsto 
		i^{x_1x_2 + x_1x_2x_3}(-1)^{x_1x_2y}\ket{\x}\ket{y\oplus (x_1x_2x_3)},
\]
reduces the space usage to compute a product of three bits with only
$8$ $T$ gates, at the expense of a target-dependent phase of $ i^{x_1x_2 +
  x_1x_2x_3}(-1)^{x_1x_2y}$.

By using this relative phase $4$-qubit Toffoli, as well as other generalized 
permutations, novel implementations of reversible functions with reduced 
space usage were given in \cite{m16}. In the case of the Toffoli gate of 
Barenco \emph{et al.} \cite[Lemma 7.2]{bbcdmsssw95}, these techniques 
reduced the $T$-count from $12n + O(1)$ to $8n + O(1)$.
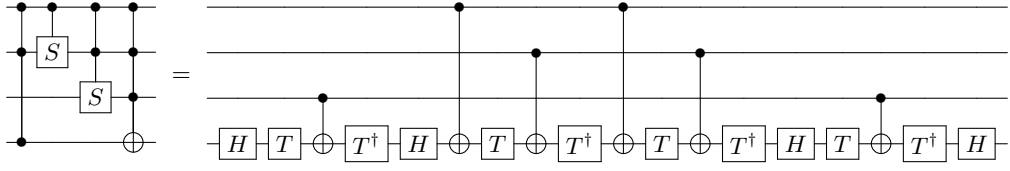
\begin{figure*}
\[
	\Qcircuit @C=.5em @R=0.58em @!R {
		& \ctrl{3} & \ctrl{1} & \ctrl{2} & \ctrl{3} & \qw \\
		& \ctrl{2} & \gate{S} & \ctrl{1} & \ctrl{2} & \qw \\
		& \qw & \qw & \gate{S} & \ctrl{1} & \qw \\
		& \ctrl{0} & \qw & \qw & \targ & \qw 
	}
	~~\mp{7.7}{=}~~
	\Qcircuit @C=.5em @R=0.4em @!R {
		& \qw & \qw & \qw & \qw & \qw & \ctrl{3} & \qw & \qw & \qw & \ctrl{3} & \qw & \qw & \qw & \qw & \qw & \qw & \qw & \qw & \qw \\
		& \qw & \qw & \qw & \qw & \qw & \qw & \qw & \ctrl{2} & \qw & \qw & \qw & \ctrl{2} & \qw & \qw & \qw & \qw & \qw & \qw & \qw \\
		& \qw & \qw & \ctrl{1} & \qw & \qw & \qw & \qw & \qw & \qw & \qw & \qw & \qw & \qw & \qw & \qw & \ctrl{1} & \qw & \qw & \qw \\
		& \gate{H} & \gate{T} & \targ & \gate{T^\dagger} & \gate{H} & \targ & \gate{T} & \targ & \gate{T^\dagger} & \targ & \gate{T}
			& \targ & \gate{T^\dagger} & \gate{H} & \gate{T} & \targ & \gate{T^\dagger} & \gate{H} & \qw 
	}
\]
\caption{A relative phase $4$-qubit Toffoli gate \cite{m16}.}
\label{fig:maslov}
\end{figure*}

\section{Circuits with ancillas}\label{sec:ancilla}

The constructions discussed in the previous section use the
phase/state duality in a variety of ways to design efficient
circuits. Typically, these circuits are then used as blackboxes: when
a more complicated functionality is required, it is reduced to a
combination of known circuits. In the remainder of the paper, we study
circuit design techniques which generalize the constructions of the
previous section and we use these techniques to define new and
efficient circuits.

Central to the techniques that we study here, and in general to the 
design of relative-phase circuits, is the
fact that the Hadamard gate induces a bijection between states of
the form $\ket{y\oplus f(\x)}$ and $(-1)^{yf(\x)}\ket{y}$. Specifically,
it is an easy calculation to show that
conjugation by Hadamard gates $\phi_H(\cdot)$ has the effect
\[
	\ket{y\oplus f(\x)}\bra{y} \xleftrightarrow{\phi_H(\cdot)}
		(-1)^{yf(\x)}\ket{y}\bra{y}
\]
Stated as circuit equalities, this is the standard fact \cite{nc00}
that a target
can be swapped for a ``control'', taken here as a $Z$ gate, and 
vice versa:
\[
	\scalebox{0.9}{
	\Qcircuit @C=.5em @R=0.3em {
  		\push{\rule{0em}{1.2em}}& \qw & \multimeasure{2}{f} & \qw & \qw \\
		& \vdots & & \vdots & \\
  		\push{\rule{0em}{1.2em}}& \qw & \ghost{f} & \qw & \qw \\
  		\push{\rule{0em}{1.2em}}& \qw & \targ \qwx[-1]  & \qw & \qw
	}}
	~~\mp{5}{=}~ 
	\scalebox{0.9}{
	\Qcircuit @C=.5em @R=0.3em {
  		\push{\rule{0em}{1.2em}}& \qw & \qw & \multimeasure{2}{f} & \qw & \qw & \qw \\
		& \vdots & & & & \vdots & \\
  		\push{\rule{0em}{1.2em}}& \qw & \qw & \ghost{f} & \qw & \qw & \qw \\
  		\push{\rule{0em}{1.2em}}& \qw & \gate{H} & \gate{Z} \qwx[-1] & \gate{H} & \qw & \qw
	}}
	~~\mp{5}{=}~
	\scalebox{0.9}{
	\Qcircuit @C=.5em @R=0.3em {
  		\push{\rule{0em}{1.2em}}& \qw & \qw & \multimeasure{2}{f} & \qw & \qw & \qw \\
		& \vdots & & & & \vdots & \\
  		\push{\rule{0em}{1.2em}}& \qw & \qw & \ghost{f} & \qw & \qw & \qw \\
  		\push{\rule{0em}{1.2em}}& \qw & \gate{H} & \ctrl{-1} & \gate{H} & \qw & \qw
	}}
\]

This simple fact can have perhaps surprising applications when oracles are
allowed to be implemented up to (relative) phase. In the first application
that we study --- to circuits with ancillas --- it provides
an alternative way to uncompute temporary values: by turning 
them into relative phases. In particular, if an ancilla initially 
in the state $\ket{a}$ is used to store a temporary value 
$\ket{a\oplus f(\x)}$, rather than uncompute this temporary value one can
simply push $f(\x)$ into the phase since 
$\phi_H(\ket{a\oplus f(\x)}\bra{a}) =  (-1)^{af(\x)}\ket{a}\bra{a}$. We
record this fact with the following statement, and illustrate its use
by implementing a $\Lambda_k(X)$ gate up to relative phase using dirty ancillas:
\begin{quote}
	\textit{A temporary value stored in a dirty ancilla can always be
	left uncomputed, at the expense of a relative phase.}
\end{quote}

\begin{construction}[Relative phase $\Lambda_k(X)$]\label{ex:tof4}
	Consider a $4$-control Toffoli gate constructed using $2$ dirty 
	ancillas, as in \cite[Lemma 7.2]{bbcdmsssw95}:
	\[
	\Qcircuit @C=.5em @R=0.5em @!R {
		& \qw & \ctrl{6} & \qw \\
		& \qw & \ctrl{5} & \qw \\
		& \qw & \ctrl{4} & \qw \\
		& \qw & \ctrl{3} & \qw \\
		& \qw & \qw & \qw \\
		& \qw & \qw & \qw \\
		& \qw & \targ & \qw
	}
	~~\mp{10.95}{=}~~
	\Qcircuit @C=.5em @R=0.5em @!R {
		& \qw & \qw & \qw & \ctrl{3} & \qw & \qw 
			& \qw & \ctrl{3} & \qw & \qw & \qw \\
		& \qw & \qw & \qw & \ctrl{3} & \qw & \qw 
			& \qw & \ctrl{3} & \qw & \qw & \qw \\
		& \qw & \qw & \ctrl{2} & \qw & \ctrl{2} 
			& \qw & \ctrl{2} & \qw & \ctrl{2} & \qw &\qw \\
		& \qw & \ctrl{2} & \qw & \qw & \qw & \ctrl{2} 
			& \qw & \qw & \qw & \qw & \qw \\
		& \qw & \qw & \ctrl{1} & \targ & \ctrl{1} 
			& \qw & \ctrl{1} & \targ & \ctrl{1} & \qw & \qw \\
		& \qw & \ctrl{1} & \targ & \qw & \targ & \ctrl{1} 
			& \targ & \qw & \targ & \qw & \qw \\
		& \qw & \targ & \qw & \qw & \qw & \targ & \qw 
			& \qw & \qw & \qw & \rstick{.} \qw
	}
	\]
	The role of the final three Toffoli gates is to uncompute the
        temporary values in the ancillas. In their absence, the
        circuit would map an input state
        $\ket{\vec{x}}\ket{a}\ket{b}\ket{y}$ to the state
        \[
        \ket{\vec{x}}\ket{a\oplus(x_1x_2)}
        \ket{b\oplus(x_1x_2x_3)}\ket{y\oplus(x_1x_2x_3x_4)}.
        \]
        Rather than perform the final Toffoli gates, we can
        conjugate the ancillas with Hadamard gates to return them to
        their initial state at the expense of a relative phase. We then get
	\[
	\scalebox{0.9}{
	\Qcircuit @C=.5em @R=0.1em @!R {
		& \qw & \qw & \qw & \qw & \ctrl{4} & \qw & \qw & \qw & \qw & \qw \\
		& \qw & \qw & \qw & \qw & \ctrl{3} & \qw & \qw & \qw & \qw & \qw \\
		& \qw & \qw & \qw & \ctrl{3} & \qw & \ctrl{3} & \qw & \qw & \qw & \qw \\
		& \qw & \qw & \ctrl{3} & \qw & \qw & \qw & \ctrl{3} & \qw & \qw & \qw \\
		& \qw & \gate{H} & \qw & \ctrl{1} & \targ & \ctrl{1} & \qw 
			& \gate{H} & \qw & \qw \\
		& \qw & \gate{H} & \ctrl{1} & \targ & \qw & \targ & \ctrl{1} 
			& \gate{H} & \qw & \qw \\
		& \qw & \qw & \targ & \qw & \qw & \qw & \targ & \qw & \qw & \qw
	}
	}
	~~\mp{11.5}{=}~~
	\scalebox{0.9}{
	\Qcircuit @C=.5em @R=0.5em @!R {
		& \qw & \multigate{5}{D} & \ctrl{6} & \qw \\
		& \qw & \ghost{D} & \ctrl{5} & \qw \\
		& \qw & \ghost{D} & \ctrl{4} & \qw \\
		& \qw & \ghost{D} & \ctrl{3} & \qw \\
		& \qw & \ghost{D} & \qw & \qw \\
		& \qw & \ghost{D} & \qw & \qw \\
		& \qw & \qw & \targ & \qw
	}
	}
	\]
        where the diagonal gate $D$ imparts a relative phase of
        $(-1)^{ax_1x_2 + bx_1x_2x_3}$.
\end{construction}

Care must be taken when trading (local) uncomputations for relative
phases, as the ancillas must remain in the same state when matched with
a (global) uncomputation later. Moreover, gates on the ancillas may
\emph{not} in general commute with such an implementation, 
while such gates do commute with an exact implementation.

One may wonder whether a similar trick can be played with circuits using 
clean ancillas. In this case, conjugation by $H$ uncomputes a temporary
value with \emph{no} relative phase:
\[
	\ket{f(\x)}\bra{0} \xleftrightarrow{\phi_H(\cdot)}
		\ket{0}\bra{0}
\]
Internally, the clean ancilla is replaced with a dirty one and the clean value is
effectively ``stored'' in the phase, to be retrieved later.
Specifically, since $H\ket{0} = \frac{1}{\sqrt{2}}\sum_{z}\ket{z}$, after the 
initial Hadamard gate the ancilla is placed in a dirty state. Adding $f(\x)$
to this ancilla results in the state 
$\frac{1}{\sqrt{2}}\sum_{z}\ket{z\oplus f(\x)}$.
Finally, since $\sum_{z}\ket{z\oplus f(x)} = \sum_{z'}\ket{z'}$ for any value 
of $f(x)$, the final Hadamard sends this state 
back to the clean state $\ket{0}$. We again summarize this fact and illustrate 
its use with a circuit construction.

\begin{quote}
	\textit{A temporary value stored in a dirty ancilla can always be
	left uncomputed without adding a relative phase by using a 
	clean ancilla.}
\end{quote}

\begin{construction}[Phase-based Bennett]
	Recall Bennett's compute-copy-uncompute scheme to reclaim
        temporary ancillas:
	\[
	\Qcircuit @C=.5em @R=0.6em {
		\lstick{x_1} & \qw & \qw & \qw & \multimeasure{2}{f} & \qw 
			& \multimeasure{2}{f} & \qw & \qw & \qw &\rstick{x_1} \qw \\
		& \vdots & & & & & & & & \vdots & \\
		\lstick{x_k} & \qw & \qw & \qw & \ghost{f} & \qw & \ghost{f} 
			& \qw & \qw & \qw &\rstick{x_k} \qw \\
		& & \lstick{0} & \qw & \targ \qwx[-1]& \ctrl{1} & \targ \qwx[-1] 
			& \qw & \rstick{0} \qw \\
		\lstick{y} & \qw & \qw & \qw & \qw & \targ & \qw & \qw & \qw & \qw 
			& \rstick{y\oplus f(x).} \qw
	}
	\]
	The compute-copy-uncompute construction can be reduced to a
        single compute by using the phase space to temporarily store the
	ancilla's (clean) value. Specifically, by applying a Hadamard gate to 
	the ancilla, the clean $\ket{0}$ state is swapped into the phase
	space. This clean phase can later be swapped back into the 
	state space with a Hadamard gate, uncomputing the intermediate state
	as below.
	\[
	\Qcircuit @C=.5em @R=0.6em {
		\lstick{x_1} & \qw & \qw & \qw & \qw & \qw & \multimeasure{2}{f} 
			& \qw & \qw & \qw & \qw & \rstick{x_1} \qw \\
		& \vdots & & & & & & & & & \vdots & \\
		\lstick{x_k} & \qw & \qw & \qw & \qw & \qw & \ghost{f} & \qw 
			& \qw & \qw & \qw & \rstick{x_k} \qw \\
		& & \lstick{0} & \qw & \gate{H} & \ctrl{1} & \targ \qwx[-1] & \ctrl{1} 
			& \gate{H} & \rstick{0} \qw \\
		\lstick{y} & \qw & \qw & \qw & \qw & \targ & \qw & \targ & \qw & \qw 
			& \qw & \rstick{y\oplus f(x).}\qw
	}
	\]
	An additional $\Lambda_1(X)$ gate is needed, as after swapping 
	the ancilla's initial (clean) state into the phase the ancilla exists in a \emph{dirty}
	state. This dirty value is then added to the target register twice, canceling out, 
	as in the constructions of \cite{bbcdmsssw95}.
\end{construction}

To use the phase-based Bennett trick above for cleanup, the compute circuit 
needs to be designed to work with dirty ancillas --- this can be prohibitive as 
implementations 
of oracles using clean ancillas are typically more time-efficient. However, in 
cases where the temporary values are expensive to compute it can sometimes 
be advantageous to adjust the inner computation to work with dirty ancillas, 
as the following construction shows.

\begin{construction}[Oracle multiplication]
	Consider the circuit below multiplying two Boolean 
	functions $f$ and $g$ using two clean ancillas:
	{\small
	\[
	\hspace{-2.5em}
	\Qcircuit @C=.5em @R=0.5em {
		\lstick{x_1} & \qw & \qw & \multimeasure{2}{f} & \multimeasure{2}{g} 
			& \qw & \multimeasure{2}{g^\dagger} & \multimeasure{2}{f^\dagger} 
			& \qw & \qw & \rstick{x_1} \qw \\
		& \vdots & & & & & & & & \vdots & \\
		\lstick{x_k}& \qw & \qw & \ghost{f} & \ghost{g} & \qw 
			& \ghost{g^\dagger} & \ghost{f^\dagger} & \qw & \qw 
			& \rstick{x_k} \qw \\
		& & \lstick{0} & \targ \qwx[-1] & \qw & \ctrl{2} & \qw & \targ \qwx[-1]  
			& \rstick{0} \qw \\
		& & \lstick{0} & \qw & \targ \qwx[-2]  & \ctrl{1} & \targ \qwx[-2]  
			& \qw & \rstick{0} \qw \\
		\lstick{y} & \qw & \qw & \qw & \qw & \targ & \qw & \qw & \qw 
			& \qw & \rstick{y\oplus f(x)g(x).} \qw
	}
	\]
	}
	We can eliminate one uncomputation of either $f$ or $g$ by swapping the 
	phase and state space of the corresponding ancilla, at the expense of one 
	extra Toffoli gate to deal with the now dirty ancilla.
	{\small
	\[
	\begin{array}{cc}
	\Qcircuit @C=.5em @R=0.4em {
		\lstick{x_1}& \qw & \qw & \multimeasure{2}{f} & \qw & \multimeasure{2}{g} 
			& \qw & \multimeasure{2}{f^\dagger} & \qw & \qw & \rstick{x_1} \qw \\
		& \vdots & & & & & & & & \vdots & \\
		\lstick{x_k}& \qw & \qw & \ghost{f} & \qw & \ghost{g} & \qw 
			& \ghost{f^\dagger} & \qw & \qw & \rstick{x_k} \qw \\
		& & \lstick{0} & \targ \qwx[-1] & \ctrl{2} & \qw & \ctrl{2} 
			& \targ \qwx[-1] & \rstick{0} \qw \\
		& & \lstick{0} & \gate{H} & \ctrl{1} & \targ \qwx[-2] & \ctrl{1} 
			& \gate{H} & \rstick{0} \qw \\
		\lstick{y}& \qw & \qw & \qw & \targ & \qw & \targ & \qw & \qw & \qw 
			& \rstick{y\oplus f(x)g(x).} \qw
	} 
	\qquad \qquad\qquad & \tag{\showeqno} \label{circ:oracle1}
	\end{array}
	\]
	} 
	Note that here, the two Toffoli gates can be replaced with
        appropriate $\Lambda_2(iZ)$ and Hadamard gates, requiring $8$
        $T$ gates rather than the $14$ that would be required if
        Toffoli gates were used.
\end{construction}

\begin{proposition}\label{circ:bmult}
	Let $f,g:\Z_2^k \to \Z_2$ be Boolean functions and suppose
	the oracles $U_f$ and $U_g$ can be implemented with $T$-count
	$\tau(U_f)$ and $\tau(U_g)$, respectively.
	With two additional clean ancillas, the
        oracle $U_{f\cdot g}$ can be implemented by a circuit of
        $T$-count $2\tau (U_f) + \tau (U_g) + 8$.
\end{proposition}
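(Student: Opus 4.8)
The plan is to exhibit the circuit \ref{circ:oracle1} from the preceding construction and verify that (i) it correctly implements $U_{f\cdot g}$ and (ii) its $T$-count is as claimed. The circuit uses two clean ancillas: the first holds the temporary value $f(\x)$, computed once by $U_f$ and uncomputed once by $U_f^\dagger$; the second holds $g(\x)$, but instead of being uncomputed by a matching $U_g^\dagger$, it is handled via the phase-based Bennett trick, being conjugated by Hadamard gates so that its clean $\ket{0}$ state is parked in the phase space during the computation. The net effect of the two Toffoli-style interactions with the target qubit is to XOR $f(\x)g(\x)$ into $y$ twice against the dirty ancilla value, which cancels, leaving exactly $\ket{y \oplus f(\x)g(\x)}$.

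First I would set up the correctness argument carefully. After the initial $U_f$ and the Hadamard on the second ancilla, the state on the relevant registers is $\ket{\x}\,\ket{f(\x)}\otimes \tfrac{1}{\sqrt 2}\sum_{z}\ket{z}\otimes\ket{y}$. The first Toffoli (controlled on ancilla $1$ and ancilla $2$, targeting $y$) followed by $U_g$ adding $g(\x)$ to ancilla $2$, followed by the second Toffoli, then $U_f^\dagger$, must be tracked. Since $f(\x)g(\x)$ is added to $y$ controlled on the current value of ancilla $2$, and ancilla $2$ passes through the values $z$ and then $z\oplus g(\x)$, the two XORs contribute $f(\x)\cdot z \oplus f(\x)\cdot(z\oplus g(\x)) = f(\x)g(\x)$ to $y$, independent of $z$. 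The summation over $z$ then factors back out, and the final Hadamard on ancilla $2$ returns it to $\ket{0}$ exactly (no phase, since the clean-ancilla version of the $\phi_H$ identity $\ket{f(\x)}\bra{0}\leftrightarrow\ket{0}\bra{0}$ applies). Meanwhile $U_f^\dagger$ restores ancilla $1$ to $\ket{0}$ and the $x_i$ to their inputs. This establishes the map $\ket{\x}\ket{00}\ket{y}\mapsto\ket{\x}\ket{00}\ket{y\oplus f(\x)g(\x)}$ with no residual phase.

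Next I would count $T$ gates. The circuit contains $U_f$ once, $U_f^\dagger$ once, and $U_g$ once, contributing $2\tau(U_f)+\tau(U_g)$; the Hadamard gates are Clifford and free. The two remaining doubly-controlled operations interacting with the target are the only other sources of $T$ gates. As noted in the construction, rather than use two genuine Toffoli gates (which would cost $14$ $T$ gates total), one replaces them with $\Lambda_2(iZ)$ gates sandwiched by Hadamard gates on the target: by the $\phi_H$ target/control swap identity, conjugating the target of a $\Lambda_2(X)$ by $H$ turns it into $\Lambda_2(Z)$, and its relative-phase variant $\Lambda_2(iZ)$ (the phase-space avatar of $\Lambda_2(iX)$) costs only $4$ $T$ gates each by Selinger's construction (\cref{fig:ix}). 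The erroneous $i$-phases introduced by the two $\Lambda_2(iZ)$ gates cancel because they are applied to the same pair of control lines with the relevant control value ($f(\x)$ on ancilla $1$, and $z$ resp.\ $z\oplus g(\x)$ on ancilla $2$) — one would check that the product of the two phases is $i^{f(\x)z}\cdot i^{f(\x)(z\oplus g(\x))} = i^{2f(\x)z + f(\x)g(\x) - 2f(\x)zg(\x)}$, and here I would verify the exponent collapses appropriately modulo the cancellation; in fact the cleaner route is to observe that the two gates form a matched compute/uncompute-style pair against a single extra ``control'' value that is only shifted by $g(\x)$, so the spurious phase is $i^{f(\x)g(\x)}$ from each, i.e.\ together $i^{2f(\x)g(\x)} = (-1)^{f(\x)g(\x)}$ — which I would absorb by instead using $\Lambda_2(iZ)$ and $\Lambda_2(iZ)^\dagger$ so the phases cancel exactly. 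Total: $2\tau(U_f)+\tau(U_g)+8$.

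The main obstacle is the phase bookkeeping in the second-to-last step: making sure that replacing the two Toffoli gates with $4$-$T$ relative-phase $\Lambda_2(iZ)$-type gates leaves \emph{no} residual phase, given that the two gates see ancilla-$2$ values differing by $g(\x)$ rather than identical values. I expect the resolution is to use one $\Lambda_2(iZ)$ and one $\Lambda_2(iZ^\dagger)$ (adjoint), together with the observation that the intervening $U_g$ only permutes basis states of ancilla $2$ and commutes with the phase accounting up to the controlled shift, so the $i$ and $-i$ phases attached to the two gates cancel regardless of $z$ and $g(\x)$; I would spell this out with a short explicit computation on basis states rather than appeal to quasi-commutation alone. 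Everything else — correctness of the XOR-cancellation and the Clifford-counting — is routine.
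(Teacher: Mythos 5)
Your overall route is the paper's: the proposition is proved by exhibiting circuit \ref{circ:oracle1} and replacing its two Toffoli gates by $4$-$T$-gate relative-phase doubly-controlled gates. Your correctness analysis of the Toffoli version (the cancellation $f(\x)z\oplus f(\x)(z\oplus g(\x)) = f(\x)g(\x)$, the clean return of ancilla $2$, and the count $2\tau(U_f)+\tau(U_g)$ for the oracle calls) is fine. The gap is in the step that justifies the ``$+8$''.

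You attach the spurious $i^{ab}$ phase of Selinger's construction to the two \emph{controls} of each Toffoli, i.e.\ to ancilla $1$ and ancilla $2$, and then assert that using $\Lambda_2(iZ)$ followed by $\Lambda_2(iZ)^\dagger$ makes the phases ``cancel regardless of $z$ and $g(\x)$''. They do not: the two spurious phases are $i^{f(\x)z}$ and $(-i)^{f(\x)(z\oplus g(\x))}$, whose product is $i^{f(\x)(z-(z\oplus g(\x)))}=(-1)^{f(\x)g(\x)z}\,(-i)^{f(\x)g(\x)}$. The factor $(-1)^{f(\x)g(\x)z}$ depends on the dirty label $z$ of ancilla $2$; under the final Hadamard it acts as a bit flip and leaves ancilla $2$ in the state $\ket{f(\x)g(\x)}$ rather than $\ket{0}$, with a residual relative phase $i^{f(\x)g(\x)}$ besides. (Your intermediate claim that ``the spurious phase is $i^{f(\x)g(\x)}$ from each'' is also incorrect: each gate's spurious phase is a function of its own control values at that instant, not of $g(\x)$.) The missing idea is that, since $\Lambda_2(Z)$ is symmetric in its three qubits, the $4$-$T$-gate implementation lets you choose \emph{which pair} of qubits carries the spurious $i^{ab}$ phase, and it must be placed on the pair whose joint value is unchanged between the two gates: ancilla $1$ (which holds $f(\x)$ throughout) and the target (whose Hadamard-basis label $w$ is untouched by the intervening $U_g$). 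With that choice the spurious phases are $i^{f(\x)w}$ and $(-i)^{f(\x)w}$ and cancel identically, the remaining factors multiply to $(-1)^{f(\x)(z+(z\oplus g(\x)))w}=(-1)^{f(\x)g(\x)w}$, and ancilla $2$ returns to $\ket{0}$ with no residual phase. That is what ``appropriate $\Lambda_2(iZ)$ and Hadamard gates'' must mean; with this correction the count $2\tau(U_f)+\tau(U_g)+8$ goes through.
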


We now apply these observations to construct implementations 
of multiply-controlled Toffoli gates with a single dirty ancilla, both 
up to relative phase and implemented exactly, 
using fewer $T$ gates than previously known.

\begin{construction}[$\Lambda_k(X)$ with a single dirty ancilla]\label{ex:cxbulletdirty}
	Recall \cite[Lemma 7.3]{bbcdmsssw95} that
	a $k$-controlled Toffoli gate 
	can be decomposed as follows, using a single dirty ancilla:
	\[
	\Qcircuit @C=.5em @R=0.4em @!R {
		& \qw & \ctrl{5} & \qw & \qw \\
		& \vdots & & \vdots & \\
		& \qw & \ctrl{3} & \qw & \qw \\
		& \qw & \ctrl{2} & \qw & \qw \\
		& \qw & \qw & \qw & \qw \\
		& \qw & \targ & \qw & \qw
	}
	~~\mp{9}{=}~~
	\Qcircuit @C=.5em @R=0.4em @!R {
		& \qw & \qw & \ctrl{3} & \qw 
			& \ctrl{3} & \qw & \qw \\
		& \vdots & & & & & \vdots & \\
		& \qw & \qw & \ctrl{2} 
			& \qw & \ctrl{2} & \qw & \qw \\
		& \qw & \ctrl{2} & \qw & \ctrl{2} 
			& \qw & \qw & \qw \\
		& \qw & \ctrl{1} & \targ & \ctrl{1} 
			& \targ & \qw & \qw \\
		& \qw & \targ & \qw & \targ & \qw 
			& \qw & \rstick{.} \qw
	}
	\]
	The construction can be applied recursively using the
	$k$th bit as a dirty ancilla. However, this results in
	an exponential gate count, since the dirty ancilla needs to be cleaned
	in each recursive instantiation.

	We can recover a linear gate count with a simple recursive
        construction by trading the
        temporary value held in the dirty ancilla for a phase in each
        step. We then obtain the following equality, where $D$ and
        $D'$ are some particular diagonal gates:
	{\small
	\[
	\begin{array}{cc}
	\Qcircuit @C=.5em @R=0.1em @!R {
		& \qw & \multigate{4}{D} & \ctrl{5} & \qw & \qw \\
		& \vdots & & & \vdots & \\
		& \qw & \ghost{D} & \ctrl{3} & \qw & \qw \\
		& \qw & \ghost{D} & \ctrl{2} & \qw & \qw \\
		\push{\rule{0em}{1.2em}} & \qw & \ghost{D} & \qw & \qw & \qw \\
		& \qw & \qw & \targ & \qw & \qw
	}
	~~\mp{9.5}{=}~~
	\Qcircuit @C=.5em @R=0.05em @!R {
		& \qw & \qw & \qw & \multigate{3}{D'} & \ctrl{3} & \qw 
			& \qw & \qw & \qw \\
		& \vdots & & & & & & & \vdots & \\
		& \qw & \qw & \qw & \ghost{D'} & \ctrl{2} 
			& \qw & \qw & \qw & \qw \\
		& \qw & \qw & \ctrl{2} & \ghost{D'} & \qw & \ctrl{2} 
			& \qw & \qw & \qw \\
		& \qw & \gate{H} & \ctrl{1} & \qw & \targ & \ctrl{1} 
			& \gate{H} & \qw & \qw \\
		& \qw & \qw & \targ & \qw & \qw & \targ & \qw 
			& \qw & \rstick{.} \qw
	}
	& \tag{\showeqno}\label{circ:lambdaxbulletdirty}
	\end{array}
	\]
	}
	The precise form of $D$ is given in \cref{app:proof}.

	The above construction reduces the $T$-count for a
        $k$-controlled Toffoli gate with a single dirty ancilla to
        $8(k-2) + 4$, using $\Lambda_2(iX)$ gates to implement
        Toffolis, at the expense of a relative phase on the controls
        and ancilla. By comparison, the best-known $T$-count for
        $\Lambda_k(X)$ with a single dirty ancilla is $16(k-1)$ using
        \cite[Lemma 7.3]{bbcdmsssw95} together with Maslov's
        circuit \cite{m16} for the inner $\Lambda_{k/2}(X)$ gates.
	
	A $k$-controlled Toffoli gate can also be implemented on the
        nose by performing a final overall cleanup of the dirty ancilla, as
        below:
	\[
	\begin{array}{cc}
	\Qcircuit @C=.5em @R=0.4em @!R {
		& \qw & \ctrl{5} & \qw & \qw \\
		& \vdots & & \vdots & \\
		\push{\rule{0em}{.5em}} & \qw & \ctrl{3} & \qw & \qw \\
		& \qw & \ctrl{2} & \qw & \qw \\
		& \qw & \qw & \qw & \qw \\
		& \qw & \targ & \qw & \qw
	}
	~~\mp{9}{=}~~
	\Qcircuit @C=.5em @R=0.32em @!R {
		& \qw & \qw & \multigate{3}{D} & \ctrl{3} & \qw 
			& \ctrl{3} & \multigate{3}{D^\dagger} & \qw  & \qw \\
		& \vdots & & & & & & & \vdots & \\
		& \qw & \qw & \ghost{D} & \ctrl{2} 
			& \qw & \ctrl{2} & \ghost{D^\dagger} & \qw & \qw \\
		& \qw & \ctrl{2} & \ghost{D} & \qw & \ctrl{2} 
			& \qw & \ghost{D^\dagger} & \qw & \qw \\
		& \qw & \ctrl{1} & \qw & \targ & \ctrl{1} 
			& \targ & \qw & \qw & \qw \\
		& \qw & \targ & \qw & \qw & \targ & \qw 
			& \qw & \qw & \qw
	}
	& \tag{\showeqno}\label{circ:lambdaxdirty}
	\end{array}
	\]
	For $k\geq 4$, this gives a $\Lambda_k(X)$ gate with a single
        dirty ancilla and $T$-count $16(k-2)$. This reduces the
        $T$-count of the best-known construction by $16$.
\end{construction}

\begin{proposition}\label{prop:dirtyancilla}
	Let $k\in\Z^{\geq 4}$. With a single dirty ancilla, the
        $\Lambda_k(X)$ gate can be implemented by a circuit of
        $T$-count $16(k-2)$.
\end{proposition}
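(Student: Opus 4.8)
The plan is to prove the statement by exhibiting and verifying the circuit identity~\eqref{circ:lambdaxdirty} of \cref{ex:cxbulletdirty}, splitting the work into a correctness argument and a $T$-count. The scaffold is the single-dirty-ancilla decomposition of $\Lambda_k(X)$ from \cite[Lemma 7.3]{bbcdmsssw95}: splitting the controls and using the dirty ancilla $a$ as a scratch bit expresses $\Lambda_k(X)$ as a product of four multiply-controlled $X$ gates, two of which act on $a$ and serve only to write and then erase a partial product of the controls there. The first point I would use is that these two $a$-gates need not be exact: replacing each $\Lambda_{k-1}(X)$ that touches $a$ by the cheaper relative-phase version is harmless, because between them only the target qubit changes, so the compute/uncompute pair still returns the ancilla to its input value and leaves at most a diagonal residual on the controls. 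The second point is the quasi-commutation emphasised in \cref{sec:history}, $\Lambda_k(X)\,D = D'\,\Lambda_k(X)$ with $D'$ diagonal, together with the observation of \cref{ex:tof4}: such a diagonal correction slides through every surrounding control, so the outer conjugating gates $D,D^\dagger$ of~\eqref{circ:lambdaxdirty} meet and cancel the leftover phase, leaving $\Lambda_k(X)$ on the nose. Here that correction is even Clifford, as the only phases introduced are the $i$-powers of the $\Lambda_2(iX)$ gates used to realize the Toffolis; its explicit form is the one recorded in \cref{app:proof}, and checking that it really is the compensating phase is part of the verification.

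For the $T$-count I would first set up the recursive \emph{relative-phase} construction~\eqref{circ:lambdaxbulletdirty}, in which the erasure of $a$ is instead performed by Hadamard conjugation, using $\phi_H(\ket{a\oplus f(\x)}\bra{a}) = (-1)^{af(\x)}\ket{a}\bra{a}$, so that only one copy of the inner gate remains and that copy is again a relative-phase $\Lambda_{k-1}(X)$ reusing a peeled-off control as its new dirty scratch bit. Each level then costs exactly two Toffolis on the ancilla --- each a Selinger $\Lambda_2(iX)$ from \cref{fig:ix} with four $T$ gates --- together with a Clifford ($S$-type) diagonal; the recursion bottoms out at $\Lambda_2(iX)$ itself, which is where the hypothesis $k\ge 4$ enters, since for smaller $k$ there is no room for a scratch bit at every level. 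Summing over the $\Theta(k)$ levels gives the relative-phase count $8(k-2)+4$. Feeding two such order-$(k-1)$ gates --- a compute and an uncompute of $a$ --- into~\eqref{circ:lambdaxdirty}, together with the two Barenco Toffolis and the now-Clifford $D,D^\dagger$, yields $2\bigl(8(k-3)+4\bigr)+8 = 16(k-2)$, exactly $16$ below the previous best $16(k-1)$ obtained from \cite[Lemma 7.3]{bbcdmsssw95} with Maslov's circuit \cite{m16}.

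The step I expect to be the real obstacle is the correctness bookkeeping, not the arithmetic. As the paper warns immediately after \cref{ex:tof4}, trading a local uncomputation for a phase is legitimate only if the dirty ancilla is returned to \emph{exactly} its original state, so I would track the accumulated diagonal carefully through the recursion and confirm (i) that it depends only on the controls and the restored ancilla, so the outer $D^\dagger$ annihilates it rather than leaving a residual relative phase, and (ii) that the Barenco identity still deposits $\prod_i x_i$ on the target when its component gates carry phases and act on a dirty rather than a clean ancilla. I would also pin down the base case, checking the additive constants at $k=4$ directly against Maslov's relative-phase Toffoli-$4$ of \cref{fig:maslov} and confirming that $k=4$ is indeed the smallest value for which the scratch-bit accounting closes at every level of the recursion.
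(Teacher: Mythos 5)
Your proposal is correct and follows the paper's own route: it reproduces the construction behind~\eqref{circ:lambdaxdirty} --- the Barenco single-dirty-ancilla split into two $\Lambda_2(X)$ gates on $(x_k,a,y)$ and two $\Lambda_{k-1}(X)$ gates targeting $a$, with the latter pair replaced by the relative-phase gates of~\eqref{circ:lambdaxbulletdirty} at $8(k-3)+4$ $T$ gates each --- and the arithmetic $2\bigl(8(k-3)+4\bigr)+2\cdot 4=16(k-2)$ is exactly the paper's count. The one inaccuracy is your side remark that the compensating diagonal is ``even Clifford'': by \cref{prop:a1} the residual of a $\Lambda_{k-1}(X^\bullet)$ is a relative-phase multiply-controlled $Z$ supported on the controls and the internal scratch qubit, which is not a Clifford gate once $k\geq 4$. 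This does not damage your argument, since correctness and the $T$-count rely only on the two residuals being mutually inverse diagonals supported on qubits whose basis states never change, so that they commute past the intervening gates and annihilate without ever being physically implemented.
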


\begin{proposition}
        Let $k\in\Z^{\geq 2}$. With a single dirty ancilla, the
        $\Lambda_k(X)$ gate can be implemented up to a phase in the
        controls and the ancilla by a circuit of $T$-count $8(k-2)+4$.
\end{proposition}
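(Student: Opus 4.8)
The plan is to establish the result via the recursive construction of \cref{ex:cxbulletdirty} --- that is, via \cref{circ:lambdaxbulletdirty} --- and to verify both its correctness and its $T$-count by induction on $k$.

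First I would handle the base case $k=2$: here one simply invokes Selinger's $\Lambda_2(iX)$ gate of \cref{fig:ix}, which has $T$-count $4 = 8(2-2)+4$ and realizes $\Lambda_2(X)$ up to the phase $i^{x_1 x_2}$, a phase in the controls (and, vacuously, in the unused ancilla). For the inductive step I would start from the single-dirty-ancilla decomposition of \cite[Lemma 7.3]{bbcdmsssw95}, which alternates two copies of $U = \Lambda_{k-1}(X)$ (writing the partial product $x_1\cdots x_{k-1}$ into the dirty ancilla $g$) with two copies of the Toffoli $V = \Lambda_2(X)$ on $\{x_k, g\}\to t$. The key move is to eliminate the second, uncomputing copy of $U$ using the dirty-ancilla-to-phase principle: conjugating $g$ by Hadamard gates and inserting a suitable diagonal $D'$ yields the five-block circuit $H_g\,V\,U\,D'\,V\,H_g$ of \cref{circ:lambdaxbulletdirty}. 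A short state-vector calculation --- push the flip $z\mapsto z\oplus x_1\cdots x_{k-1}$ of $U$ past the second $V$, reindex the Hadamard sum, and check that the surviving factor is independent of the summation index up to the bit $g_0\cdot x_1\cdots x_{k-1}$ --- shows this equals $\Lambda_k(X)$ up to a diagonal phase involving only $x_1,\dots,x_k$ and $g$, and in particular not the target $t$.

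Next I would implement the inner $U = \Lambda_{k-1}(X)$ recursively, up to relative phase, using an idle wire as its dirty ancilla; the inductive relative phase of that call is a function of $x_1,\dots,x_{k-1}$ together with that idle wire, and the recursion is arranged so that it depends only on the outer controls (and $g$). It is therefore diagonal on those wires, commutes through the circuit, and is absorbed into the accumulated diagonal $D$, which consequently acts only on the controls and the ancilla. Replacing each Toffoli $V$ by $\Lambda_2(iX)$ (4 $T$ gates, \cref{fig:ix}) then gives the stated $T$-count: the base case costs $4$ $T$, each recursive level contributes exactly two $\Lambda_2(iX)$ gates, i.e.\ $8$ $T$, and the Hadamards and the accumulated diagonal $D$ are Clifford, so the total is $4 + 8(k-2) = 8(k-2)+4$.

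The main obstacle, and the step requiring care, is this last substitution. Replacing the exact Toffolis $V$ by $\Lambda_2(iX)$ introduces powers of $i$ that depend on the superposed ancilla register living inside the $H_g\cdots H_g$ sandwich, and a naive substitution leaves a residual state change on $g$ rather than a pure phase. Using the quasi-commutation of generalized permutations --- choosing the orientations of the two $iX$ gates and, where needed, an additional Clifford phase correction --- one must verify that the dirty ancilla is restored exactly and that what remains is a genuine diagonal on the controls and ancilla. Tracking this accumulated diagonal $D$ explicitly through the recursion is the remaining bookkeeping, which is carried out in \cref{app:proof}.
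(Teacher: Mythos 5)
Your proposal follows the paper's own route essentially verbatim: the base case is Selinger's $\Lambda_2(iX)$, the inductive step is exactly \cref{circ:lambdaxbulletdirty}, the state-vector check that $H_g\,V\,U\,D'\,V\,H_g$ realizes $\Lambda_k(X)$ up to the diagonal $(-1)^{g\cdot x_1\cdots x_{k-1}}$ is correct, and the $T$-count arithmetic is right. You have also correctly isolated the crux --- whether each Toffoli $V$ can be done with $4$ $T$ gates while restoring the dirty ancilla \emph{exactly} --- but the mechanism you sketch for resolving it does not work, and since you then defer the verification to \cref{app:proof}, the argument has a gap at precisely the step you identify as the main obstacle. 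Concretely: if $V$ is taken to be $\Lambda_2(iX)$ with controls $x_k$ and $g$, its phase $i^{x_k z}$ depends on the summation index $z$ of the ancilla inside the Hadamard sandwich, and a two-line computation shows that the ancilla is returned in the state $\ket{g\oplus x_k(1\oplus x_1\cdots x_{k-1})}$ if both occurrences are $\Lambda_2(iX)$, and in the state $\ket{g\oplus x_1\cdots x_k}$ if the second is $\Lambda_2(iX)^\dagger$. Either way this is a basis-state error on $g$, not a phase, so no choice of ``orientation'' of the $iX$ pair on those wires, and no ``additional Clifford phase correction,'' can repair it.

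The missing idea is that the $4$-$T$ relative-phase Toffolis must carry their phase on wires that are \emph{not} in superposition --- on $x_k$ and the target rather than on $x_k$ and the ancilla. The cleanest way to justify this is the paper's phase-space derivation of the same circuit (the ``Relative phase $\Lambda_k(Z)$'' construction): after conjugating by $H$ on the target $y$, the two Toffolis become the pair $\Lambda_2(X)(x_k,y;\,g)$, a genuine compute/uncompute pair wrapped around the diagonal gate $\Lambda_{k-1}(Z)$ on $(x_1,\dots,x_{k-1},g)$, and both of their control wires $x_k$ and $y$ are untouched basis states in between. The compute/uncompute principle of \cref{app:rphase} then licenses replacing that pair by $\Lambda_2(iX)$ and $\Lambda_2(iX)^\dagger$, whose phases $i^{\pm x_k y}$ cancel exactly, giving the $8$ $T$ gates per recursion level. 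With that substitution in place your induction and count go through; without it, the claimed per-level cost is unjustified.
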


\begin{remark}
	In general it is preferable to use an alternate form of
        the $\Lambda_k(X)$ where the ancilla is cleaned by pushing the
        temporary value into the phase, followed by a phase cleanup.
	In this case, if the $\Lambda_k(X)$ is later uncomputed the
	phase cleanup can be cancelled by automated means.
	The circuit is shown below:
	{\small
	\[
	\Qcircuit @C=.5em @R=0.1em @!R {
		& \qw & \ctrl{5} & \qw & \qw \\
		& \vdots & & \vdots & \\
		\push{\rule{0em}{1.2em}} & \qw & \ctrl{3} & \qw & \qw \\
		& \qw & \ctrl{2} & \qw & \qw \\
		& \qw & \qw & \qw & \qw \\
		& \qw & \targ & \qw & \qw
	}
	~~\mp{9.5}{=}~~
	\Qcircuit @C=.5em @R=0.05em @!R {
		& \qw & \qw & \qw & \multigate{3}{D} & \ctrl{3} & \qw 
			& \qw & \ctrl{3} & \multigate{3}{D^\dagger} & \qw  & \qw \\
		& \vdots & & & & & &  & & & \vdots & \\
		& \qw & \qw & \qw & \ghost{D} & \ctrl{2} 
			& \qw & \qw & \ctrl{2} & \ghost{D^\dagger} & \qw & \qw \\
		& \qw & \qw & \ctrl{2} & \ghost{D} & \qw & \ctrl{2} 
			& \qw & \qw & \ghost{D^\dagger} & \qw & \qw \\
		& \qw & \gate{H} & \ctrl{1} & \qw & \targ & \ctrl{1} 
			& \gate{H} & \ctrl{0} & \qw & \qw & \qw \\
		& \qw & \qw & \targ & \qw & \qw & \targ & \qw 
			& \qw & \qw & \qw & \rstick{.} \qw
	}
	\]
	}
\end{remark}

\section{Ancilla-free circuits}\label{sec:ancillafree}

We now turn our attention to even more space-efficient constructions
and in particular to circuits which do not use ancillas. We cover
design techniques that can be used to implement generalized
permutations. We then use these techniques to give $T$-count efficient
relative phase implementations of multiply-controlled Toffoli gates.
We also identify Boolean functions for which the $T$-count of the 
best-known construction \cite{mscrd19} can be matched or beaten
without the use of ancillas and measurement (but at the cost of a
relative phase).

Recall that $\mathbb{D}[\omega]$ is the ring of dyadic fractions
extended by $\omega=e^{i\pi/4}$. It was shown in \cite{gs13} that, for
$n\geq 4$, every $n$-qubit unitary with determinant 1 and entries in
$\mathbb{D}[\omega]$ can be exactly represented by an ancilla-free
Clifford+$T$ circuit. Since the determinant of a permutation is either
$1$ or $-1$, depending on whether the permutation is even or odd, it
follows that exactly the even permutations can be represented by an
ancilla-free Clifford+$T$ circuit. If we allow relative phase
implementations, however, any permutation can be implemented.

To find generalized permutations with efficient ancilla-free implementations,
it can be helpful to \emph{push all computation to the phase space}.
In particular, given a relative-phase implementation (suppressing the
constant $\ket{\x}$ register)
\[
	\widetilde{U_f} = e^{ig(\x)}\ket{y\oplus f(\x)}\bra{y},
\]
conjugating by $H$ pushes all computation to the phase:
\[
	H\widetilde{U_f}H = e^{ig(\x)}(-1)^{yf(\x)}\ket{y}\bra{y}.
\]
By instead synthesizing the \emph{phase oracle}, we can more easily find
relative phases $e^{ig(\x)}$ which reduce the overall $T$-count, as the
next constructions show.

\begin{construction}[Relative phase $\Lambda_k(Z)$]
	The circuit from \cref{ex:cxbulletdirty} can be equivalently derived 
	by synthesizing the phase-space version of $\Lambda_k(Z)$ up to
	relative phase.
	In particular, to perform a multiply-controlled $Z$
	gate up to relative phase, we want to compute some phase
	\[
		(-1)^{y\cdot x_1\cdots x_k}e^{ig(\x,a)}
	\]
	where $y$ is the target and $a$ is an ancillary bit.

	We can build a circuit to do so by first multiplying (in $\mathbb{F}_2$) 
	$y$ by $x_k$ and adding this to an ancilla $a$, 
	then applying a phase of $(-1)^{x_1\cdots x_{k-1}(a + yx_k)}$. In
	particular we have the equality below, for some diagonal gate $D$:
	\[
	\Qcircuit @C=.5em @R=0.2em {
		\push{\rule{0em}{1.2em}} & \qw & \multigate{4}{D} & \ctrl{5} & \qw & \qw \\
		& \vdots & & & \vdots & \\
		\push{\rule{0em}{1.2em}} & \qw & \ghost{D} & \ctrl{3} & \qw & \qw \\
		\push{\rule{0em}{1.2em}} & \qw & \ghost{D} & \ctrl{2} & \qw & \qw \\
		\push{\rule{0em}{1.2em}} & \qw & \ghost{D} & \qw & \qw & \qw \\
		\push{\rule{0em}{1.2em}} & \qw & \qw & \ctrl{0} & \qw & \qw
	}
	~~\mp{9.7}{=}~~
	\Qcircuit @C=.5em @R=0.2em {
		\push{\rule{0em}{1.2em}} & \qw & \qw & \multigate{3}{D'} & \ctrl{3} & \qw 
			& \qw & \qw & \qw \\
		& \vdots & & & & & & \vdots & \\
		\push{\rule{0em}{1.2em}} & \qw & \qw & \ghost{D'} & \ctrl{2} 
			& \qw & \qw & \qw & \qw \\
		\push{\rule{0em}{1.2em}} & \qw & \ctrl{2} & \ghost{D'} & \qw & \ctrl{2} 
			& \qw & \qw & \qw \\
		\push{\rule{0em}{1.2em}} & \qw & \targ & \qw & \ctrl{0} & \targ 
			& \qw & \qw & \qw \\
		\push{\rule{0em}{1.2em}} & \qw & \ctrl{-1} & \qw & \qw & \ctrl{-1} 
			& \qw & \qw & \qw
	}
	\]
	Conjugating by $H$ on the target gives the $\Lambda_k(X)$ circuit from
	\cref{ex:cxbulletdirty}, up to swapping controls and targets by commuting 
	the Hadamards.
\end{construction}

\begin{construction}[Selinger's $\Lambda_2(iX)$]
	The doubly-controlled $iX$ gate in \cref{fig:ix} computes the
        following transformation on computational basis states:
	\[
		\ket{x_1}\ket{x_2}\ket{y} \mapsto i^{x_1x_2}\ket{x_1}\ket{x_2}\ket{y\oplus (x_1x_2)}.
	\]
	To see how $\Lambda_2(iX)$ arises naturally as an efficient relative phase 
	implementation of the Toffoli gate over Clifford+$T$, it is helpful to consider
	the doubly-controlled $Z$ gate:
	\[
		\ket{x_1}\ket{x_2}\ket{y} \mapsto (-1)^{x_1x_2y}\ket{x_1}\ket{x_2}\ket{y}.
	\]
	Since $(-1)^{x_1x_2y}=\omega^{4x_1x_2y}$, we can use the equality 
	\cite{s13}
	\begin{align*}
		4x_1x_2y =\; &x_1 + x_2 + y - (x_1\oplus x_2) - (x_1\oplus y) \\
			&- (x_2 \oplus y) + (x_1\oplus x_2 \oplus y)
	\end{align*}
	to implement the doubly-controlled $Z$ gate over Clifford+$T$
        by computing each of the terms in the above sum (using
        $\Lambda_1(X)$ gates) and applying a $T$ or a $T^\dagger$
        gate. If we only apply
        the $4$ rotations which depend on $y$, and noting that $2x_1x_2 =
        x_1 + x_2 - (x_1\oplus x_2),$ the resulting phase term is
	\[
		4x_1x_2y - 2x_1x_2 = y - (y\oplus x_1) - (y\oplus x_2) + (y\oplus x_1\oplus x_2)
	\]
	Conjugating by $H$ on the target then sends the output 
	$(-1)^{x_1x_2y}i^{-x_1x_2}\ket{x_1}\ket{x_2}\ket{y}$ to
	$i^{-x_1x_2}\ket{x_1}\ket{x_2}\ket{y\oplus (x_1x_2)}$,
	implementing the desired transformation up to a phase of 
	$i^{-x_1x_2}$.
\end{construction}

In general, for any classical function $f:\Z_2^n\rightarrow \Z_2$, 
an oracle for $f$ can be implemented up to relative phase by taking the
Fourier transform \cite{aam17} of $yf(x)$ and then dropping 
all terms that do not involve $y$. Explicitly, if 
\[
	(-1)^{yf(x)} = \omega^{g(x) + yh(x)},
\]
then it suffices to implement the phase oracle $\omega^{yh(x)}$.
We summarize this in the following statement:
\begin{quote}
	\textit{A relative-phase implementation of $U_f$ can be obtained by
	taking the Fourier transform of $yf(\x)$ and truncating all terms
	which do not depend on $y$.}
\end{quote}

Using this idea, we can devise a method to multiple two Boolean functions 
$f,g$ up to relative phase without ancillas, given oracles for $f$ and $g$. In 
particular, since
\[\scalebox{0.9}{$
	(-1)^{y\cdot f(x)g(x)}i^{-f(x)g(x)} 
		= \omega^{y - y\oplus f(x) - y \oplus g(x) + 
		y \oplus f(x) \oplus g(x)},
$}\]
we can alternately add $f(x)$ and $g(x)$ into the target $y$ and apply the
appropriate $T$ or $T^\dagger$ gate, as below:
\[
	\begin{array}{cc}
	\hspace{-.7em}
	\scalebox{0.7}{
	\Qcircuit @C=.5em @R=0.5em {
		& \qw & \multigate{2}{D} & \multimeasure{2}{f\cdot g} & \qw & \qw \\
		& \vdots & & & \vdots & \\
		& \qw & \ghost{D} & \ghost{f\cdot g} & \qw & \qw\\
		\push{\rule{0em}{1.4em}}& \qw & \qw & \targ \qwx[-1] & \qw & \qw
	}
	}
	~\mp{4.2}{=}
	\scalebox{0.7}{
	\Qcircuit @C=.25em @R=0.5em {
		& \qw & \qw & \multimeasure{2}{f} & \qw & \multimeasure{2}{g} & \qw & \multimeasure{2}{f} & \qw 
			& \multimeasure{2}{g} & \qw & \qw \\
		& \vdots & & & & & & & & & \vdots & \\
		& \qw & \qw & \ghost{f} & \qw & \ghost{g} & \qw & \ghost{f} & \qw 
			& \ghost{g} & \qw & \qw\\
		& \gate{H} & \gate{T} & \targ \qwx[-1] & \gate{T^\dagger} & \targ \qwx[-1] 
			& \gate{T} & \targ \qwx[-1] & \gate{T^\dagger} & \targ \qwx[-1] & \gate{H} 
			& \qw 
	}
	}
	& 
	\hspace{-.7em}\tag{\showeqno}\label{circ:oracle2}
	\end{array}
\]

The above \emph{matched} multiplication construction generalizes the
multiply-controlled $iX$ implementation of Giles and Selinger \cite{gs13}.

We arrive at a slightly different form of ancilla-free oracle multiplication
by noting that the final computation of $U_g$ serves only to uncompute the
temporary value $g(x)$. As noted in \cref{sec:ancilla}, since the target
is conjugated with Hadamard gates, this temporary value can instead be swapped
into the phase. The result is the \emph{unmatched} oracle multiplication circuit
below, which generalizes Maslov's relative phase $4$-qubit Toffoli \cite{m16}. Note
that unlike matched multiplication, unmatched multiplication results in a 
target-dependent phase.
\[
	\begin{array}{cc}
	\scalebox{0.7}{
	\Qcircuit @C=.5em @R=0.5em {
		& \qw & \multigate{3}{D} & \multimeasure{2}{f\cdot g} & \qw & \qw \\
		& \vdots & & & \vdots & \\
		& \qw & \ghost{D} & \ghost{f\cdot g} & \qw & \qw\\
		\push{\rule{0em}{1.4em}} & \qw  & \ghost{D} & \targ \qwx[-1] & \qw & \qw
	}
	}
	~\mp{4.2}{=}
	\scalebox{0.7}{
	\Qcircuit @C=.3em @R=0.5em {
		& \qw & \qw & \multimeasure{2}{f} & \qw & \multimeasure{2}{g} & \qw & \multimeasure{2}{f} & \qw & \qw & \qw \\
		& \vdots & & & & & & & & \vdots & \\
		& \qw & \qw & \ghost{f} & \qw & \ghost{g} & \qw &\ghost{f} & \qw & \qw & \qw\\
		& \gate{H} & \gate{T} & \targ \qwx[-1] & \gate{T^\dagger} & \targ \qwx[-1] 
			& \gate{T} & \targ \qwx[-1] & \gate{T^\dagger} & \gate{H} & \qw 
	}
	}
	& \tag{\showeqno}\label{circ:oracle3}
	\end{array}
\]

\begin{proposition}\label{circ:bmultancfreec}
  	Let $f,g:\Z_2^k \to \Z_2$ be Boolean functions and suppose the oracles
	$U_f$ and $U_g$ can be implemented with $T$-count $\tau(U_f)$ and
	$\tau(U_g)$, respectively.
	With no additional ancillas, the oracle
        $U_{f\cdot g}$ can be implemented up to a phase in the
        controls by a circuit of $T$-count $2\tau(U_f) + 2\tau(U_g) +
        4$.
\end{proposition}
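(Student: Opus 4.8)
The plan is to prove the statement by exhibiting and verifying the ancilla-free circuit \eqref{circ:oracle2}. Following the phase-space strategy described above, I would first reduce the problem to synthesizing a diagonal operator: conjugating the target of $U_{f\cdot g}$ by Hadamard gates turns it into $\ket{\x}\ket{y}\mapsto(-1)^{y\,f(\x)g(\x)}\ket{\x}\ket{y}$, and since any factor depending on $\x$ alone commutes through the outer Hadamard conjugation, it suffices to realize this phase up to such a control-only factor. This is the truncation principle already stated: implement the Fourier transform of $y\,f(\x)g(\x)$ keeping only the $y$-dependent terms.

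The key algebraic input is the integer identity
\[
4abc = a+b+c-(a\oplus b)-(a\oplus c)-(b\oplus c)+(a\oplus b\oplus c),
\]
valid for $a,b,c\in\B$ viewed as integers, together with $2ab=a+b-(a\oplus b)$. Taking $a=f(\x)$, $b=g(\x)$, $c=y$, raising $\omega=e^{i\pi/4}$ to these powers, and discarding the three $y$-independent summands gives
\[
(-1)^{y\,f(\x)g(\x)}\,i^{-f(\x)g(\x)} = \omega^{\,y-(y\oplus f(\x))-(y\oplus g(\x))+(y\oplus f(\x)\oplus g(\x))}.
\]
So the target must acquire a $T$ or $T^\dagger$ phase while it successively holds the four values $y$, $y\oplus f(\x)$, $y\oplus g(\x)$, and $y\oplus f(\x)\oplus g(\x)$, after which it must be returned to $y$. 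I would then assemble the circuit exactly as in \eqref{circ:oracle2}: Hadamard on the target, $T$, oracle $U_f$ (target $\to y\oplus f(\x)$), $T^\dagger$, oracle $U_g$ (target $\to y\oplus f(\x)\oplus g(\x)$), $T$, oracle $U_f$ (target $\to y\oplus g(\x)$), $T^\dagger$, oracle $U_g$ (target $\to y$), Hadamard. Since the oracles are garbage-free rounded boxes they are involutions, so the two copies of $U_f$ and the two copies of $U_g$ cancel on every register; the net action on $\ket{\x}\ket{y}$ is trivial, the accumulated phase is precisely the right-hand side above, and conjugating by the outer Hadamards yields $i^{-f(\x)g(\x)}\,U_{f\cdot g}$, i.e.\ $U_{f\cdot g}$ up to the control-only phase $i^{-f(\x)g(\x)}$. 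Counting $T$ gates: four oracle invocations contribute $2\tau(U_f)+2\tau(U_g)$, the four single-qubit $T^{\pm1}$ gates contribute $4$, and the Hadamards are free, for a total of $2\tau(U_f)+2\tau(U_g)+4$.

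The main obstacle is careful bookkeeping rather than a conceptual hurdle: one must check that the two copies of each oracle genuinely cancel on every wire they touch, which uses that the oracles are garbage-free (self-inverse); if instead one is handed an oracle leaving temporary values, the fix is to copy $f(\x)$ onto the target with a CNOT and uncompute with $U_f^\dagger$, changing the additive constants but not the stated scaling since $\tau(U_f^\dagger)=\tau(U_f)$. The other point to verify --- that $i^{-f(\x)g(\x)}$ survives the final Hadamard conjugation unchanged and hence qualifies as ``a phase in the controls'' in the sense of \cref{sec:history} --- is immediate, because this factor is diagonal and independent of the target $y$.
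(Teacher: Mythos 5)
Your proposal is correct and follows essentially the same route as the paper: the paper establishes this proposition via the matched multiplication circuit \eqref{circ:oracle2}, justified by the identity $(-1)^{y f(\x)g(\x)}i^{-f(\x)g(\x)} = \omega^{\,y-(y\oplus f(\x))-(y\oplus g(\x))+(y\oplus f(\x)\oplus g(\x))}$, which is exactly the Fourier-truncation argument and circuit you assemble. Your additional derivation of that identity from $4abc$ and $2ab$, and the explicit check that the residual $i^{-f(\x)g(\x)}$ is target-independent, only make explicit what the paper leaves implicit.
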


\begin{proposition}\label{circ:bmultancfreect}
  	Let $f,g:\Z_2^k \to \Z_2$ be Boolean functions and suppose the oracles
	$U_f$ and $U_g$ can be implemented with $T$-count $\tau(U_f)$ and
	$\tau(U_g)$, respectively. With no additional ancillas, the oracle
        $U_{f\cdot g}$ can be implemented up to a phase in the
        controls and the target by a circuit of $T$-count $2\tau(U_f)
        + \tau(U_g) + 4$.
\end{proposition}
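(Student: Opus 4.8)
The plan is to obtain circuit~\ref{circ:oracle3} from the matched construction of \cref{circ:bmultancfreec} by using the Hadamard control/target exchange of \cref{sec:ancilla} to absorb the deleted final copy of $U_g$ into the phase. First I would set $W$ to be the operator implemented by the seven gates $T, U_f, T^\dagger, U_g, T, U_f, T^\dagger$ lying strictly between the two target Hadamards of circuit~\ref{circ:oracle2} with its trailing $U_g$ removed, so that (with Hadamards on the target) circuit~\ref{circ:oracle2} realises $H U_g W H$ while circuit~\ref{circ:oracle3} realises $H W H$. Since $U_g^2 = \mathrm{id}$ and $H^2 = \mathrm{id}$, one has the identity
\[
  H W H = (H U_g H)(H U_g W H),
\]
so circuit~\ref{circ:oracle3} is equivalent to circuit~\ref{circ:oracle2} followed by the target-phase gate $H U_g H$. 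Because $U_g$ acts on the target register as $X$ controlled on $g(\x)$, conjugation by $H$ converts it to the diagonal map $H U_g H:\ket{\x}\ket{y}\mapsto (-1)^{y g(\x)}\ket{\x}\ket{y}$ --- precisely the swap of a target for a $Z$-``control'' recorded at the start of \cref{sec:ancilla}. Note this manipulation uses only $U_g^2=\mathrm{id}$ and $H^2=\mathrm{id}$; it does \emph{not} require commuting $U_g$ past the interleaved $T$ gates.

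It then remains to compose. By \cref{circ:bmultancfreec}, circuit~\ref{circ:oracle2} sends $\ket{\x}\ket{y}$ to $i^{-f(\x)g(\x)}\ket{\x}\ket{y\oplus f(\x)g(\x)}$, a relative-phase oracle for $U_{f\cdot g}$ with phase only in the controls. Applying $H U_g H$ afterwards multiplies by $(-1)^{(y\oplus f(\x)g(\x))\,g(\x)}$, and since $(f(\x)g(\x))\,g(\x) = f(\x)g(\x)$ this exponent simplifies modulo $2$ to $y g(\x) + f(\x)g(\x)$. Hence circuit~\ref{circ:oracle3} implements
\[
  \ket{\x}\ket{y}\;\longmapsto\; i^{-f(\x)g(\x)}(-1)^{y g(\x) + f(\x)g(\x)}\,\ket{\x}\ket{y\oplus f(\x)g(\x)},
\]
a generalised permutation. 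Its phase $e^{ig(\x,y)} = i^{-f(\x)g(\x)}(-1)^{y g(\x)+f(\x)g(\x)}$ depends on the target through the factor $(-1)^{y g(\x)}$, so this is a relative-phase implementation of $U_{f\cdot g}$ with phase in the controls \emph{and} the target, as claimed. One can equally verify this action directly by propagating a computational basis state through circuit~\ref{circ:oracle3} and evaluating the phase accumulated between the Hadamards with the Fourier identity stated above the circuit; the two routes agree.

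For the cost I would simply read it off the circuit: the two target Hadamards are Clifford, the four single-qubit $T/T^\dagger$ rotations on the target contribute $4$, and the black-box oracles contribute $2\tau(U_f)$ (two copies of $U_f$) together with $\tau(U_g)$ (one copy of $U_g$), for a total $T$-count of $2\tau(U_f)+\tau(U_g)+4$; no wires beyond those of $U_f$, $U_g$, and the target appear, so no ancillas are added. I expect the only point requiring genuine care to be the modular bookkeeping in the composition step --- checking that the target-dependent term $(-1)^{y g(\x)}$ really survives (so the ``controls and target'' claim is not vacuous) while confirming the full prefactor is still a well-defined element of $\mathbb{T}$; everything else follows formally from \cref{circ:bmultancfreec} and the Hadamard control/target exchange.
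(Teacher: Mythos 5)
Your proposal is correct and follows essentially the same route as the paper: circuit~\ref{circ:oracle3} is obtained from the matched circuit~\ref{circ:oracle2} by deleting the final $U_g$ and trading the leftover target value $g(\x)$ for the phase $(-1)^{yg(\x)}$ via the Hadamard control/target exchange, with the cost read off the circuit. Your identity $HWH=(HU_gH)(HU_gWH)$ and the explicit phase $i^{-f(\x)g(\x)}(-1)^{yg(\x)+f(\x)g(\x)}$ just make precise what the paper states informally, and both agree with the direct Fourier-expansion check.
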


The multiplication constructions above can be instantiated in various ways to 
design relative phase circuits without ancillas. We now cover some of these 
applications.

\begin{construction}[Efficient high-degree oracles]\label{ex:efficient}
By recursively instantiating $U_g$ in the unmatched multiplication, we can 
quickly (in the $T$-count)
grow the degree of a Boolean function by setting $f(x)=x$, multiplying in
one control at each iteration.
\[
\begin{array}{cc}
\hspace{-1em}
\scalebox{0.65}{
	\Qcircuit @C=.5em @R=.65em @!R {
		& \qw & \multigate{4}{D} & \multimeasure{3}{f_k} & \qw & \qw \\
		& \vdots & & & \vdots & \\
		& \qw & \ghost{D} & \ghost{f_k} & \qw & \qw  \\
		& \qw & \ghost{D} & \ghost{f_k} & \qw & \qw \\
		& \qw & \ghost{D} & \targ \qwx[-1] & \qw & \qw
	}
}
~\mp{6}{=}
\scalebox{0.65}{
	\Qcircuit @C=.5em @R=0.1em @!R {
		& \qw & \qw & \qw & \qw & \multigate{4}{D} & \multimeasure{2}{f_{k-1}} & \qw 
			& \qw & \qw & \qw & \qw \\
		& \vdots & & & & & & & & & \vdots \\
		& \qw & \qw & \qw & \qw & \ghost{D} & \ghost{f_{k-1}} & \qw & \qw 
			& \qw & \qw & \qw \\
		& \qw & \qw & \ctrl{1} & \qw & \ghost{D} & \qw & \qw & \ctrl{1} & \qw 
			& \qw & \qw \\
		& \gate{H} & \gate{T} & \targ & \gate{T^\dagger} & \ghost{D} & \targ\qwx[-2] 
			& \gate{T} & \targ & \gate{T^\dagger} & \gate{H} & \qw
	}
}
& \hspace{-1.2em} \tag{\showeqno}\label{circ:fk}
\end{array}
\]
The function $f_k$ is defined by the recurrence
\begin{align*}
	f_0(x) &= 0 \\  f_1(x) &= x_1 \\
	f_k(x) &= x_k\cdot f_{k-1}(x) + f_{k-2}(x)
\end{align*}
The contribution of $f_{k-2}$ is due to the relative phase of 
$(-1)^{y\cdot f_{k-1}}$
from the unmatched multiplication, which eventually gets swapped 
\emph{back} into the state. For instance, for $k=4$ we have
\[
	f_4 (x_1,x_2,x_3,x_4) = x_1x_2x_3x_4 \oplus x_1x_4 \oplus x_3x_4.
\]
Different recurrences and initial conditions can be obtained by tuning the
construction with additional Clifford gates, or by switching to matched 
multiplication. In particular, the relative phase $4$ qubit Toffoli in 
\cref{fig:maslov} is obtained by using matched multiplication for 
$f_2$ at no additional $T$-cost. The result is the recurrence
\begin{align*}
	f_0(x) &= 0 \qquad  &&f_1(x)= x_1 \\ f_2(x) &= x_1x_2 \qquad 
		&&f_3(x) = x_1x_2x_3  \\
	f_k(x) &= x_k\cdot f_{k-1} + f_{k-2}
\end{align*}
\end{construction}

\begin{proposition}
	There exists a maximal degree Boolean function $f:\Z_2^k \to
        \Z_2$ such that, without ancillas, the oracle $U_f$ can be
        implemented up to a phase in the controls and the target by a
        ciruit of $T$-count $4(k-1)$.
\end{proposition}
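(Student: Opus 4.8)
The plan is to take for $f$ the maximal-degree function $f_k:\Z_2^k\to\Z_2$ defined in \cref{ex:efficient} by the recurrence $f_0(x)=0$, $f_1(x)=x_1$, and $f_k(x)=x_k\cdot f_{k-1}(x)+f_{k-2}(x)$ over $\F$, and to verify two things about it: that $\deg f_k=k$, so that it is of maximal degree among functions of $k$ variables, and that the circuit in \cref{circ:fk} is a relative-phase implementation of $U_{f_k}$ --- with the phase depending on the controls and the target --- of $T$-count exactly $4(k-1)$.

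For the $T$-count I would induct on $k$, using the unmatched multiplication of \cref{circ:bmultancfreect} at each step with the first factor taken to be the single variable $x_k$ (whose oracle is a $\cnot$, hence $\tau=0$) and the second factor taken to be $f_{k-1}$. This gives $\tau(U_{f_k})=2\cdot 0+\tau(U_{f_{k-1}})+4=\tau(U_{f_{k-1}})+4$, the four added $T$-type gates being the $T,T^\dagger,T,T^\dagger$ visible in \cref{circ:fk}; with the base case $\tau(U_{f_1})=0$ this yields $\tau(U_{f_k})=4(k-1)$. Along the way I must check that the relative phase remains of the allowed shape: the diagonal correction $D$ produced at step $k$ sits to the left of the inner $U_{f_{k-1}}$ in \cref{circ:fk}, and since diagonal gates commute through controls it can be pulled all the way to the front, so the composite is $U_{f_k}$ preceded by a single diagonal gate, i.e.\ a phase in the controls and the target. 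I should also record that the extra summand $f_{k-2}$ in the recurrence is exactly the $(-1)^{y f_{k-1}(x)}$ part of the unmatched relative phase being swapped back into the state by the outer Hadamard gates, so that the function actually computed is $f_k$ and not $x_k f_{k-1}$.

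For the degree I would show by induction that the monomial $x_1\cdots x_k$ occurs in the $\F$-polynomial of $f_k$. The cases $f_1=x_1$ and $f_2=x_1x_2$ are immediate, and for $k\ge 3$ the degree-$k$ homogeneous part of $x_k\cdot f_{k-1}$ is $x_k$ times the degree-$(k-1)$ part of $f_{k-1}$, which by the inductive hypothesis contains $x_1\cdots x_{k-1}$; since $\deg f_{k-2}=k-2<k$ nothing cancels this, so $\deg f_k=k$. As $f_k$ is a Boolean function of $k$ variables this is the maximal possible degree, and the proposition follows by combining this with the $T$-count bound above.

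The step I expect to be the main obstacle is not a calculation but the phase bookkeeping: one must be certain that the diagonal factors accumulated over the $k-1$ recursive levels can always be amalgamated into one front-loaded diagonal gate, so that the implementation is genuinely ``up to a phase in the controls and target'' and nothing more, and that the non-unitary-looking move of trading the temporary target-dependent phase for a state --- the move that forces the $f_{k-2}$ term --- is tracked correctly, so that \cref{circ:fk} really does implement $U_{f_k}$. Everything else is a routine induction.
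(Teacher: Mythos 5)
Your proposal is correct and is essentially the paper's own argument: the proposition follows from Construction~\ref{ex:efficient}, which builds exactly the function $f_k$ via the recurrence $f_k = x_k f_{k-1} + f_{k-2}$ using circuit~\eqref{circ:fk}, with the $T$-count recursion $\tau(U_{f_k}) = \tau(U_{f_{k-1}}) + 4$ and the $f_{k-2}$ term arising from the target-dependent phase $(-1)^{y f_{k-2}(\x)}$ of the inner relative-phase oracle being swapped back into the state. The one point to phrase carefully is the phase bookkeeping you flag: the accumulated diagonal does not literally commute through the CNOT targets and Hadamards, but the overall operator is a generalized permutation, and every generalized permutation factors as $PD = D'P$ (as noted in the paper's background section), which is all that ``up to a phase in the controls and the target'' requires.
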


\begin{proposition}
	There exists a maximal degree Boolean function $f:\Z_2^k \to
        \Z_2$ such that, with a single dirty ancilla, the oracle $U_f$
        can be implemented by a ciruit of $T$-count $8(k-1)$.
\end{proposition}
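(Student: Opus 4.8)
The plan is to bootstrap from the relative-phase circuit underlying the preceding proposition. Let $f=f_k$ be the maximal-degree Boolean function of \cref{ex:efficient}, so that the circuit of \cref{circ:fk} (suppressing the unchanged $\ket{\x}$ register) is an ancilla-free implementation $\widetilde{U_f}$ of $f$ with $T$-count $4(k-1)$, acting on basis states as $\ket{\x}\ket{z}\mapsto e^{ig(\x,z)}\ket{\x}\ket{z\oplus f(\x)}$ for some phase $e^{ig(\x,z)}$ depending on the controls and the target qubit. Since $\widetilde{U_f}$ is a generalized permutation whose $2\times 2$ target-blocks are ``swap-or-identity'' dressed by the diagonal $e^{ig(\x,\cdot)}$, reversing the circuit yields $\widetilde{U_f}^\dagger:\ket{\x}\ket{z}\mapsto e^{-ig(\x,\,z\oplus f(\x))}\ket{\x}\ket{z\oplus f(\x)}$ with the same $T$-count.

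The construction itself is a dirty-ancilla, relative-phase variant of the Bennett trick (equivalently, the phase-free-oracle sandwich $D^\dagger(\cdot)D$ of \cref{sec:ancilla}, with $\widetilde{U_f}$ playing the role of both $D$ and $U_f$). With one dirty ancilla in an arbitrary state $\ket{b}$, apply, in order: a $\Lambda_1(X)$ from the ancilla onto the real target $y$; then $\widetilde{U_f}$ on the $k$ controls with the ancilla as its target; then another $\Lambda_1(X)$ from the ancilla onto $y$; then $\widetilde{U_f}^\dagger$ on the controls and ancilla. Tracking $\ket{\x}\ket{b}\ket{y}$ through these four steps, the target successively becomes $y\oplus b$, then (after $\widetilde{U_f}$ changes the ancilla to $b\oplus f(\x)$) $y\oplus b\oplus(b\oplus f(\x))=y\oplus f(\x)$: the dirty value $b$ is added twice and cancels, exactly as in \cite[Lemma~7.3]{bbcdmsssw95}. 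Meanwhile the final $\widetilde{U_f}^\dagger$, acting when the ancilla holds $b\oplus f(\x)$, returns it to $\ket{b}$ and contributes the phase $e^{-ig(\x,\,(b\oplus f(\x))\oplus f(\x))}=e^{-ig(\x,b)}$, cancelling the $e^{ig(\x,b)}$ from the forward pass. Hence the composite acts as $\ket{\x}\ket{b}\ket{y}\mapsto\ket{\x}\ket{b}\ket{y\oplus f(\x)}$ for every $b$, i.e.\ it is exactly $U_f$. The $T$-count is immediate: the two $\Lambda_1(X)$ gates are Clifford, and $\widetilde{U_f}$ and $\widetilde{U_f}^\dagger$ each contribute $4(k-1)$ $T$ gates, for a total of $8(k-1)$.

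I expect the only delicate point to be the phase bookkeeping on the reverse pass. One must verify carefully that $\widetilde{U_f}^\dagger$ picks up the phase $e^{-ig(\x,\,z\oplus f(\x))}$ on input $\ket{z}$ (from the explicit $2\times 2$-block description of the generalized permutation $\widetilde{U_f}$), and that the intermediate $\Lambda_1(X)$ touches only the target register, so that the ancilla is in state $\ket{b\oplus f(\x)}$ precisely when $\widetilde{U_f}^\dagger$ is applied — this is what makes the two phases exact inverses rather than merely equal up to the unknown $b$. A secondary remark, as noted after \cref{ex:tof4}, is that gates subsequently placed on the dirty ancilla need not commute through this construction the way they would through an exact oracle; since the ancilla is acted on only by the four blocks above, this causes no difficulty here.
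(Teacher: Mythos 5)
Your proposal is correct and follows essentially the same route as the paper: take the ancilla-free relative-phase oracle $\widetilde{U_f}$ for $f_k$ from \cref{circ:fk} with $T$-count $4(k-1)$ and wrap it in the dirty-ancilla compute--copy--uncompute sandwich from the Background, so that the two $\Lambda_1(X)$ gates are free and the matched $\widetilde{U_f}/\widetilde{U_f}^\dagger$ pair costs $8(k-1)$ $T$ gates. Your explicit basis-state phase bookkeeping is just an unpacked version of the paper's quasi-commutation argument ($\widetilde{U_f}=U_fD=D'U_f$ with diagonal gates commuting through the intermediate control), and it correctly handles the target-dependent phase since the intermediate $\Lambda_1(X)$ leaves the ancilla's basis state unchanged.
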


\begin{remark}
The construction in \cref{ex:efficient} is notable in that matches
or outperforms the best-known \cite{mscrd19} $T$-count
for any degree $k$ function, \emph{without ancillas, measurement, or
classical control} but at the expense of a relative phase. Specifically, the
above construction uses $4(\deg(f_k) - 1)$ $T$ gates, where 
$\deg(f_k) -1 \leq c_{\land}(f_k)$, the multiplicative complexity of $f_k$. 

While functions derivable with this construction are not
likely to be of practical use for circuit designers, they may be useful in
automated circuit synthesis such as LUT-based logic synthesis \cite{srwm19},
where arbitrary Boolean functions on a small number of bits are used to
synthesize larger oracles. For instance, $f_4$ and $f_5$ --- corresponding
to the spectral classes \texttt{\#0888} and \texttt{\#a8808000} 
\cite{msrd20-2}, respectively --- reduce the best-known, space-minimal
constructions from $T$-count $77$ and $490$ to $12$ and $16$ up to phase,
or $24$ and $32$ exactly \cite{msrd20-2}. We leave it as an area of future
work to identify more distinct spectral classes efficiently implementable using
variations of this construction.
\end{remark}

We end the section by giving novel relative phase implementations of the
$k$-control Toffoli gate, our best construction of which halves the $T$-count 
of the best-known ancilla-free circuit. To simplify our presentation, we 
introduce shorthand for two types of relative phase gates: $U^\bullet$ and 
$U^\star$, corresponding to whether the relative phase is on the controls
and ancillas, or controls, ancillas, and target, respectively. We use boxes
on dirty ancillas to denote relative phases. Hence,
\[
	\scalebox{0.8}{
	\Qcircuit @C=.5em @R=.9em {
		& \qw & \ctrl{6} & \qw & \qw \\
		& \vdots & & \vdots & \\
		& \qw & \ctrl{4} & \qw & \qw \\
		\push{\rule{0em}{.4em}}& \qw & \ctrlb{3} & \qw & \qw \\
		& \vdots & & \vdots & \\
		& \qw & \ctrlb{1} & \qw & \qw \\
		& \qw & \gate{U^\bullet} & \qw & \qw
	}
	}
	~\mp{8}{=}
	\scalebox{0.8}{
	\Qcircuit @C=.5em @R=.4em {
		& \qw & \multigate{5}{D} &  \ctrl{6} & \qw & \qw \\
		\push{\rule{0em}{.3em}}& \vdots & & & \vdots & \\
		& \qw & \ghost{D} & \ctrl{4} & \qw & \qw \\
		& \qw & \ghost{D} & \qw & \qw & \qw \\
		\push{\rule{0em}{.3em}}& \vdots & & & \vdots & \\
		& \qw & \ghost{D} & \qw & \qw & \qw \\
		& \qw & \qw & \gate{U} & \qw & \qw
	}
	}
	\qquad
	\scalebox{0.8}{
	\Qcircuit @C=.5em @R=.9em {
		& \qw & \ctrl{6} & \qw & \qw \\
		& \vdots & & \vdots & \\
		& \qw & \ctrl{4} & \qw & \qw \\
		\push{\rule{0em}{.4em}}& \qw & \ctrlb{3} & \qw & \qw \\
		& \vdots & & \vdots & \\
		& \qw & \ctrlb{1} & \qw & \qw \\
		& \qw & \gate{U^\star} & \qw & \qw
	}
	}
	~\mp{8}{=}
	\scalebox{0.8}{
	\Qcircuit @C=.5em @R=.4em {
		& \qw & \multigate{6}{D} &  \ctrl{6} & \qw & \qw \\
		\push{\rule{0em}{.3em}}& \vdots & & & \vdots & \\
		& \qw & \ghost{D} & \ctrl{4} & \qw & \qw \\
		& \qw & \ghost{D} & \qw & \qw & \qw \\
		\push{\rule{0em}{.3em}}& \vdots & & & \vdots & \\
		& \qw & \ghost{D} & \qw & \qw & \qw \\
		& \qw & \ghost{D} & \gate{U} & \qw & \qw
	}
	}
\]
where the gates $D$ are some unspecified diagonal gates. 
We denote the inverse of
$U^\bullet$ or $U^\star$ by ${}^\bullet U$ or ${}^\star U$, 
respectively.

\begin{construction}[Ancilla-free Toffoli gates]\label{ex:toffolis}
We first note that we can use the relative phase Toffoli of 
\cref{ex:cxbulletdirty} together with matched multiplication, which cancels
each of the relative phases, to get an improved (in the $T$-count) 
implementation of the $\Lambda_k(iX)$ gate ($T$-count $16(k-3) + 4$ when $k \geq 4$):
\[
\begin{array}{lr}
\hspace{-1em}
\scalebox{0.68}{
	\Qcircuit @C=.5em @R=0.9em {
		& \qw & \ctrl{6} & \qw & \qw \\
		 & & & & \\
		& \ustick{\vdots}\qw & \ctrl{4} 
			& \ustick{\vdots}\qw & \qw \\
		& \qw & \ctrl{3} & \qw & \qw \\
		 & & & & \\
	 	& \ustick{\vdots}\qw & \ctrl{1} 
			& \ustick{\vdots}\qw & \qw \\
		& \qw & \gate{iX} & \qw & \qw \\
	}
}
~\mp{7.3}{=}
\scalebox{0.68}{
	\Qcircuit @C=.5em @R=0.9em {
		& \qw & \qw & \ctrlb{3} & \qw & \ctrl{6} & \qw & \ctrlb{3} & \qw 
			& \ctrl{6} & \qw & \qw \\
		& \vdots & & & & & & & & & \vdots & \\
		& \qw & \qw & \ctrlb{1} & \qw & \ctrl{4} & \qw & \ctrlb{1} & \qw 
			& \ctrl{4} &\qw & \qw \\
		& \qw & \qw & \ctrl{3} & \qw & \ctrlb{3} & \qw & \ctrl{3} 
			& \qw & \ctrlb{3} & \qw & \qw \\
		& \vdots & & & & & & & & & \vdots & \\
		& \qw & \qw & \ctrl{1} & \qw & \ctrlb{1} & \qw & \ctrl{1} 
			& \qw & \ctrlb{1} & \qw & \qw \\
		& \gate{H} & \gate{T^\dagger} & \gate{X^\bullet} 
			& \gate{T} & \gate{X^\bullet} & \gate{T^\dagger} 
			& \gate{{}^\bullet X} & \gate{T} & \gate{{}^\bullet X} & \gate{H} & \qw
	}
}
& \hspace{-1em}\tag{\showeqno}\label{circ:cix}
\end{array}
\]

Next we leverage the un-matched multiplication, placing all but one control 
on the un-matched Toffoli gate and using the single dirty ancilla relative phase
Toffoli from \cref{ex:cxbulletdirty} to perform it up to phase. The result is an 
ancilla-free $k$-controlled Toffoli gate using $8(k-2)$ $T$ gates, roughly half 
that of the best-known ancilla-free $k$-controlled 
Toffoli gate, at the expense of a target-dependent phase:
\[
	\begin{array}{lr}
	\scalebox{0.77}{
	\Qcircuit @C=.5em @R=0.9em {
		& \qw & \ctrl{4} & \qw & \qw \\
		& & & & \\
		& \ustick{\vdots}\qw & \ctrl{2} & \ustick{\vdots}\qw & \qw \\
		& \qw & \ctrl{1} & \qw & \qw \\
		& \qw & \gate{X^\star} & \qw & \qw
	}
	}
	~\mp{6.5}{=}
	\scalebox{0.77}{
	\Qcircuit @C=.5em @R=0.9em {
		& \qw & \qw & \qw & \qw & \ctrl{4} & \qw & \qw 
			& \qw & \qw & \qw \\
		& \vdots & & & & & & & & \vdots & \\
		& \qw & \qw & \qw & \qw & \ctrl{2} & \qw & \qw 
			& \qw & \qw & \qw \\
		& \qw & \qw & \ctrl{1} & \qw & \ctrlb{1} & \qw & \ctrl{1} 
			& \qw & \qw & \qw \\
		& \gate{H} & \gate{T} & \targ & \gate{T^\dagger} 
			& \gate{X^\bullet} & \gate{T} & \targ & \gate{T^\dagger} 
			& \gate{H} & \qw
	}
	}
	& \tag{\showeqno}\label{circ:cxstar}
	\end{array}
\]

Our final ancilla-free construction, below, uses the previous 
circuit to perform a $k$-controlled Toffoli up to a phase 
\emph{only on the controls}. The construction uses matched multiplication 
to eliminate the target-dependent phases produced by the intermediate 
Toffoli gates. The result is an additional $16$ $T$ gates of savings compared 
to the $k$-controlled $iX$ gate above ($T$-count $16(k-4) + 4$ when $k \geq 5$):
\[
\begin{array}{cc}
\hspace{-1em}
\scalebox{0.68}{
	\Qcircuit @C=.5em @R=0.9em {
		& \qw & \ctrl{6} & \qw & \qw \\
		& & & & \\
		& \ustick{\vdots}\qw & \ctrl{4} & \ustick{\vdots}\qw & \qw \\
		& \qw & \ctrl{3} & \qw & \qw \\
		& & & \\
		& \ustick{\vdots}\qw & \ctrl{1} & \ustick{\vdots}\qw & \qw \\
		& \qw & \gate{X^\bullet} & \qw & \qw \\
	}
}
~\mp{7.3}{=}
\scalebox{0.68}{
	\Qcircuit @C=.5em @R=0.9em {
		& \qw & \qw & \qw & \qw & \ctrl{6} & \qw & \qw & \qw & \ctrl{6} 
			& \qw & \qw \\
		& \vdots & & & & & & & & & \vdots & \\
		& \qw & \qw & \qw & \qw & \ctrl{4} & \qw 
			& \qw & \qw & \ctrl{4} & \qw & \qw \\
		& \qw & \qw & \ctrl{3} & \qw & \qw & \qw & \ctrl{3} & \qw & \qw 
			& \qw & \qw \\
		& \vdots & & & & & & & & & \vdots & \\
		& \qw & \qw & \ctrl{1} & \qw & \qw 
			& \qw & \ctrl{1} & \qw & \qw & \qw & \qw \\
		& \gate{H} & \gate{T^\dagger} & \gate{X^\star} & \gate{T} & \gate{X^\star} 
			& \gate{T^\dagger} & \gate{{}^\star X} & \gate{T} 
			& \gate{{}^\star X} & \gate{H} & \qw
	}
}
& \hspace{-1em}\tag{\showeqno}\label{circ:cxbullet}
\end{array}
\]
\end{construction}

\begin{proposition}
	Let $k\in\Z^{\geq 5}$. Without ancillas, the $\Lambda_k(X)$
        gate can be implemented up to a phase in the controls by a
        circuit of $T$-count $16(k-4) + 4$.
\end{proposition}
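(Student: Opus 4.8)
The plan is to read the circuit off \cref{ex:toffolis} --- it is \cref{circ:cxbullet} --- and verify that its $T$-count is $16(k-4)+4$, the construction itself already being displayed there. Since that circuit uses no ancilla, a relative phase ``on the controls and the ancilla'' is just a phase on the controls, so \cref{circ:cxbullet} does implement $\Lambda_k(X)$ up to a phase in the controls, and the work is entirely in the gate count. The key input is the cost of the ancilla-free $\Lambda_m(X^\star)$ gate of \cref{circ:cxstar}: for $m\ge 3$ this gate --- a $\Lambda_m(X)$ up to a phase in the controls and the target --- has $T$-count $8(m-2)$. (That bound is itself obtained by peeling off one control via the identity $4aby = y-(y\oplus a)-(y\oplus b)+(y\oplus a\oplus b)$ and implementing the inner $\Lambda_{m-1}(X)$ up to phase with the single-dirty-ancilla circuit of \cref{ex:cxbulletdirty}, one of the remaining controls serving as the dirty ancilla; the extra cost is the four $T$ gates of the Selinger skeleton.)

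Next I would recall that \cref{circ:cxbullet} is an instance of matched multiplication (\cref{circ:bmultancfreec}, circuit \cref{circ:oracle2}): partition the $k$ controls into two groups of sizes $a$ and $b$ with $a+b=k$, let $f$ and $g$ be their conjunctions, and interleave $\widetilde{U_f}=\Lambda_a(X^\star)$ and $\widetilde{U_g}=\Lambda_b(X^\star)$ --- each appearing once forward and once inverted --- with four $T$ gates, the target being conjugated by Hadamards. Because each inner oracle is matched with its inverse, the target-dependent parts of the two relative phases cancel while the controls-only parts survive; as no qubit besides the controls and target is touched, the net effect is $\Lambda_k(X)$ up to a phase in the controls. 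Counting then gives $2\cdot 8(a-2)+2\cdot 8(b-2)+4 = 16(a+b)-60 = 16k-60 = 16(k-4)+4$ $T$ gates, independently of the chosen split.

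The main obstacle is the boundary of validity. The bound $8(m-2)$ for $\Lambda_m(X^\star)$ in \cref{circ:cxstar} requires $m\ge 3$, so a split into two groups each of size at least $3$ is available exactly when $k\ge 6$ (take $a=3$, $b=k-3$), and for those $k$ the count above is immediate. For $k=5$ the split is forced to $3+2$ and the two-control factor falls outside the formula; here one implements $\widetilde{U_g}$ directly by Selinger's $\Lambda_2(iX)$ of \cref{fig:ix}, and the delicate point is to check that its four $T$ gates and two Hadamards can be identified with the skeleton of the matched multiplication, so that the two-control factor is absorbed and the total is still $16(k-4)+4$. More generally, the step that needs care throughout is confirming that the relative phases produced by the inner (dirty-ancilla) Toffoli gates are cancelled --- not doubled --- by the matched structure; the explicit diagonal corrections verifying this are recorded in \cref{app:proof}.
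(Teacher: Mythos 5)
For $k\geq 6$ your argument is exactly the paper's: \cref{circ:cxbullet} is the matched multiplication \cref{circ:oracle2} instantiated with two ancilla-free $X^\star$ gates from \cref{circ:cxstar}, the matching of each factor with its inverse cancels the target-dependent phases while leaving only a phase on the controls, and the count $2\cdot 8(a-2)+2\cdot 8(b-2)+4=16k-60=16(k-4)+4$ is correct whenever both groups have at least three controls, i.e. $k\geq 6$. You were also right to single out $k=5$ as the only point of difficulty; the paper offers no separate argument there.

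Your proposed repair at $k=5$ does not work, however, and this is a genuine gap. With the forced $3+2$ split, the skeleton must twice add $g=x_4x_5$ to the target in between its four $T$ gates; each such addition is a relative-phase Toffoli, which is non-Clifford (a Clifford generalized permutation acts affinely on basis states, and $y\mapsto y\oplus x_4x_5$ is not affine) and costs $4$ $T$ gates, and its internal $T$ gates act on the Hadamard-conjugated target whereas the skeleton's act on the computational-basis target, so they cannot be identified or merged. The $3+2$ split therefore yields $2\cdot 8+2\cdot 4+4=28$ $T$ gates, not $20$. A sanity check makes the $k=5$ claim implausible outright: a $\Lambda_5(X)$ up to a phase in the controls is in particular a $\Lambda_5(X^\star)$, so a $20$-$T$ circuit would beat the paper's own $8(5-2)=24$ bound for the strictly weaker gate, which nothing in the construction supports (for all $k\geq 6$ the two bounds are consistently ordered the other way). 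The honest range delivered by \cref{circ:cxbullet} is $k\geq 6$; the stated $k\geq 5$ appears to be an off-by-one in the source, so the defect lies in the proposition's boundary case rather than in your main argument.
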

        
\begin{proposition}
	Let $k\in\Z^{\geq 3}$. Without ancillas, the $\Lambda_k(X)$
        gate can be implemented up to a phase in the controls and the
        target by a circuit of $T$-count $8(k-2)$.
\end{proposition}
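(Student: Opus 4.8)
The plan is to exhibit an explicit ancilla-free circuit realizing $\Lambda_k(X)$ up to a phase depending on the controls and the target, and then count its $T$-gates. The starting point is \cref{circ:cxstar} from \cref{ex:toffolis}, which already implements $\Lambda_k(X^\star)$ --- that is, a $k$-controlled $X$ up to a target-dependent relative phase --- with $T$-count $8(k-2)$, provided $k \geq 3$. So at the level of counting the work is essentially done; the task is to assemble the argument cleanly and verify correctness.

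First I would recall the unmatched multiplication circuit \cref{circ:oracle3} (equivalently \cref{circ:fk}): conjugating the target by $H$, one alternately adds $f(\x)$ and $g(\x)$ into the target with interleaved $T$, $T^\dagger$ gates, realizing $(-1)^{y\cdot f g}\,i^{-fg}\ket{y\oplus fg}$ from $\ket{y}$ (up to control-dependent phase), using $2\tau(U_f)+\tau(U_g)$ $T$ gates plus $4$. Taking $f(\x)=x_k$ (so $U_f$ is a single $\Lambda_1(X)$, cost $0$) and $g(\x)=x_1\cdots x_{k-1}$, the inner $(k-1)$-controlled gate need only be implemented \emph{up to a relative phase on its controls}, since any such phase is diagonal in the controls and gets absorbed into the overall $D$. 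I would then instantiate $U_g$ by the single-dirty-ancilla relative phase $\Lambda_{k-1}(X^\bullet)$ of \cref{ex:cxbulletdirty} (\cref{circ:lambdaxbulletdirty}), using qubit $k$ (the target of the outer construction) as the dirty ancilla --- which is exactly what \cref{circ:cxstar} depicts. Each of the two copies of $U_g$ costs $8((k-1)-2)+4 = 8(k-3)+4$ $T$ gates, giving total $T$-count $2\bigl(8(k-3)+4\bigr) + 0 + 4 = 16(k-3)+12$; but the two inner relative-phase Toffolis share the $H$ conjugation on qubit $k$ with the outer construction and their boundary $T^\dagger, T$ gates merge, so one recounts directly from \cref{circ:cxstar}: two blocks of $T,T^\dagger$ framing each $X^\bullet$, i.e. $4$ $T$ gates from the outer layer plus $2\cdot(8(k-1-2)) = $ wait --- here I would instead just count \cref{circ:cxstar} as drawn: it contains the outer $T,T^\dagger,T,T^\dagger$ ($4$ gates) and two invocations of $X^\bullet$, each contributing $8(k-3)$ $T$ gates by the phase-only (no final phase cleanup) version, for a total of $8(k-2)$. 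I would present the arithmetic carefully rather than trust the sketch.

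The correctness argument has three parts: (i) the outer $H$-conjugated alternating-addition layer computes $y\oplus(x_k\cdot x_1\cdots x_{k-1}) = y\oplus(x_1\cdots x_k)$ in the target, which follows from the Fourier-expansion identity $(-1)^{y\cdot fg}i^{-fg} = \omega^{y - (y\oplus f) - (y\oplus g) + (y\oplus f\oplus g)}$ already used for \cref{circ:oracle3}; (ii) the $X^\bullet$ gates correctly compute $x_1\cdots x_{k-1}$ into qubit $k$ up to a phase $e^{ig'(\x)}$ on the controls, which is the content of \cref{ex:cxbulletdirty}; and (iii) the residual diagonal $D$ collects all these control-and-target-dependent phases and is pushed out to the left, as in the $U^\star$ convention box. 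The quasi-commutation fact from the generalized-permutations subsection ($UD = D'U$ for diagonal $D$, and diagonals commute through controls) is what licenses moving $D$ to the boundary. The validity range $k\geq 3$ comes from requiring $k-1\geq 2$ so that the inner $\Lambda_{k-1}(X^\bullet)$ construction of \cref{ex:cxbulletdirty} applies (it needs at least $2$ controls plus the dirty ancilla), and from needing genuinely nontrivial unmatched multiplication.

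The main obstacle I anticipate is purely bookkeeping: tracking the exact diagonal phase $D$ through the nested construction and confirming that the boundary $T/T^\dagger$ gates of the outer layer and the inner $X^\bullet$ blocks do not double-count or cancel in a way that changes the tally $8(k-2)$. In particular I would double-check that the inner relative-phase Toffolis are invoked in their \emph{bare} form (with the dirty-ancilla value pushed to the phase but \emph{without} the subsequent phase cleanup mentioned in the Remark), since including cleanup would inflate the count; the $U^\star$ output phase is precisely what makes dropping the cleanup legitimate here. Everything else is a direct appeal to \cref{ex:toffolis} and the multiplication propositions, so the proof is essentially "instantiate \cref{circ:cxstar} and count," with the phase-tracking being the only place care is genuinely required.
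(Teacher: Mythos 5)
Your approach is the paper's own: the proposition is witnessed by circuit \ref{circ:cxstar} of \cref{ex:toffolis}, namely unmatched multiplication with $f(\x)=x_k$ and $g(\x)=x_1\cdots x_{k-1}$, the inner gate realized by the single-dirty-ancilla $\Lambda_{k-1}(X^\bullet)$ of \cref{ex:cxbulletdirty}. However, your $T$-count accounting is garbled in a way that matters here, since the count is the entire content of the proposition. In the unmatched multiplication it is $U_f$, not $U_g$, that appears twice: circuit \ref{circ:cxstar} contains \emph{two} CNOTs from $x_k$ into the target (each costing zero $T$ gates) and only \emph{one} $X^\bullet$ block. Your tally of ``two invocations of $X^\bullet$, each contributing $8(k-3)$'' gives $4+16(k-3)$, which does not equal $8(k-2)$ for any integer $k$; moreover the per-invocation figure $8(k-3)$ is itself off, since a $\Lambda_{k-1}(X^\bullet)$ with one dirty ancilla costs $8\bigl((k-1)-2\bigr)+4=8(k-3)+4$, the $+4$ being the base-case $\Lambda_2(iX)$ of the recursion rather than an optional phase cleanup. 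The correct tally is simply the specialization of \cref{circ:bmultancfreect}: $2\tau(U_f)+\tau(U_g)+4 = 0+\bigl(8(k-3)+4\bigr)+4 = 8(k-2)$, valid once $k-1\geq 2$.

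A smaller slip: the dirty ancilla of the inner $X^\bullet$ is the control $x_k$ (the wire carrying the box in \ref{circ:cxstar}), not ``the target of the outer construction''; the inner gate's target coincides with the outer target, and its residual phase lives on $x_1,\dots,x_k$ together with the Hadamard-conjugated target, which is exactly what the $X^\star$ specification permits. With these corrections your argument coincides with the paper's construction, and your justification of the range $k\geq 3$ is correct.
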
        

\begin{remark}
Combining the $X^\star$ construction with the Gidney logical-AND \cite{g18}
gives a method of further reducing the $T$-count of the multiply-controlled
Toffoli gate (up to phase) when \emph{some} ancillas are available. 
In particular, by using Gidney's logical-AND to initialize and terminate
$m$ temporary products with $m$ clean ancillas and $4m$ $T$ gates, this
gives a $T$-count of $4m + 8(k-m-2)$ for a $k$-controlled $X^\star$
with $m$ clean ancillas, where $m\leq k-1$.
\end{remark}

\section{Measurement-assisted uncomputation}

The last technique that we study generalizes the constructions of
Gidney and Jones for terminating a temporary product
\cite{j13,g18}. Recall that by termination we mean
the dual of initialization, which is distinguished from the (unitary)
process of uncomputation.

To terminate an ancilla in the temporary state $\ket{f(x)}$, one
typically uncomputes $f$. Instead, we can swap the state into the
phase space by applying a Hadamard gate:
\[
	H\ket{f(x)} = \frac{1}{\sqrt{2}} \sum_{y\in\Z_2}(-1)^{yf(x)}\ket{y}.
\]
Measuring the ancillary qubit then leaves a phase of $1$ or a phase of
$(-1)^{f(x)}$. In the latter case, this phase can be corrected by a
classically-controlled $(-1)^{f(x)}$ phase oracle. This is reflected
in the following sequence of circuit equalities:
\[	\hspace*{5pt}
	\scalebox{0.68}{
	\Qcircuit @C=.3em @R=0.9em {
		\lstick{x_1} & \qw & \multimeasure{2}{f} & \qw & \qw \\
		& \vdots & & \vdots & \\
		\lstick{x_k} & \qw & \ghost{f} & \qw & \qw \\
		\lstick{f(x)} & \qw & \targ\qwx[-1] & {|}\qw & \push{\rule{0em}{1.2em}}
	}
	}
	~\mp{5.7}{=}
	\scalebox{0.68}{
	\Qcircuit @C=.3em @R=0.9em {
		& \qw & \qw & \multimeasure{2}{f} & \qw & \qw & \qw & \qw \\
		& \vdots & & & & & \vdots & \\
		& \qw & \qw & \ghost{f} & \qw & \qw & \qw & \qw \\
		& \qw & \gate{H} & \ctrl{-1} & \gate{H} & \meter & 
	}
	}
	~\mp{5.7}{=}
	\scalebox{0.68}{
	\Qcircuit @C=.3em @R=0.9em {
		& \qw & \qw & \multimeasure{2}{f} & \qw & \qw & \qw \\
		& \vdots & & & & \vdots & \\
		& \qw & \qw & \ghost{f} & \qw & \qw & \qw \\
		& \qw & \gate{H} & \ctrl{-1} & \meter & 
	}
	}
	~\mp{5.7}{=}
	\scalebox{0.68}{
	\Qcircuit @C=.3em @R=0.9em {
		& \qw & \qw & \multimeasure{2}{f} & \qw & \rstick{x_1}\qw \\
		& \vdots & & & \vdots & \\
		& \qw & \qw & \ghost{f} & \qw & \rstick{x_k}\qw \\
		& \qw & \gate{H} & \meter\cwx[-1] & 
	}
	}
\]
Note that the second equality follows from the fact that single qubit
gates preceding a discarded measurement can be dropped \cite{ku17}.
We once again summarize this fact below.

\begin{quote}
	\textit{
	A temporary value $\ket{f(x)}$ can be terminated by measuring
        in the $X$-basis and applying a classically-controlled
        $(-1)^{f(x)}$ correction.
	}
\end{quote}

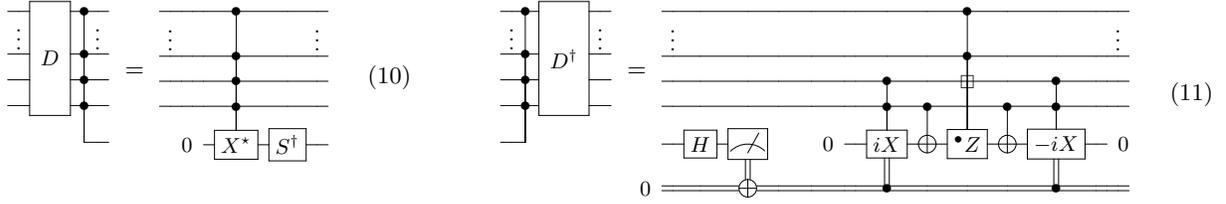
\begin{figure*}
	\begin{minipage}[t]{0.35\textwidth}
	\[
	\begin{array}{cc}
	\scalebox{0.9}{
	\Qcircuit @C=.5em @R=0.27em {
		& \qw & \multigate{4}{D} & \ctrl{5} & \qw & \qw  \\
		\push{\rule{0em}{.5em}} & \vdots & & & \vdots & \\
		& \qw & \ghost{D} & \ctrl{3} & \qw & \qw  \\
		& \qw & \ghost{D} & \ctrl{2} & \qw & \qw  \\
		& \qw & \ghost{D} & \ctrl{1} & \qw & \qw  \\
		\push{\rule{0em}{1.9em}} & & & {} & \qw & \qw  \\
	}
	}
	~~\mp{7.3}{=}~
	\scalebox{0.9}{
		\Qcircuit @C=.5em @R=0.9em {
		& \qw & \qw & \qw & \qw & \ctrl{5} & \qw & \qw & \qw  \\
		& \vdots & & & & & & \vdots & \\
		& \qw & \qw & \qw & \qw & \ctrl{3} & \qw & \qw & \qw  \\
		& \qw & \qw & \qw & \qw & \ctrl{2} & \qw & \qw & \qw  \\
		& \qw & \qw& \qw & \qw & \ctrl{1} & \qw & \qw & \qw  \\
		& & & &  \lstick{0} & \gate{X^\star} & \gate{S^\dagger} & \qw & \qw  \\
	}
	}
	& \tag{\showeqno}\label{circ:andk}
	\end{array}
	\]
	\end{minipage}
	\qquad
	\begin{minipage}[t]{0.60\textwidth}
	\[
	\begin{array}{cc}
	\scalebox{0.9}{
	\Qcircuit @C=.5em @R=0.27em {
		& \qw & \ctrl{5} & \multigate{4}{D^\dagger} & \qw & \qw  \\
		\push{\rule{0em}{.5em}}& \vdots & & & \vdots & \\
		& \qw & \ctrl{3} & \ghost{D^\dagger} & \qw & \qw  \\
		& \qw & \ctrl{2} & \ghost{D^\dagger} & \qw & \qw  \\
		& \qw & \ctrl{1} &  \ghost{D^\dagger} & \qw & \qw  \\
		\push{\rule{0em}{1.9em}} & \qw & {}\qw  & & &  \\
		& & & &
	}
	}
	~~\mp{7.3}{=}~
	\scalebox{0.9}{
		\Qcircuit @C=.5em @R=0.9em {
		& \qw & \qw & \qw & \qw & \qw & \qw & \qw & \qw & \qw & \ctrl{5} 
			& \qw & \qw & \qw & \qw & \qw & \qw  \\
		& \vdots & & & & & & & & & & & & & & \vdots & \\
		& \qw & \qw & \qw & \qw & \qw & \qw & \qw & \qw & \qw & \ctrl{3} 
			& \qw & \qw & \qw & \qw & \qw & \qw  \\
		& \qw & \qw & \qw & \qw & \qw & \qw & \qw & \ctrl{2} & \qw & \ctrlb{2} 
			& \qw & \ctrl{2} & \qw & \qw & \qw & \qw  \\
		& \qw & \qw & \qw & \qw & \qw & \qw & \qw & \ctrl{1} & \ctrl{1} & \qw 
			& \ctrl{1} & \ctrl{1} & \qw & \qw & \qw & \qw  \\
		& \qw & \gate{H} & \meter\cwx[1]  & & \push{\rule{2em}{0em}} & \lstick{0} & \qw & \gate{iX} 
			& \targ & \gate{{}^\bullet Z} & \targ & \gate{-iX} & \qw & \rstick{0}\qw & \\
		\lstick{0} & \cw & \cw & \ctarg & \cw & \cw & \cw & \cw & \cctrl{-1} & \cw 
			& \cw & \cw & \cctrl{-1} & \cw & \cw & \cw & \cw
	}
	}\hspace{-1em}
	& \tag{\showeqno}\label{circ:unkand}
	\end{array}
	\]
	\end{minipage}
\caption{Initializing and terminating a temporary logical AND of $k$ bits. Combined $T$-count $16(k - 2) - 10\pm 2$.}
\label{fig:tla}
\end{figure*}

In the case of the $2$-qubit Toffoli gate, the classically-controlled
correction of $(-1)^{x_1x_2}$ can be implemented using only Clifford
gates. In contrast, in the general case, correcting the phase
$(-1)^{f(x)}$ might require $T$ gates. This can nonetheless still reduce
the $T$-count, since uncomputing $f(x)$ without
measurement requires computing the phase
\[
	\ket{x}\ket{y} \mapsto (-1)^{yf(x)}\ket{x}\ket{y}
\]
conrresponding to an extra quantum control.

\begin{construction}[Terminating $\Lambda_k(X)$]
	Consider the logical product of $k$ bits $\ket{(x_1\cdots x_k)}$, 
	which can be initialized with
	a clean ancilla by applying a multiply-controlled $iX$ gate and 
	an $S^\dagger$:
	\[\hspace*{-1em}
	\begin{array}{cc}
	\scalebox{0.8}{
	\Qcircuit @C=.5em @R=0.9em {
		\lstick{x_1} & \qw & \ctrl{3} & \qw & \qw  \\
		& \vdots & & \vdots & \\
		\lstick{x_k} & \qw & \ctrl{1} & \qw & \qw  \\
		\push{\rule{0em}{1.2em}} & & {} & \qw & \qw  \\
	}
	}	
	~~\mp{5}{=}~
	\scalebox{0.8}{
		\Qcircuit @C=.5em @R=0.9em {
		& \qw & \qw & \qw & \ctrl{3} & \qw & \qw & \qw & \qw  \\
		& \vdots & & & & & & \vdots & \\
		& \qw & \qw & \qw & \ctrl{1} & \qw & \qw & \qw & \qw  \\
		\push{\rule{0em}{1.2em}} & & & \lstick{0} & \targ  & \qw & \qw & \qw & \qw  \\
	}
	}
	~~\mp{5}{=}~
	\scalebox{0.8}{
		\Qcircuit @C=.5em @R=0.9em {
		& \qw & \qw & \qw & \qw & \ctrl{3} & \qw & \qw & \rstick{x_1}\qw  \\
		& \vdots & & & & & & \vdots & \\
		& \qw & \qw& \qw & \qw & \ctrl{1} & \qw & \qw & \rstick{x_k}\qw  \\
		& & & &  \lstick{0} & \gate{iX} & \gate{S^\dagger} & \qw & \rstick{x_1\cdots x_k}\qw  \\
	}
	}
	& \tag{\showeqno}\label{circ:kand}
	\end{array}
	\]
	Using the $\Lambda_k(iX)$ gate from \cref{sec:ancillafree}, the logical
	product above uses $16(k-3) + 4$ $T$ gates.

	After this temporary product is no longer needed, we can
        terminate it by measuring in the $X$ basis, resulting in the
        state $\ket{(x_1\cdots x_k)}$ or $(-1)^{x_1\cdots
          x_k}\ket{(x_1\cdots x_k)}$. Correcting this phase requires a
        $\Lambda_{k-1}(Z)$ gate. Using the methods from
        \cref{sec:ancilla}, this gate can be implemented with a
        single ancilla in $T$-count $16(k-3)$. With
        non-destructive measurements, this can be further reduced to
        $16(k-4)+4$ by re-using the measured qubit as a clean ancilla to 
	initialize a temporary product of two bits:
	\[\hspace*{1em}
	\begin{array}{c}
	\scalebox{0.75}{
	\Qcircuit @C=.5em @R=0.2em {
		\lstick{x_1} & \qw & \ctrl{5} & \qw & \qw & \push{\rule{0em}{1.2em}}  \\
		& \vdots & & \vdots & & \push{\rule{0em}{.5em}} \\
		\lstick{x_{k-2}} & \qw & \ctrl{3} & \qw & \qw & \push{\rule{0em}{1.2em}}  \\
		\lstick{x_{k-1}} & \qw & \ctrl{2} & \qw & \qw & \push{\rule{0em}{1.2em}}  \\
		\lstick{x_k} & \qw & \ctrl{1} & \qw & \qw & \push{\rule{0em}{1em}}  \\
		\lstick{x_1 \cdots x_k} & \qw & {}\qw  & \push{\rule{0em}{2em}} & &  \\
		& & & & \push{\rule{0em}{1.2em}} \\
	}
	}
	\mp{7.7}{=}
	\scalebox{0.75}{
		\Qcircuit @C=.5em @R=0.2em {
		\push{\rule{0em}{1.2em}} & \qw & \qw & \qw & \qw & \qw & \qw & \qw & \qw & \ctrl{5} 
			& \qw & \qw & \qw & \qw & \rstick{x_1}\qw  \\
		\push{\rule{0em}{.5em}} & \vdots & & & & & & & & & & & & \vdots & \\
		\push{\rule{0em}{1.2em}} & \qw & \qw & \qw & \qw & \qw & \qw & \qw & \qw & \ctrl{3} 
			& \qw & \qw & \qw & \qw & \rstick{x_{k-2}}\qw  \\
		\push{\rule{0em}{1.2em}} & \qw & \qw & \qw & \qw & \qw & \qw & \ctrl{1} & \qw & \qw 
			& \qw & \ctrl{1} & \qw & \qw & \rstick{x_{k-1}}\qw  \\
		\push{\rule{0em}{1em}} & \qw & \qw & \qw & \qw & \qw & \qw & \ctrl{1} & \qw & \qw 
			& \qw & \ctrl{1} & \qw & \qw & \rstick{x_k}\qw  \\
		\push{\rule{0em}{2em}} & \qw & \gate{H} & \meter\cwx[1] & \push{\rule{1em}{0em}} & & \lstick{0} & \gate{iX} & \qw 
			& \gate{Z} & \qw & \gate{-iX} & \rstick{0}\qw &  \\
		 \lstick{0}\push{\rule{0em}{1.2em}} & \cw & \cw & \ctarg & \cw & \cw & \cw &  \cctrl{-1} & \cw & \cctrl{-1} 
			& \cw &  \cctrl{-1} & \cw & \cw & \cw  \\
	}
	} \\
	\tag{\showeqno}\label{circ:kunand}
	\end{array}
	\]
\end{construction}

In the case where the oracle is implemented up to phase, the $T$-count of the 
final phase correction can sometimes be reduced further by applying the phase 
correction \emph{itself up to phase}. The following construction gives a 
measurement-assisted termination circuit for a temporary logical $3$-AND based
on Maslov's Toffoli $4$.

\begin{construction}[Terminating Maslov's $\Lambda_3(X^\star)$]
	The ternary logical AND $f(x)=x_1x_2x_3$ can be initialized
        with a clean ancilla using Maslov's $4$-qubit Toffoli gate
        (see \cref{fig:maslov}) up to a phase of $i^{x_1x_2}$. In
        particular,
	\[
	\scalebox{0.8}{
	\Qcircuit @C=.5em @R=0.35em @!R {
		& \qw & \ctrl{1} & \ctrl{3} & \qw & \qw  \\
		& \qw & \gate{S} & \ctrl{2} & \qw & \qw  \\
		& \qw & \qw & \ctrl{1} & \qw & \qw  \\
		& & & {} & \qw & \qw &  \\
	}
	}
	~\mp{6}{=}~
	\scalebox{0.8}{
		\Qcircuit @C=.5em @R=0.2em @!R {
		& \qw & \qw & \qw & \qw & \ctrl{3} & \qw & \qw  \\
		& \qw & \qw & \qw & \qw & \ctrl{2} & \qw & \qw  \\
		& \qw & \qw& \qw & \qw & \ctrl{1} & \qw & \qw  \\
		& & & &  \lstick{0} & \gate{X^\star} & \gate{S^\dagger} & \qw &  \\
	}
	}
	\]
	where the $3$-control $X^\star$ gate is the right hand side of
	\cref{fig:maslov}. In this case, measuring the product in the $X$ basis
	gives either a phase of $i^{x_1x_2}$ if the result is $0$, or a phase of
	$i^{x_1x_2}(-1)^{x_1x_2x_3}$ if the measurement result is $1$. In the
	former case, the phase of $i^{x_1x_2}$ can be corrected with a controlled
	$S^\dagger$ gate in $3$ $T$ gates, while the latter phase of 
	$i^{x_1x_2}(-1)^{x_1x_2x_3}$ can be corrected with a $\Lambda_2(-iZ)$
	gate using $4$ $T$ gates. The corresponding circuit is shown below.
	\[
	\begin{array}{cc}
	\scalebox{0.8}{
	\Qcircuit @C=.5em @R=0.2em @!R {
		& \qw & \ctrl{3} & \ctrl{1} & \qw & \qw  \\
		& \qw & \ctrl{2} & \gate{S^\dagger} & \qw & \qw  \\
		& \qw & \ctrl{1} & \qw & \qw & \qw  \\
		& \qw & {}\qw  & & &  \\
	}
	}
	~~\mp{6}{=}~
	\scalebox{0.8}{
		\Qcircuit @C=.5em @R=0.2em @!R {
		& \qw & \qw & \qw & \ctrl{1} & \ctrl{2} & \qw  \\
		& \qw & \qw & \qw & \gate{S^\dagger} & \ctrl{1} & \qw  \\
		& \qw & \qw & \qw & \qw & \gate{-iZ} & \qw  \\
		& \gate{H} & \meter  & \cw & \cctrlo{-2} & \cctrl{-1} &  \\
	}
	}
	& \tag{\showeqno}\label{circ:3unand}
	\end{array}
	\]
\end{construction}

We close by giving an efficient logical $k$-AND using our $\Lambda_k(X^\star)$
to initialize the product, and measurement to terminate it (\cref{fig:tla}).
The termination construction shaves roughly $16$ $T$ gates off the cost
to compute the product.

\begin{proposition}
\label{prop:logicalandmeas}
A logical AND of $k$ bits can be initialized up to relative phase with $8(k-2)$
$T$ gates and terminated with either $8(k-4)$ or $8(k-4) + 4$ $T$ gates.
\end{proposition}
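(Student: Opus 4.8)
The plan is to assemble the statement from two circuits already developed above: \cref{circ:andk} for the initialization half and the circuit of \cref{fig:tla} (namely \cref{circ:unkand}) for the termination half, and then to account for the $T$ gates contributed in each measurement branch. For the initialization, \cref{ex:toffolis} gives the $\Lambda_k(X^\star)$ gate of \cref{circ:cxstar}, which realizes $\Lambda_k(X)$ up to a phase $e^{ig(\x,y)}$ on the controls and target with $T$-count $8(k-2)$ for $k\ge 3$. Running it with a clean ancilla in place of the target yields $e^{ig(\x,0)}\ket{\x}\ket{(x_1\cdots x_k)}$; following with a single $S^\dagger$ on the ancilla --- a Clifford gate --- soaks up the $\Z_4$-part of the phase and leaves the temporary product up to a residual sign phase $(-1)^{r(\x)}$ on the controls, for an explicit polynomial $r$. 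This is exactly \cref{circ:andk}, and it uses $8(k-2)$ $T$ gates, matching the corresponding entry of \cref{tab:foo}.

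For the termination, conjugate the ancilla by $H$ and measure it in the computational basis; this step uses no $T$ gates. By the measurement-assisted termination principle stated above --- a temporary value $\ket{f(\x)}$ can be terminated by an $X$-basis measurement followed by a classically-controlled $(-1)^{f(\x)}$ correction --- an outcome of $0$ leaves the state carrying only the residual phase $(-1)^{r(\x)}$, while an outcome of $1$ leaves $(-1)^{r(\x)+x_1\cdots x_k}$. In either branch we then apply a diagonal Clifford+$T$ correction cancelling the indicated phase exactly, since the controls are data qubits that must be restored on the nose.

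It remains to bound the cost of the two corrections. The monomial $(-1)^{x_1\cdots x_k}$ is handled by first computing the pair product $b=x_{k-1}x_k$ onto a clean ancilla with a $\Lambda_2(iX)$ seed of $4$ $T$ gates, and then applying a $\Lambda_{k-2}(Z)$ on $x_1,\dots,x_{k-2},b$; implementing that controlled-$Z$ only up to phase, again via the $X^\star$ construction of \cref{ex:toffolis}, costs $8\bigl((k-2)-2\bigr)=8(k-4)$ $T$ gates. One then checks that the residual $(-1)^{r(\x)}$, which by the recursive form of \cref{circ:cxstar} is a product phase one step smaller, can be absorbed into this same $\Lambda_{k-2}(Z)$ after a Clifford change of controls (or into the surrounding Cliffords), and that the $\Lambda_2(iX)$ seed survives in precisely one of the two measurement branches. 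This reproduces \cref{circ:unkand} and gives termination cost $8(k-4)$ or $8(k-4)+4$ according to the outcome; the general count needs $k$ at least $4$ (as in \cref{tab:foo}), and for $k=3$ one uses \cref{circ:3unand} instead --- the same argument run with $\Lambda_3(X^\star)$ taken from \cref{fig:maslov}.

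The main obstacle is the phase bookkeeping. One must expand the phase produced by \cref{circ:cxstar} (equivalently, its Fourier transform in the sense used to derive the $X^\star$ and $iX$ constructions) precisely enough to pin down $r(\x)$, verify that combining $r(\x)$ with the measurement kickback $x_1\cdots x_k$ leaves, in each branch, a phase whose $\Z_4$-part is killed by Cliffords and whose $\Z_2$-part is a single monomial handled by one $\Lambda_{k-2}(Z)$ after precomputing a pair product, and --- the delicate point --- confirm that implementing that final correction \emph{itself} only up to phase introduces no leftover phase on the control register, unlike the dirty-ancilla situation of \cref{ex:cxbulletdirty} where such leftover phases are tolerated. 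Once these identities are established, the per-branch $T$-count is an immediate tally.
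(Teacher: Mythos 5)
Your overall plan---initialize with \cref{circ:andk} (a $\Lambda_k(X^\star)$ onto a clean ancilla followed by $S^\dagger$, costing $8(k-2)$ $T$ gates) and terminate by an $X$-basis measurement followed by a classically-controlled correction built from a $\Lambda_2(\pm iX)$ pair-product seed and a $(k-2)$-controlled phase gate---is the same as the paper's (\cref{fig:tla}), and the initialization half together with the per-branch tallies produces the stated numbers. The gap is in the termination, at exactly the point you flag as ``the delicate point'' and then defer. First, the premise that $S^\dagger$ reduces the initialization's phase to a sign $(-1)^{r(\x)}$ on the controls is false in general: already for $k=3$ the residual after $S^\dagger$ is $i^{x_1x_2}$, which is of order four and not correctable by Cliffords on the controls---this is precisely why \cref{circ:3unand} spends $3$ $T$ gates on a controlled-$S^\dagger$ even in the outcome-$0$ branch. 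For general $k$ the residual is the relative phase of a $\Lambda_{k-1}(Z^\bullet)$ gate (this is the content of \cref{prop:a2}), which carries genuine fourth- and eighth-root-of-unity components inherited from the $T$ gates in \cref{circ:cxstar}; it is not ``killed by Cliffords.''

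Second, and more seriously, your correction gate is a \emph{generic} relative-phase $\Lambda_{k-2}(Z)$ (an $X^\star$-style implementation on $x_1,\dots,x_{k-2},b$ with $b = x_{k-1}x_k$ freshly computed). Such a gate introduces its \emph{own} relative phase on the data register, and nothing in your argument shows that this new phase, combined with the initialization residual and the measurement kickback, vanishes---the data qubits must be restored exactly. The paper's construction works precisely because the ${}^\bullet Z$ appearing in \cref{circ:unkand} is not an arbitrary relative-phase implementation: it is the matched inverse of the specific $Z^\bullet$ that \cref{prop:a2} identifies as the initialization residual, so the two relative phases cancel by construction; \cref{prop:a1,prop:a3} in \cref{app:proof} exist to verify exactly this cancellation, including how the $\Lambda_2(\pm iX)$ conjugation reroutes the dirty control through the reused measured qubit. ``Absorbed into the same $\Lambda_{k-2}(Z)$ after a Clifford change of controls'' is therefore the entire theorem rather than a bookkeeping step, and as stated it is not established (and, for the order-four components, cannot be established by Cliffords alone). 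To repair the argument you would need to replace your generic correction with the matched inverse and carry out the circuit identities of the appendix.
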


As a corollary we additionally 
obtain a Jones-style circuit for the $\Lambda_k(X)$ gate which 
uses a single clean ancilla, measurements and classical control.

\begin{proposition}
	Let $k\in\Z^{\geq 4}$. With a single dirty ancilla and measurements, the
        $\Lambda_k(X)$ gate can be implemented by a circuit of
        $T$-count $16(k-3)+4$.
\end{proposition}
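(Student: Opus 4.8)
The statement is a corollary of \cref{prop:logicalandmeas}, obtained by building $\Lambda_k(X)$ out of the measurement-assisted logical AND of $k$ bits. The plan is: first \emph{initialize} the product $\ket{x_1\cdots x_k}$ on the ancilla, up to a relative phase $D$ on the controls, using the $8(k-2)$-$T$ circuit of \cref{prop:logicalandmeas}; then copy the product into the target $y$ with a single $\Lambda_1(X)$ gate, which is Clifford and contributes no $T$ gates; then \emph{terminate} the ancilla with the measurement-assisted circuit of \cref{prop:logicalandmeas}, costing $8(k-4)$ or $8(k-4)+4$ $T$ gates depending on the measurement outcome. Because termination is by construction the inverse of initialization, its built-in $D^{\dagger}$ cancels the relative phase $D$, and the intermediate $\Lambda_1(X)$ — acting only on the ancilla and the target — commutes past every diagonal phase; the net effect is therefore exactly $\ket{\x}\ket{y}\mapsto\ket{\x}\ket{y\oplus x_1\cdots x_k}$.

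Counting $T$ gates, the worst case — measurement outcome $1$, requiring the $\Lambda_{k-1}(Z)$-type phase correction — gives $8(k-2)+\bigl(8(k-4)+4\bigr)=16(k-3)+4$, while the favourable outcome needs only $16(k-3)$. The hypothesis $k\geq 4$ is exactly what is needed for the $\Lambda_k(X^{\star})$ and $\Lambda_{k-1}(Z)$ sub-circuits underlying \cref{prop:logicalandmeas} to apply.

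It remains to replace the clean ancilla of \cref{prop:logicalandmeas} by a dirty one. Following the dirty-ancilla form of the Bennett trick from \cref{sec:history} and its generalization in \cref{sec:ancilla}, I would conjugate the product computation by a pair of $\Lambda_1(X)$ gates from the ancilla to the target, so that the ancilla's unknown initial value is first added into $y$ and then cancelled when the product is copied out, with the ancilla returned to its initial state; these extra gates are again Clifford. The single qubit that serves as a clean scratch register for the post-measurement $\Lambda_{k-1}(Z)$-type correction is obtained, as in \cref{prop:logicalandmeas}, by Hadamard-ing the measured qubit.

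The main thing to get right is the phase bookkeeping, since the circuit must realize $\Lambda_k(X)$ \emph{on the nose} rather than only up to phase. Concretely one must verify that (i) the relative phase introduced by initialization is exactly undone by termination, (ii) the copying $\Lambda_1(X)$ and the dirty-ancilla conjugating $\Lambda_1(X)$ gates commute with all the diagonal pieces, and — the delicate point — (iii) the $X$-basis measurement of the possibly-entangled dirty ancilla leaves no residual phase depending jointly on the measurement outcome and the ancilla's input value, so that the ancilla is genuinely restored with no phase kickback onto whatever it was entangled with. I expect (iii) to be the crux: it amounts to decomposing the measurement-induced phase $(-1)^{m(\cdots)}$ into the part that is corrected by the classically-controlled $\Lambda_{k-1}(Z)$-type gate and a part that is a harmless global phase, and checking that nothing else survives.
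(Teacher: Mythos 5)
Your first two paragraphs reproduce the paper's intended argument exactly, and they are correct: initialize the product on the ancilla with the $8(k-2)$-$T$ relative-phase $k$-AND of \cref{prop:logicalandmeas} (circuit \ref{circ:andk}), copy it onto the target with a Clifford $\Lambda_1(X)$, and terminate with circuit \ref{circ:unkand} at a cost of $8(k-4)$ or $8(k-4)+4$ further $T$ gates. The diagonal discrepancies introduced by initialization and removed by termination act only on the control register --- this is precisely what \cref{prop:a2} and \cref{prop:a3} establish --- so they commute past the copying $\Lambda_1(X)$ and cancel, and the worst-case count $8(k-2)+8(k-4)+4=16(k-3)+4$ is right.

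The gap is in your last paragraph. Measurement-assisted termination cannot be applied to a \emph{dirty} ancilla: the $X$-basis measurement consumes the ancilla, so it cannot be returned in its unknown initial state $\ket{a}$, and when $a$ is in superposition or entangled the factor $(-1)^{ma}$ in $H\ket{a\oplus x_1\cdots x_k}=\tfrac{1}{\sqrt{2}}\sum_m(-1)^{ma}(-1)^{m\,x_1\cdots x_k}\ket{m}$ is a phase kickback onto whatever purifies $a$, which no classically controlled correction on the visible qubits can undo. Your point (iii) is therefore not a delicate verification to be carried out but an actual obstruction, and conjugating by $\Lambda_1(X)$ gates onto the target repairs only the state-space contribution of $a$, not this phase (nor the destruction of the ancilla's state). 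The resolution is that the construction requires a \emph{clean} ancilla: both the table entry (ancillary state $\ket{0}$) and the sentence introducing this proposition say ``clean,'' and the word ``dirty'' in the statement is inconsistent with the paper's own construction. Read with a clean ancilla, your first two paragraphs already constitute the complete proof and the third paragraph should be deleted rather than completed.
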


\section{Conclusion}

In this paper we described a number of techniques which use the
phase/state duality in order to efficiently implement quantum oracles,
most notably in the low-space regime. These techniques generalize,
among others, the relative phase Toffolis of Maslov and Selinger, as
well as the measurement-assisted uncomputations of Gidney and
Jones. Using these techniques, we developed several new circuit
constructions. These constructions, which are summarized in
\cref{tab:foo}, include circuits for Toffoli gates, multiplying
Boolean functions, and high-degree classical gates.

\begin{acknowledgments}
The authors would like to thank Craig Gidney
for helpful and enlightening comments on an earlier version of this paper.
\end{acknowledgments}

%

\appendix
\onecolumngrid

\section{Implications of relative phases}\label{app:rphase}

As most of our constructions have some form of relative phase associated with
them, care must be taken when using them. Here we show that any permutation
or reversible circuit which is later uncomputed can be replaced with a relative
phase implementation, provided the interior computation doesn't modify the
basis state of any qubits on which there is a relative phase.

\begin{proposition}
Let $U_f$ be an oracle for some Boolean function 
$f:\Z_2^n\rightarrow \Z_2$
and let $U$ be some unitary transformation on $m>n$ qubits.
Without loss of generality, 
if $U$ is constant up to phase on the first $n$ qubits, then
\[
	(U_f^\dagger \otimes I) U (U_f \otimes I) = (\widetilde{U_f}^\dagger \otimes I) U (\widetilde{U_f} \otimes I)
\]
for any relative phase implementation $\widetilde{U_f}$ of $f$.
\end{proposition}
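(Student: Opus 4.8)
The plan is to reduce the identity to the quasi-commutation of generalized permutations recorded in \cref{sec:history}. Since $\widetilde{U_f}$ and $U_f$ realize the same permutation of the computational basis, differing only by phases on the basis states, we may write $\widetilde{U_f} = U_f D$ where $D := U_f^\dagger \widetilde{U_f}$ is a diagonal unitary. By hypothesis this relative phase is supported on the first $n$ qubits, so $D = \sum_{\x \in \Z_2^n} e^{ig(\x)}\ket{\x}\bra{\x}\otimes I$. First I would substitute the factorization into the right-hand side and factor the diagonal pieces to the outside,
\[
  (\widetilde{U_f}^\dagger \otimes I)\, U\, (\widetilde{U_f} \otimes I)
  = (D^\dagger \otimes I)\,\bigl[(U_f^\dagger \otimes I)\, U\, (U_f \otimes I)\bigr]\,(D \otimes I),
\]
reducing the claim to showing that the bracketed operator commutes with $D \otimes I$.

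For that I would observe that all three factors inside the brackets are block-diagonal with respect to the computational basis of the first $n$ qubits: $U_f$ does not alter the computational-basis state of its input qubits (it only adds $f(\x)$ modulo $2$ to the target, writing any garbage on further qubits), so $U_f \otimes I$ is block-diagonal over the first $n$ qubits; and the hypothesis that $U$ is constant up to phase on those qubits says exactly that $U = \sum_{\x\in\Z_2^n}\ket{\x}\bra{\x}\otimes W_\x$ for unitaries $W_\x$, the phase factors being absorbed into the $W_\x$. A product of operators block-diagonal over the first $n$ qubits is again block-diagonal, and $D \otimes I$ acts as the scalar $e^{ig(\x)}$ inside the $\x$-block, hence commutes with any such operator. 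The outer $D^\dagger \otimes I$ therefore annihilates the innermost $D \otimes I$, leaving $(U_f^\dagger \otimes I)\, U\, (U_f \otimes I)$, as required. Equivalently, one can slide $D\otimes I$ leftward through $U_f\otimes I$, through $U$, and through $U_f^\dagger\otimes I$ using quasi-commutation until it meets $D^\dagger\otimes I$; this is the form of the argument used informally in \cref{sec:ancilla,sec:ancillafree}.

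The step that needs care, and the only genuine obstacle, is the matching between the support of the relative phase in $\widetilde{U_f}$ and the region on which $U$ is assumed constant up to phase: the cancellation breaks the moment $\widetilde{U_f}$ carries a phase on a qubit whose basis state $U$ permutes, and a two-qubit example --- $f(x)=x$, $\widetilde{U_f}$ the usual $\cnot$ dressed by a $(-1)^{xy}$ phase, and $U$ a bit flip on the target --- shows the corresponding identity is then genuinely false. This is what the ``without loss of generality'' is doing (relabel the phase-bearing qubits as the first $n$); in particular, when the relative phase also involves the target, one must read the hypothesis as requiring $U$ to be constant up to phase on the target qubit as well, which, as noted after the definition of relative phases in \cref{sec:history}, holds automatically whenever the interior computation merely reads the temporary value rather than overwriting it.
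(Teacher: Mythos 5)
Your proof is correct and follows essentially the same route as the paper's: you write $\widetilde{U_f} = U_f D$ for a diagonal $D$ and commute the diagonal through $U$ using the hypothesis, your block decomposition $U=\sum_{\x}\ket{\x}\bra{\x}\otimes W_\x$ being exactly the algebraic form of the paper's factorization of $U$ into a diagonal gate times gates controlled on the first $n$ qubits. The only caveat is that when the relative phase touches the target, $U_f$ itself is not block-diagonal over all the phase-bearing qubits, so there one should lean on the quasi-commutation $U_f D = D' U_f$ you offer as the equivalent formulation (which is what the paper implicitly uses); since $D'$ is again diagonal and supported on the same qubits, the cancellation and hence the conclusion are unaffected.
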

\begin{proof}
First observe that if $U$ is constant up to phase in the first $n$ qubits, 
then $U$ is equivalent to some circuit where the first $n$ qubits are \emph{only
used as controls}. In particular, $U$ can be written as a product of a diagonal unitary $V$
and $2^n$ matrices controlled on the $2^n$ basis states of the first $n$ qubits:
\[
	U = \prod_{i=1}^{2^n}\Lambda_{n}^{i}(U_i)
\]
where we use $\Lambda_{n}^{i}(U_i)$ to denote the application of $U_i$ controlled on
the first $n$ qubits having basis state equal to the binary expansion of $i$.

Now recalling that a generalized permutation $\widetilde{U_f}$ may be factored
equally as $DU_f = \widetilde{U_f} = U_fD'$ for some diagonal matrices $D,D'$, we
can proceed by calculation.
\begin{align*}
\scalebox{0.8}{
	\Qcircuit @C=.5em @R=0.6em {
	&\qw &\qw &\qw &\multigate{2}{\widetilde{U_f}} &\multigate{5}{U} &\multigate{2}{\widetilde{U_f}^\dagger} 
		&\qw &\qw &\qw & \qw \\
	& \vdots & & & & & & & & \vdots \\
	&\qw &\qw &\qw &\ghost{\widetilde{U_f}} & \ghost{U} &\ghost{\widetilde{U_f}^\dagger} &\qw &\qw &\qw & \qw \\
        & \qw & \qw &\qw & \qw & \ghost{U} & \qw & \qw & \qw & \qw & \qw \\
	& \vdots & & & & & & & & \vdots \\
        & \qw & \qw &\qw & \qw & \ghost{U} & \qw & \qw & \qw & \qw & \qw
}
}
~~&\mp{7}{=}~
\scalebox{0.8}{
	\Qcircuit @C=.5em @R=0.6em {
	&\qw &\qw &\qw & \multigate{2}{U_f} & \multigate{2}{D} &\multigate{5}{U} 
		& \multigate{2}{D^\dagger} &\multigate{2}{U_f^\dagger} &\qw &\qw &\qw & \qw \\
	& \vdots & & & & & & & & & & \vdots \\
	&\qw &\qw &\qw &\ghost{U_f} &\ghost{D} & \ghost{U} &\ghost{D^\dagger} &\ghost{U_f^\dagger} &\qw &\qw &\qw & \qw \\
        & \qw & \qw &\qw & \qw & \qw & \ghost{U} & \qw & \qw & \qw & \qw & \qw & \qw \\
	& \vdots & & & & & & & & & & \vdots \\
        & \qw & \qw &\qw & \qw & \qw & \ghost{U} & \qw & \qw & \qw & \qw & \qw & \qw
}
} \\
~~&\mp{7}{=}~
\scalebox{0.8}{
	\Qcircuit @C=.5em @R=0.6em {
	&\qw &\qw &\qw & \multigate{2}{U_f} & \multigate{2}{D} & \multigate{5}{V} 
		& \ctrlo{2} & \ctrl{2} & \qw & \push{\rule{.5em}{0em}} 
		& \cdots & \push{\rule{.5em}{0em}} & & \ctrl{2} & \multigate{2}{D^\dagger} 
		&\multigate{2}{U_f^\dagger} &\qw &\qw &\qw & \qw \\
	& \vdots & & & & & & & & & & & & & & & & & & \vdots \\
	&\qw &\qw &\qw &\ghost{U_f} & \ghost{D} & \ghost{V} & \ctrlo{1} & \ctrlo{1} & \qw & & \cdots & & & \ctrl{1} 
		&\ghost{D^\dagger} &\ghost{U_f^\dagger} &\qw &\qw &\qw & \qw \\
        & \qw & \qw &\qw & \qw & \qw & \ghost{V} & \multigate{2}{U_1} & \multigate{2}{U_2} 
		& \qw & & \cdots & & & \multigate{2}{U_{2^n}} 
		& \qw & \qw & \qw & \qw & \qw & \qw \\
	& \vdots & & & & & & & & & & & & & & & & & & \vdots \\
        & \qw & \qw &\qw & \qw & \qw & \ghost{V} & \ghost{U_1} & \ghost{U_2} & \qw & & \cdots & & & \ghost{U_{2^n}} 
		& \qw & \qw & \qw & \qw & \qw & \qw
}
} \\
~~&\mp{7}{=}~
\scalebox{0.8}{
	\Qcircuit @C=.5em @R=0.6em {
	&\qw &\qw &\qw & \multigate{2}{U_f} & \multigate{2}{D} & \multigate{2}{D^\dagger} & \multigate{5}{V} 
		& \ctrlo{2} & \ctrl{2} & \qw & \push{\rule{.5em}{0em}} 
		& \cdots & \push{\rule{.5em}{0em}} & & \ctrl{2} 
		&\multigate{2}{U_f^\dagger} &\qw &\qw &\qw & \qw \\
	& \vdots & & & & & & & & & & & & & & & & & & \vdots \\
	&\qw &\qw &\qw &\ghost{U_f} & \ghost{D} &\ghost{D^\dagger} 
		& \ghost{V} & \ctrlo{1} & \ctrlo{1} & \qw & & \cdots & & & \ctrl{1} 
		&\ghost{U_f^\dagger} &\qw &\qw &\qw & \qw \\
        & \qw & \qw &\qw & \qw & \qw & \qw & \ghost{V} & \multigate{2}{U_1} & \multigate{2}{U_2} 
		& \qw & & \cdots & & & \multigate{2}{U_{2^n}} 
		& \qw & \qw & \qw & \qw & \qw \\
	& \vdots & & & & & & & & & & & & & & & & & & \vdots \\
        & \qw & \qw &\qw & \qw & \qw & \qw & \ghost{V} & \ghost{U_1} & \ghost{U_2} & \qw & & \cdots & & & \ghost{U_{2^n}} 
		& \qw & \qw & \qw & \qw & \qw
}
} \\
~~&\mp{7}{=}~
\scalebox{0.8}{
	\Qcircuit @C=.5em @R=0.6em {
	&\qw &\qw &\qw & \multigate{2}{U_f} & \multigate{5}{V} 
		& \ctrlo{2} & \ctrl{2} & \qw & \push{\rule{.5em}{0em}} 
		& \cdots & \push{\rule{.5em}{0em}} & & \ctrl{2} 
		&\multigate{2}{U_f^\dagger} &\qw &\qw &\qw & \qw \\
	& \vdots & & & & & & & & & & & & & & & & \vdots \\
	&\qw &\qw &\qw &\ghost{U_f}
		& \ghost{V} & \ctrlo{1} & \ctrlo{1} & \qw & & \cdots & & & \ctrl{1} 
		&\ghost{U_f^\dagger} &\qw &\qw &\qw & \qw \\
        & \qw & \qw &\qw & \qw & \ghost{V} & \multigate{2}{U_1} & \multigate{2}{U_2} 
		& \qw & & \cdots & & & \multigate{2}{U_{2^n}} 
		& \qw & \qw & \qw & \qw & \qw \\
	& \vdots & & & & & & & & & & & & & & & & \vdots \\
        & \qw & \qw &\qw & \qw & \ghost{V} & \ghost{U_1} & \ghost{U_2} & \qw & & \cdots & & & \ghost{U_{2^n}} 
		& \qw & \qw & \qw & \qw & \qw
}
} \\
~~&\mp{7}{=}~
\scalebox{0.8}{
	\Qcircuit @C=.5em @R=0.6em {
	&\qw &\qw &\qw &\multigate{2}{U_f} &\multigate{5}{U} &\multigate{2}{U_f^\dagger} 
		&\qw &\qw &\qw & \qw \\
	& \vdots & & & & & & & & \vdots \\
	&\qw &\qw &\qw &\ghost{U_f} & \ghost{U} &\ghost{U_f^\dagger} &\qw &\qw &\qw & \qw \\
        & \qw & \qw &\qw & \qw & \ghost{U} & \qw & \qw & \qw & \qw & \qw \\
	& \vdots & & & & & & & & \vdots \\
        & \qw & \qw &\qw & \qw & \ghost{U} & \qw & \qw & \qw & \qw & \qw
}
}
\end{align*}
\end{proof}

A simple rule-of-thumb for when the above proposition can be applied is whenever
$U_f$ and $U_f^\dagger$ are used as a compute/uncompute pair, and
$U_f$ does not impart a relative phase on a \emph{dirty} ancilla. In such a case,
$U$ is necessarily globally constant on the qubits used in $U_f$ in order for 
$U_f^\dagger$ to correctly uncompute $U_f$.

\section{Correctness of the logical k-AND termination}\label{app:proof}

In this appendix we establish the precise form of the relative phase $D$ in the circuit constructions in \cref{fig:tla}, and prove correctness of the termination circuit. To do so, we first establish the form of the relative phase in the single dirty ancilla $X^\bullet$.

\begin{proposition}\label{prop:a1}
\begin{align*}
	\scalebox{0.8}{
	\Qcircuit @C=.5em @R=0.2em @!R {
		& \qw & \ctrl{5} & \qw & \qw \\
		& \vdots & & \vdots & \\
		\push{\rule{0em}{.5em}} & \qw & \ctrl{3} & \qw & \qw \\
		& \qw & \ctrl{2} & \qw & \qw \\
		& \qw & \ctrlb{1} & \qw & \qw \\
		& \qw & \gate{X^\bullet} & \qw & \qw
	}
	}
	~~&\mp{9}{=}~~
	\scalebox{0.8}{
	\Qcircuit @C=.5em @R=0.2em @!R {
		& \qw & \ctrl{5} & \ctrl{4} & \qw & \qw \\
		& \vdots & & & \vdots & \\
		\push{\rule{0em}{.5em}} & \qw & \ctrl{3} & \ctrl{2} & \qw & \qw \\
		& \qw & \ctrl{2} & \ctrlb{1} & \qw & \qw \\
		& \qw & \qw & \gate{Z^\bullet} & \qw & \qw \\
		& \qw & \targ & \qw & \qw
	}
	}
\end{align*}
\end{proposition}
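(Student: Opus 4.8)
Since both sides are generalized permutations up to a relative phase, it suffices to verify the identity on computational basis states while keeping track of the accumulated phase. The plan is to unfold the left-hand side using its definition from \cref{ex:cxbulletdirty}, namely the explicit circuit of \cref{circ:lambdaxbulletdirty}, and to evaluate that circuit directly on a basis state $\ket{\x}\ket{a}\ket{y}$, where $\x$ are the controls, $a$ is the dirty ancilla, and $y$ is the target. Read left to right, the circuit is: a Hadamard on $a$; a Toffoli gate with controls $x_k$ and $a$ and target $y$; a diagonal gate $D'$ on the controls; a multiply-controlled $X$ writing $P = x_1\cdots x_{k-1}$ onto $a$; the same Toffoli on $x_k,a,y$ again; and a final Hadamard on $a$.

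The evaluation is short. After the first Hadamard, $a$ is $\tfrac1{\sqrt 2}\sum_z (-1)^{az}\ket z$; the first Toffoli sends $\ket y$ to $\ket{y\oplus x_k z}$; the inner multiply-controlled $X$ sends $\ket z$ to $\ket{z\oplus P}$, and reindexing the sum over $z$ exposes an overall factor $(-1)^{aP}$; the second Toffoli then leaves the target in $\ket{y\oplus x_k P}=\ket{y\oplus x_1\cdots x_k}$ by the identity $x_k(z\oplus P)\oplus x_k z = x_k P$, now independent of the summation variable; and the final Hadamard returns $a$ to $\ket a$. Writing $e^{ig(\x)}$ for the phase contributed by $D'$, the net map is $\ket{\x}\ket{a}\ket{y}\mapsto e^{ig(\x)}(-1)^{a\,x_1\cdots x_{k-1}}\ket{\x}\ket{a}\ket{y\oplus x_1\cdots x_k}$. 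Now read this off against the right-hand side: the action $\ket y\mapsto\ket{y\oplus x_1\cdots x_k}$ is exactly the plain $\Lambda_k(X)$ onto the target; the factor $(-1)^{a\,x_1\cdots x_{k-1}}$ is exactly a multiply-controlled $Z$ from the controls $x_1,\dots,x_{k-1}$ onto the ancilla; and $e^{ig(\x)}$ is a relative phase supported on the controls alone. Packaging the latter two as a $Z^\bullet$ on the ancilla wire is precisely the right-hand side.

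The computation itself is short and mechanical, so the main obstacle is not a single hard step but making sure the phases are accounted for correctly: one must confirm that every accumulated phase --- the explicit diagonal $D'$ together with any garbage produced by a recursively instantiated inner $X^\bullet$ --- is supported on the control wires only, so that nothing couples $a$ to $y$ and the identity holds exactly up to the declared relative phase; when the inner gate is itself an $X^\bullet$ this is the part that genuinely needs an induction on the number of controls, with base case Selinger's $\Lambda_2(iX)$ of \cref{fig:ix}. Beyond that, the only care needed is the index bookkeeping in the reindexing of the sum and the closing step in which $\tfrac1{\sqrt2}\sum_z(-1)^{a(z\oplus P)}\ket z = (-1)^{aP}\,\tfrac1{\sqrt2}\sum_z(-1)^{az}\ket z$ is mapped by the final Hadamard back to $(-1)^{aP}\ket a$. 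I would sanity-check the outcome against \cref{fig:ix} and \cref{fig:maslov} on the smallest cases before using it to fix the relative phases in the logical $k$-AND constructions.
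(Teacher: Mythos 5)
Your proof is correct, and it takes a genuinely different route from the paper's. You evaluate the unfolded circuit of \cref{circ:lambdaxbulletdirty} directly on a basis state $\ket{\x}\ket{a}\ket{y}$, tracking the superposition $\tfrac1{\sqrt2}\sum_z(-1)^{az}\ket{z}$ on the ancilla through the two Toffolis and the inner $\Lambda_{k-1}(X)$, and the reindexing step correctly extracts the phase $(-1)^{a x_1\cdots x_{k-1}}$ while the identity $x_k(z\oplus P)\oplus x_k z = x_k P$ decouples the target from the summation variable; this exhibits the non-boxed part of the $Z^\bullet$ explicitly and shows the residual diagonal $D'$ is supported on the controls, which is exactly what is needed to read off the right-hand side. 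The paper instead argues diagrammatically: it appends the identity $\bigl(D\,\Lambda Z\bigr)^\dagger\bigl(D\,\Lambda Z\bigr)$ after the final Hadamard, where $\Lambda Z$ is the multiply-controlled $Z$ on $x_1,\dots,x_{k-1},a$, and recognizes the first half as the exact $\Lambda_k(X)$ construction of \cref{circ:lambdaxdirty} and the second half as the definition of the $Z^\bullet$ gate. The paper's rewriting is shorter because it reuses an already-established exact identity and never names the phase; your calculation is self-contained and has the advantage of actually producing the explicit form $(-1)^{a x_1\cdots x_{k-1}}$ of the relative phase, which is the stated purpose of the appendix. Your closing caveat --- that when the inner $\Lambda_{k-1}(X)$ is itself instantiated recursively as an $X^\bullet$ one must check by induction that its garbage phase stays on the control wires --- is a real issue, but it is one the paper's proof also elides (both arguments take the inner gate to be exact as drawn), so it does not constitute a gap relative to the paper.
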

\begin{proof}
	\begin{align*}
	\scalebox{0.8}{
	\Qcircuit @C=.5em @R=0.2em @!R {
		& \qw & \ctrl{5} & \qw & \qw \\
		& \vdots & & \vdots & \\
		\push{\rule{0em}{.5em}} & \qw & \ctrl{3} & \qw & \qw \\
		& \qw & \ctrl{2} & \qw & \qw \\
		& \qw & \ctrlb{1} & \qw & \qw \\
		& \qw & \gate{X^\bullet} & \qw & \qw
	}
	}
	~~\mp{9}{=}~~
	\scalebox{0.8}{
	\Qcircuit @C=.5em @R=0.2em @!R {
		& \qw & \qw & \qw & \multigate{3}{D} & \ctrl{3} & \qw & \qw
			& \qw  & \qw \\
		& \vdots & & & & & & & \vdots & \\
		& \qw & \qw & \qw & \ghost{D} & \ctrl{2} & \qw
			& \qw & \qw & \qw \\
		& \qw & \qw & \ctrl{2} & \ghost{D} & \qw & \ctrl{2}  & \qw
			& \qw & \qw \\
		& \qw & \gate{H} & \ctrl{1} & \qw & \targ & \ctrl{1} & \gate{H} 
			& \qw & \qw \\
		& \qw & \qw & \targ & \qw & \qw & \targ & \qw & \qw
			& \qw
	}
	}
	~~\mp{9}{=}~~
	\scalebox{0.8}{
	\Qcircuit @C=.5em @R=0.2em @!R {
		& \qw & \qw & \qw & \multigate{3}{D} & \ctrl{3} & \qw  & \qw
			& \ctrl{3} & \multigate{3}{D^\dagger} & \multigate{3}{D} & \ctrl{3} & \qw  & \qw \\
		& \vdots & & & & & & & & & \vdots & \\
		& \qw & \qw & \qw & \ghost{D} & \ctrl{2}  & \qw
			& \qw & \ctrl{2} & \ghost{D^\dagger} & \ghost{D} & \ctrl{2}  & \qw & \qw \\
		& \qw & \qw & \ctrl{2} & \ghost{D} & \qw & \ctrl{2}  & \qw
			& \qw & \ghost{D^\dagger} & \ghost{D} & \qw & \qw & \qw \\
		& \qw & \gate{H} & \ctrl{1} & \qw & \targ & \ctrl{1} & \gate{H}
			& \ctrl{0} & \qw & \qw & \ctrl{0} & \qw & \qw \\
		& \qw & \qw & \targ & \qw & \qw & \targ & \qw  & \qw
			& \qw & \qw & \qw & \qw & \qw
	}
	}
	~~\mp{9}{=}~~
	\scalebox{0.8}{
	\Qcircuit @C=.5em @R=0.2em @!R {
		& \qw & \ctrl{5} & \ctrl{4} & \qw & \qw \\
		& \vdots & & & \vdots & \\
		\push{\rule{0em}{.5em}} & \qw & \ctrl{3} & \ctrl{2} & \qw & \qw \\
		& \qw & \ctrl{2} & \ctrlb{1} & \qw & \qw \\
		& \qw & \qw & \gate{Z^\bullet} & \qw & \qw \\
		& \qw & \targ & \qw & \qw
	}
	}
	\end{align*}
\end{proof}

We now establish the correctness of the circuit constructions in \cref{fig:tla}, and give explicit forms for the relative phases.

\begin{proposition}\label{prop:a2}
\begin{align*}
	\scalebox{0.8}{
		\Qcircuit @C=.5em @R=0em @!R {
		& \qw & \qw & \qw & \qw & \ctrl{4} & \qw & \qw & \qw  \\
		& \vdots & & & & & & \vdots & \\
		& \qw & \qw & \qw & \qw & \qw & \qw & \qw & \qw  \\
		& \qw & \qw & \qw & \qw & \ctrl{2} & \qw & \qw & \qw  \\
		& \qw & \qw & \qw & \qw & \ctrl{1} & \qw & \qw & \qw  \\
		\push{\rule{0em}{1.45em}} & & & &  \lstick{0} & \gate{X^\star} & \gate{S^\dagger} & \qw & \qw  \\
	}
	}
	~~\mp{9}{=}~~
	\scalebox{0.8}{
	\Qcircuit @C=.5em @R=0em @!R {
		& \qw & \ctrl{4} & \ctrl{5} & \qw & \qw  \\
		& \vdots & & & \vdots & \\
		& \qw & \ctrl{2} & \ctrl{3} & \qw & \qw  \\
		& \qw & \ctrlb{1} & \ctrl{2} & \qw & \qw  \\
		& \qw & \gate{Z^\bullet} & \ctrl{1} & \qw & \qw  \\
		\push{\rule{0em}{1.45em}} & & & {} & \qw & \qw  \\
	}
	}
\end{align*}
\end{proposition}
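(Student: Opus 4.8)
The statement is a circuit identity, and I would establish it by a chain of rewrites in the style of the proof of Proposition~\ref{prop:a1}. First I would expand the left-hand side by substituting for $\Lambda_k(X^\star)$ its defining Clifford+$T$ circuit from \cref{ex:toffolis}, which is a Hadamard-conjugated sandwich built from two $\Lambda_1(X)$ gates, a $T/T^\dagger$ ladder, and one nested $\Lambda_{k-1}(X^\bullet)$ acting on the ancilla. Since the ancilla starts in $\ket{0}$, after the opening Hadamard it is in the dirty state $\ket{+}$, so I would invoke the ``push all computation to the phase space'' principle of \cref{sec:ancillafree}: each controlled addition $\ket{y\oplus f(\x)}\bra{y}$ into the ancilla becomes a phase term $(-1)^{yf(\x)}\ket{y}\bra{y}$, so the body of $\Lambda_k(X^\star)$ collapses to accumulating a phase of the form $i^{c_1(\x)}(-1)^{y\,c_2(\x)}$, after which the closing Hadamard turns the state back into one that writes $x_1\cdots x_k$ onto the ancilla.

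Next I would rewrite the nested $\Lambda_{k-1}(X^\bullet)$ using Proposition~\ref{prop:a1}, which trades it for a genuine multiply-controlled $X$ preceded by a diagonal $Z^\bullet$ correction on the remaining controls. Since diagonal gates commute with controls, this correction commutes all the way to the front of the circuit past the surrounding $\Lambda_1(X)$ and controlled-$X$ gates; the one thing to check is that it acts only on control wires and never on the Hadamard-conjugated ancilla, so that the Hadamards do not obstruct the move. After this step the part of the circuit touching the ancilla is an honest reversible computation of $x_1\cdots x_k$, together with a residual phase coming from the single collapsed $iX$ factor of Selinger's construction and the $T$-ladder; as in the $\Lambda_k(iX)$ identity of \cref{ex:toffolis}, this residual phase attaches a factor $i^{x_1\cdots x_k}$ to the product together with a function depending only on the controls.

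It then remains to apply the trailing $S^\dagger$. The ancilla now holds $x_1\cdots x_k$, so $S^\dagger$ contributes $(-i)^{x_1\cdots x_k}$, and since $i\cdot(-i)=1$ this exactly cancels the factor $i^{x_1\cdots x_k}$ left behind by the $iX$ factor --- the same cancellation used in Jones's construction. The phase that survives depends only on $x_1,\dots,x_{k-1}$, and together with the $Z^\bullet$ correction commuted out above it assembles into precisely the $\Lambda_{k-1}(Z^\bullet)$ of the right-hand side, while the classical remainder is exactly $\Lambda_k(X)$ onto the cleanly returned ancilla. I would finish by reading off the explicit form of the surviving diagonal, which is what pins down the relative phase $D$ used in \cref{fig:tla}.

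The main obstacle is the fourth-root-of-unity phase bookkeeping. Both $S^\dagger$ and Selinger's $iX$ construction introduce $\pm i$ phases, and the argument must track exactly which monomial in $x_1,\dots,x_k$ each $\pm i$ and $\pm 1$ is attached to, verify that the target-dependent (that is, $y$-dependent) contributions vanish once the ancilla is fixed to $\ket{0}$, and confirm that the leftover controls-only phase is consistent with the recursive form of the $\bullet$-phase produced by Proposition~\ref{prop:a1}. Threading the ``clean $\to$ dirty $\to$ product-carrying'' life cycle of the ancilla through the nested $\Lambda_{k-1}(X^\bullet)$, which itself was designed to run on a \emph{dirty} ancilla, is the other place where care is needed.
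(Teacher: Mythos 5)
Your proposal is correct and follows essentially the same route as the paper's proof: expand $\Lambda_k(X^\star)$ via \cref{circ:cxstar}, apply \cref{prop:a1} to trade the nested $\Lambda_{k-1}(X^\bullet)$ for an exact multiply-controlled $X$ plus a diagonal $Z^\bullet$ correction commuted to the front, recognize the remaining $T$-ladder as a generalized $\Lambda_2(iX)$ contributing $i^{x_1\cdots x_k}$, and cancel that against the trailing $S^\dagger$ on the clean ancilla. The only cosmetic difference is that you resolve the $T$-ladder by explicit phase-polynomial bookkeeping where the paper does it by a diagrammatic rewrite into a controlled-$S$ followed by $\Lambda_k(X)$, but the content is the same.
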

\begin{proof}
\begin{align*}
	\scalebox{0.8}{
		\Qcircuit @C=.5em @R=0em @!R {
		& \qw & \qw & \qw & \qw & \ctrl{4} & \qw & \qw & \qw  \\
		& \vdots & & & & & & \vdots & \\
		& \qw & \qw & \qw & \qw & \ctrl{3} & \qw & \qw & \qw  \\
		& \qw & \qw & \qw & \qw & \ctrl{2} & \qw & \qw & \qw  \\
		& \qw & \qw& \qw & \qw & \ctrl{1} & \qw & \qw & \qw  \\
		\push{\rule{0em}{1.45em}} & & & &  \lstick{0} & \gate{X^\star} & \gate{S^\dagger} & \qw & \qw  \\
	}
	}
	~~&\mp{9}{=}~
	\scalebox{0.8}{
	\Qcircuit @C=.5em @R=0em @!R {
		& \qw & \qw& \qw & \qw & \qw & \qw & \qw & \qw & \ctrl{4} & \qw & \qw 
			& \qw & \qw & \qw & \qw & \qw \\
		& \vdots & & & & & & & & & & & & & & \vdots & \\
		& \qw & \qw & \qw & \qw & \qw & \qw & \qw & \qw & \ctrl{1} & \qw & \qw 
			& \qw & \qw & \qw & \qw & \qw  \\
		& \qw & \qw& \qw & \qw & \qw & \qw & \qw & \qw & \ctrl{2} & \qw & \qw 
			& \qw & \qw & \qw & \qw & \qw \\
		& \qw & \qw& \qw & \qw & \qw & \qw & \ctrl{1} & \qw & \ctrlb{1} & \qw & \ctrl{1} 
			& \qw & \qw & \qw & \qw & \qw \\
		\push{\rule{0em}{1.45em}} & & & &  \lstick{0} & \gate{H} & \gate{T} & \targ & \gate{T^\dagger} 
			& \gate{X^\bullet} & \gate{T} & \targ & \gate{T^\dagger} 
			& \gate{H} & \gate{S^\dagger} & \qw & \qw
	}
	} \\ 
	~~&\mp{9}{=}~
	\scalebox{0.8}{
	\Qcircuit @C=.5em @R=0em @!R {
		& \qw & \qw & \qw & \qw & \ctrl{3} & \qw & \qw & \qw & \ctrl{4} & \qw & \qw 
			& \qw & \qw & \qw & \qw & \qw \\
		& \vdots & & & & & & & & & & & & & & \vdots & \\
		& \qw & \qw & \qw & \qw & \ctrl{2} & \qw  & \qw & \qw & \ctrl{1} & \qw & \qw 
			& \qw & \qw & \qw & \qw & \qw  \\
		& \qw & \qw & \qw & \qw & \ctrlb{1} & \qw & \qw & \qw & \ctrl{2} & \qw & \qw 
			& \qw & \qw & \qw & \qw & \qw \\
		& \qw & \qw & \qw & \qw & \gate{Z^\bullet}  & \qw & \ctrl{1} & \qw & \qw & \qw & \ctrl{1} 
			& \qw & \qw & \qw & \qw & \qw \\
		\push{\rule{0em}{1.45em}} & & & &  \lstick{0} & \gate{H} & \gate{T} & \targ & \gate{T^\dagger} 
			& \targ & \gate{T} & \targ & \gate{T^\dagger} 
			& \gate{H} & \gate{S^\dagger} & \qw & \qw
	}
	} \\ 
	~~&\mp{9}{=}~
	\scalebox{0.8}{
		\Qcircuit @C=.5em @R=0em @!R {
		& \qw & \qw & \qw & \qw & \ctrl{3} & \ctrl{3} & \ctrl{4} & \qw & \qw & \qw  \\
		& \vdots & & & & & & & & \vdots & \\
		& \qw & \qw & \qw & \qw & \ctrl{2} & \ctrl{2} & \ctrl{1} & \qw & \qw & \qw  \\
		& \qw & \qw & \qw & \qw & \ctrlb{1} & \ctrl{1} & \ctrl{2} & \qw & \qw & \qw  \\
		& \qw & \qw & \qw & \qw & \gate{Z^\bullet} & \gate{S} & \ctrl{1} & \qw & \qw & \qw  \\
		\push{\rule{0em}{1.45em}} & & & & \lstick{0} & \qw & \qw & \targ & \gate{S^\dagger} & \qw & \qw  \\
	}
	} \\ 
	~~&\mp{9}{=}~
	\scalebox{0.8}{
		\Qcircuit @C=.5em @R=0em @!R{
		& \qw & \qw & \ctrl{4} & \ctrl{5} & \qw & \qw  \\
		& \vdots & & & & \vdots & \\
		& \qw & \qw & \ctrl{2} & \ctrl{3} & \qw & \qw  \\
		& \qw & \qw & \ctrlb{1} & \ctrl{2} & \qw & \qw  \\
		& \qw & \qw & \gate{Z^\bullet} & \ctrl{1} & \qw & \qw  \\
		\push{\rule{0em}{1.45em}} & & &  \lstick{0} & \targ & \qw & \qw  \\
	}
	}
	~~\mp{9}{=}~~
	\scalebox{0.8}{
	\Qcircuit @C=.5em @R=0em @!R {
		& \qw & \ctrl{4} & \ctrl{5} & \qw & \qw  \\
		& \vdots & & & \vdots & \\
		& \qw & \ctrl{2} & \ctrl{3} & \qw & \qw  \\
		& \qw & \ctrlb{1} & \ctrl{2} & \qw & \qw  \\
		& \qw & \gate{Z^\bullet} & \ctrl{1} & \qw & \qw  \\
		\push{\rule{0em}{1.45em}} & & & {} & \qw & \qw  \\
	}
	}
\end{align*}
\end{proof}

\begin{proposition}\label{prop:a3}
\vspace*{-1em}
\begin{align*}
	\scalebox{0.8}{
		\Qcircuit @C=.5em @R=0em @!R {
		& \qw & \qw & \qw & \qw & \qw & \qw & \qw & \qw & \qw & \ctrl{5} 
			& \qw & \qw & \qw & \qw & \qw & \qw  \\
		& \vdots & & & & & & & & & & & & & & \vdots & \\
		& \qw & \qw & \qw & \qw & \qw & \qw & \qw & \qw & \qw & \ctrl{3} 
			& \qw & \qw & \qw & \qw & \qw & \qw  \\
		& \qw & \qw & \qw & \qw & \qw & \qw & \qw & \ctrl{2} & \qw & \ctrlb{2} 
			& \qw & \ctrl{2} & \qw & \qw & \qw & \qw  \\
		& \qw & \qw & \qw & \qw & \qw & \qw & \qw & \ctrl{1} & \ctrl{1} & \qw 
			& \ctrl{1} & \ctrl{1} & \qw & \qw & \qw & \qw  \\
		& \qw & \gate{H} & \meter\cwx[1]  & & & {|} & \qw & \gate{iX} 
			& \targ & \gate{{}^\bullet Z} & \targ & \gate{-iX} & \qw & {|}\qw & \\
		\push{\rule{0em}{1.45em}} & \cw & \cw & \cctrl{0} & \cw & \cw & \cw & \cw & \cctrl{-1} & \cw 
			& \cw & \cw & \cctrl{-1} & \cw & \cw & \cw & \cw
	}
	}
	~\mp{9}{=}~
	\scalebox{0.8}{
		\Qcircuit @C=.5em @R=0em @!R {
		& \qw & \ctrl{5} & \ctrl{3} & \qw & \qw  \\
		& \vdots & & & \vdots & \\
		& \qw & \ctrl{3} & \ctrl{2} & \qw & \qw  \\
		& \qw & \ctrl{2} & \ctrlb{1} & \qw & \qw  \\
		& \qw & \ctrl{1} &  \gate{{}^\bullet Z} & \qw & \qw  \\
		\push{\rule{0em}{1.45em}} & \qw & {}\qw  & & &  \\
		& & & &
	}
	}
\end{align*}
\vspace*{-1em}
\end{proposition}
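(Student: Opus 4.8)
The plan is to prove the identity by step-by-step circuit rewriting, in the style of \cref{prop:a1} and \cref{prop:a2}, the one extra ingredient being the measurement--termination dictionary set up earlier in this section: terminating a register in state $\ket{f(\x)}$ is the same as applying a Hadamard, measuring in the computational basis, and then applying a $(-1)^{f(\x)}$ phase correction conditioned on the outcome (dropping, following \cite{ku17}, the single-qubit gates that precede a discarded measurement).

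First I would unfold the left-hand side, which is the termination circuit of \cref{fig:tla}. The terminated register holds $\ket{(x_1\cdots x_k)}$, so after the $X$-basis measurement with outcome $m$ the only task left is the $(-1)^{m\,x_1\cdots x_k}$ correction. I would then read the post-measurement block as realizing exactly this correction: a fresh ancilla (the reused measured qubit) carries the factor $x_{k-1}x_k$, introduced and later removed by $\Lambda_2(iX)$ and $\Lambda_2(-iX)$ --- which between them contribute a trivial $i^{x_{k-1}x_k}\cdot i^{-x_{k-1}x_k}$ --- and in between the ${}^\bullet Z$ gate applies $\Lambda_{k-1}(Z)$ on $x_1,\dots,x_{k-1}$ (up to relative phase in the controls) targeting that ancilla, the whole block arranged so that the correction fires precisely when $m=1$.

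The heart of the proof is to collect the residual phases and show they amount to exactly a ${}^\bullet Z$ on $x_1,\dots,x_{k-1}$, independent of the outcome. For this I would invoke \cref{prop:a1}, which fixes the relative phase of a ${}^\bullet Z$ gate, write ${}^\bullet Z$ as a genuine $\Lambda_{k-1}(Z)$ composed with a diagonal gate $D$, and quasi-commute $D$ outward past the controlled-$X$ gates and the $\Lambda_2(\pm iX)$ pair, using the quasi-commutation of diagonals through controls exploited throughout the paper. Since the ancilla carries $x_{k-1}x_k$ between the two $iX$ gates, the ancilla-dependent part of $D$ turns into a diagonal in $x_1,\dots,x_k$; together with the $i^{\pm x_{k-1}x_k}$ factors it has to collapse to precisely the relative phase of a ${}^\bullet Z$ on $x_1,\dots,x_{k-1}$, with no dependence on the measured qubit and nothing that would survive termination as an extra phase. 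Once this residual diagonal is seen to be outcome-independent and pulled to the end, the remaining ``Hadamard, measure, $(-1)^{x_1\cdots x_k}$-if-$m=1$'' block collapses --- by the measurement--termination dictionary read backwards --- to the bare termination of $\ket{(x_1\cdots x_k)}$, leaving the termination followed by ${}^\bullet Z$ on $x_1,\dots,x_{k-1}$, which is the right-hand side. Agreement on basis inputs $\ket{\x}\ket{(x_1\cdots x_k)}$ and consistency with the dual ``initialization'' identity of \cref{prop:a2} serve as cross-checks.

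The main obstacle is exactly this phase accounting: tracking the diagonal inside ${}^\bullet Z$ as its ancilla argument is rewritten as $x_{k-1}x_k$, folding in the $i^{\pm x_{k-1}x_k}$ from the $\Lambda_2(\pm iX)$ gates, and verifying the net phase is the claimed ${}^\bullet Z$ and nothing more --- in particular that no target-dependent term survives the termination. As with the ancilla-free constructions of the preceding section, this bookkeeping is cleanest carried out in the $\omega$-exponent (Fourier) representation of the diagonal part.
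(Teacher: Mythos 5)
Your plan is essentially the paper's proof read in the opposite direction: the paper defers the measurement, turns the classical controls into quantum controls, and then uses \cref{prop:a1} to rewrite the inner block as an exact multiply-controlled $Z$ (which, conjugated through the Hadamard, becomes the $\Lambda_k(X)$ that uncomputes the product register) composed with the residual ${}^\bullet Z$, whereas your branch-by-branch phase accounting establishes the same two facts --- an outcome-dependent factor $(-1)^{m\,x_1\cdots x_k}$ and an outcome-independent residual equal to ${}^\bullet Z$ --- before invoking the measurement/termination dictionary. One detail in your description of the block is off and would derail the bookkeeping if taken literally: the two CNOTs from $x_k$ to the ancilla are \emph{unconditional}, so when the ${}^\bullet Z$ fires the ancilla holds $x_{k-1}x_k\oplus x_k$ (outcome $1$) or $x_k$ (outcome $0$), not $x_{k-1}x_k$ or $0$; these CNOTs supply the common factor $(-1)^{x_1\cdots x_{k-2}x_k}$, without which the outcome-independent residual would be only the bare diagonal $D^\dagger(x_1,\dots,x_{k-1})$ rather than the full ${}^\bullet Z$, and would then fail to cancel the $Z^\bullet$ left by the initialization in \cref{prop:a2}. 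With that correction your phase accounting closes and agrees with the paper's derivation.
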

\begin{proof}
\vspace*{-1em}
\begin{align*}
	\scalebox{0.8}{
		\Qcircuit @C=.5em @R=0em @!R {
		& \qw & \qw & \qw & \qw & \qw & \qw & \qw & \qw & \qw & \ctrl{5} 
			& \qw & \qw & \qw & \qw & \qw & \qw  \\
		& \vdots & & & & & & & & & & & & & & \vdots & \\
		& \qw & \qw & \qw & \qw & \qw & \qw & \qw & \qw & \qw & \ctrl{3} 
			& \qw & \qw & \qw & \qw & \qw & \qw  \\
		& \qw & \qw & \qw & \qw & \qw & \qw & \qw & \ctrl{2} & \qw & \ctrlb{2} 
			& \qw & \ctrl{2} & \qw & \qw & \qw & \qw  \\
		& \qw & \qw & \qw & \qw & \qw & \qw & \qw & \ctrl{1} & \ctrl{1} & \qw 
			& \ctrl{1} & \ctrl{1} & \qw & \qw & \qw & \qw  \\
		& \qw & \gate{H} & \meter\cwx[1]  & & & {|} & \qw & \gate{iX} 
			& \targ & \gate{{}^\bullet Z} & \targ & \gate{-iX} & \qw & {|}\qw & \\
		\push{\rule{0em}{1.45em}} & \cw & \cw & \cctrl{0} & \cw & \cw & \cw & \cw & \cctrl{-1} & \cw 
			& \cw & \cw & \cctrl{-1} & \cw & \cw & \cw & \cw
	}
	}
	~&\mp{9}{=}~
	\scalebox{0.8}{
		\Qcircuit @C=.5em @R=0em @!R {
		& \qw & \qw & \qw & \qw & \qw & \qw & \qw & \qw & \qw & \ctrl{5} 
			& \qw & \qw & \qw & \qw & \qw & \qw  \\
		& \vdots & & & & & & & & & & & & & & \vdots & \\
		& \qw & \qw & \qw & \qw & \qw & \qw & \qw & \qw & \qw & \ctrl{3} 
			& \qw & \qw & \qw & \qw & \qw & \qw  \\
		& \qw & \qw & \qw & \qw & \qw & \qw & \qw & \ctrl{2} & \qw & \ctrlb{2} 
			& \qw & \ctrl{2} & \qw & \qw & \qw & \qw  \\
		& \qw & \qw & \qw & \qw & \qw & \qw & \qw & \ctrl{1} & \ctrl{1} & \qw 
			& \ctrl{1} & \ctrl{1} & \qw & \qw & \qw & \qw  \\
		& & & & & & {|} & \qw & \gate{iX} 
			& \targ & \gate{{}^\bullet Z} & \targ & \gate{-iX} & \qw & {|}\qw & \\
		\push{\rule{0em}{1.45em}} & \qw & \gate{H} & \meter  & \cw & \cw & \cw & \cw & \cctrl{-1} & \cw 
			& \cw & \cw & \cctrl{-1} & \cw & \cw & \cw & \cw
	}
	} \\
	~&\mp{9}{=}~
	\scalebox{0.8}{
		\Qcircuit @C=.5em @R=0em @!R {
		& \qw & \qw & \qw & \qw & \qw & \qw & \qw & \qw & \qw & \ctrl{5} 
			& \qw & \qw & \qw & \qw & \qw & \qw  \\
		& \vdots & & & & & & & & & & & & & & \vdots & \\
		& \qw & \qw & \qw & \qw & \qw & \qw & \qw & \qw & \qw & \ctrl{3} 
			& \qw & \qw & \qw & \qw & \qw & \qw  \\
		& \qw & \qw & \qw & \qw & \qw & \qw & \qw & \ctrl{2} & \qw & \ctrlb{2} 
			& \qw & \ctrl{2} & \qw & \qw & \qw & \qw  \\
		& \qw & \qw & \qw & \qw & \qw & \qw & \qw & \ctrl{1} & \ctrl{1} & \qw 
			& \ctrl{1} & \ctrl{1} & \qw & \qw & \qw & \qw  \\
		& & & & & & {|} & \qw & \gate{iX} 
			& \targ & \gate{{}^\bullet Z} & \targ & \gate{-iX} & \qw & {|}\qw & \\
		\push{\rule{0em}{1.45em}} & \qw & \gate{H} & \qw & \qw & \qw & \qw & \qw & \ctrl{-1} & \qw 
			& \qw & \qw & \ctrl{-1} & \qw & \qw & \meter  
	}
	} \\
	~&\mp{9}{=}~
	\scalebox{0.8}{
		\Qcircuit @C=.5em @R=0em @!R {
		& \qw & \qw & \qw & \qw & \qw & \qw & \qw & \qw & \qw & \ctrl{5} 
			& \ctrl{2} & \qw & \qw & \qw & \qw & \qw  \\
		& \vdots & & & & & & & & & & & & & & \vdots & \\
		& \qw & \qw & \qw & \qw & \qw & \qw & \qw & \qw & \qw & \ctrl{3} 
			& \ctrlb{1} & \qw & \qw & \qw & \qw & \qw  \\
		& \qw & \qw & \qw & \qw & \qw & \qw & \qw & \ctrl{2} & \qw & \qw 
			& \gate{{}^\bullet Z} & \ctrl{2} & \qw & \qw & \qw & \qw  \\
		& \qw & \qw & \qw & \qw & \qw & \qw & \qw & \ctrl{1} & \ctrl{1} & \qw 
			& \ctrl{1} & \ctrl{1} & \qw & \qw & \qw & \qw  \\
		& & & & & & {|} & \qw & \gate{iX}
			& \targ & \ctrl{0} & \targ & \gate{-iX} & \qw & {|}\qw & \\
		\push{\rule{0em}{1.45em}} & \qw & \gate{H} & \qw & \qw & \qw & \qw & \qw & \ctrl{-1} & \qw 
			& \qw & \qw & \ctrl{-1} & \qw & \qw & \meter  
	}
	} \\
	~&\mp{9}{=}~
	\scalebox{0.8}{
		\Qcircuit @C=.5em @R=0em @!R {
		& \qw & \qw & \qw & \qw & \qw & \qw & \qw & \qw & \ctrl{4} & \ctrl{2} & \ctrl{5} 
			& \qw & \qw & \qw & \qw & \qw  \\
		& \vdots & & & & & & & & & & & & & & \vdots & \\
		& \qw & \qw & \qw & \qw & \qw & \qw & \qw & \qw & \ctrl{2} & \ctrlb{1} & \ctrl{3} 
			& \qw & \qw & \qw & \qw & \qw  \\
		& \qw & \qw & \qw & \qw & \qw & \qw & \qw & \ctrl{2} & \qw & \gate{{}^\bullet Z} & \qw
			& \ctrl{2} & \qw & \qw & \qw & \qw  \\
		& \qw & \qw & \qw & \qw & \qw & \qw & \qw & \ctrl{1} & \ctrl{0} & \qw & \qw
			& \ctrl{1} & \qw & \qw & \qw & \qw  \\
		& & & & & & {|} & \qw & \gate{iX} 
			& \qw & \qw & \ctrl{0} & \gate{-iX} & \qw & {|}\qw & \\
		\push{\rule{0em}{1.45em}} & \qw & \gate{H} & \qw & \qw & \qw & \qw & \qw & \ctrl{-1} & \qw 
			& \qw & \qw & \ctrl{-1} & \qw & \meter  
	}
	} \\
	~&\mp{9}{=}~
	\scalebox{0.8}{
		\Qcircuit @C=.5em @R=0em @!R {
		& \qw & \qw & \qw & \qw & \qw & \qw & \qw & \ctrl{4} & \ctrl{5} 
			& \qw & \qw & \qw & \qw  \\
		& \vdots & & & & & & & & & & & & \vdots & \\
		& \qw & \qw & \qw & \qw & \qw & \qw & \qw & \ctrl{2} & \ctrl{3} 
			& \qw & \qw & \qw & \qw  \\
		& \qw & \qw & \qw & \qw & \qw & \qw & \qw & \ctrlb{1} & \ctrl{2}
			& \qw & \qw & \qw & \qw  \\
		& \qw & \qw & \qw & \qw & \qw & \qw & \qw & \gate{{}^\bullet Z} & \ctrl{1} 
			& \qw & \qw & \qw & \qw  \\
		\push{\rule{0em}{1.45em}} & \qw & \gate{H} & \qw & \qw & \qw & \qw & \qw & \qw 
			& \ctrl{0} & \qw & \qw & \meter  
	}
	}\\
	~&\mp{9}{=}~
	\scalebox{0.8}{
		\Qcircuit @C=.5em @R=0em @!R {
		& \qw & \ctrl{5} & \ctrl{4} & \qw & \qw  \\
		& \vdots & & & & \vdots & \\
		& \qw & \ctrl{3} & \ctrl{2} & \qw & \qw  \\
		& \qw & \ctrl{2} & \ctrlb{1} & \qw & \qw  \\
		& \qw & \ctrl{1} & \gate{{}^\bullet Z} & \qw & \qw  \\
		\push{\rule{0em}{1.45em}} & \qw & \targ & \qw & \lstick{0}  \\
	}
	}
	~\mp{9}{=}~
	\scalebox{0.8}{
		\Qcircuit @C=.5em @R=0em @!R {
		& \qw & \ctrl{5} & \ctrl{3} & \qw & \qw  \\
		& \vdots & & & \vdots & \\
		& \qw & \ctrl{3} & \ctrl{2} & \qw & \qw  \\
		& \qw & \ctrl{2} & \ctrlb{1} & \qw & \qw  \\
		& \qw & \ctrl{1} &  \gate{{}^\bullet Z} & \qw & \qw  \\
		\push{\rule{0em}{1.45em}} & \qw & {}\qw  & & &  \\
		& & & &
	}
	}
\end{align*}
\vspace*{-1em}
\end{proof}


\begin{thebibliography}{24}%
\makeatletter
\providecommand \@ifxundefined [1]{%
 \@ifx{#1\undefined}
}%
\providecommand \@ifnum [1]{%
 \ifnum #1\expandafter \@firstoftwo
 \else \expandafter \@secondoftwo
 \fi
}%
\providecommand \@ifx [1]{%
 \ifx #1\expandafter \@firstoftwo
 \else \expandafter \@secondoftwo
 \fi
}%
\providecommand \natexlab [1]{#1}%
\providecommand \enquote  [1]{``#1''}%
\providecommand \bibnamefont  [1]{#1}%
\providecommand \bibfnamefont [1]{#1}%
\providecommand \citenamefont [1]{#1}%
\providecommand \href@noop [0]{\@secondoftwo}%
\providecommand \href [0]{\begingroup \@sanitize@url \@href}%
\providecommand \@href[1]{\@@startlink{#1}\@@href}%
\providecommand \@@href[1]{\endgroup#1\@@endlink}%
\providecommand \@sanitize@url [0]{\catcode `\\12\catcode `\$12\catcode
  `\&12\catcode `\#12\catcode `\^12\catcode `\_12\catcode `\%12\relax}%
\providecommand \@@startlink[1]{}%
\providecommand \@@endlink[0]{}%
\providecommand \url  [0]{\begingroup\@sanitize@url \@url }%
\providecommand \@url [1]{\endgroup\@href {#1}{\urlprefix }}%
\providecommand \urlprefix  [0]{URL }%
\providecommand \Eprint [0]{\href }%
\providecommand \doibase [0]{https://doi.org/}%
\providecommand \selectlanguage [0]{\@gobble}%
\providecommand \bibinfo  [0]{\@secondoftwo}%
\providecommand \bibfield  [0]{\@secondoftwo}%
\providecommand \translation [1]{[#1]}%
\providecommand \BibitemOpen [0]{}%
\providecommand \bibitemStop [0]{}%
\providecommand \bibitemNoStop [0]{.\EOS\space}%
\providecommand \EOS [0]{\spacefactor3000\relax}%
\providecommand \BibitemShut  [1]{\csname bibitem#1\endcsname}%
\let\auto@bib@innerbib\@empty
\bibitem [{\citenamefont {Grover}(1996)}]{Grover96}%
  \BibitemOpen
  \bibfield  {author} {\bibinfo {author} {\bibfnamefont {L.~K.}\ \bibnamefont
  {Grover}},\ }in\ \href@noop {} {\emph {\bibinfo {booktitle} {Proceedings of
  the Twenty-eighth Annual ACM Symposium on Theory of Computing}}},\ \bibinfo
  {series and number} {STOC '96}\ (\bibinfo  {publisher} {ACM},\ \bibinfo
  {address} {New York, NY, USA},\ \bibinfo {year} {1996})\ pp.\ \bibinfo
  {pages} {212--219}\BibitemShut {NoStop}%
\bibitem [{\citenamefont {Shor}(1997)}]{ar:shor}%
  \BibitemOpen
  \bibfield  {author} {\bibinfo {author} {\bibfnamefont {P.~W.}\ \bibnamefont
  {Shor}},\ }\href {https://doi.org/10.1137/S0097539795293172} {\bibfield
  {journal} {\bibinfo  {journal} {SIAM J. Comput.}\ }\textbf {\bibinfo {volume}
  {26}},\ \bibinfo {pages} {1484} (\bibinfo {year} {1997})}\BibitemShut
  {NoStop}%
\bibitem [{\citenamefont {Bennett}(1989)}]{pebbles}%
  \BibitemOpen
  \bibfield  {author} {\bibinfo {author} {\bibfnamefont {C.~H.}\ \bibnamefont
  {Bennett}},\ }\href {https://doi.org/10.1137/0218053} {\bibfield  {journal}
  {\bibinfo  {journal} {SIAM J. Comput.}\ }\textbf {\bibinfo {volume} {18}},\
  \bibinfo {pages} {766–776} (\bibinfo {year} {1989})}\BibitemShut {NoStop}%
\bibitem [{\citenamefont {Buhrman}\ \emph {et~al.}(2006)\citenamefont
  {Buhrman}, \citenamefont {Cleve}, \citenamefont {Laurent}, \citenamefont
  {Linden}, \citenamefont {Schrijver},\ and\ \citenamefont {Unger}}]{buslcl06}%
  \BibitemOpen
  \bibfield  {author} {\bibinfo {author} {\bibfnamefont {H.}~\bibnamefont
  {Buhrman}}, \bibinfo {author} {\bibfnamefont {R.}~\bibnamefont {Cleve}},
  \bibinfo {author} {\bibfnamefont {M.}~\bibnamefont {Laurent}}, \bibinfo
  {author} {\bibfnamefont {N.}~\bibnamefont {Linden}}, \bibinfo {author}
  {\bibfnamefont {A.}~\bibnamefont {Schrijver}},\ and\ \bibinfo {author}
  {\bibfnamefont {F.}~\bibnamefont {Unger}},\ }in\ \href
  {https://doi.org/10.1109/FOCS.2006.50} {\emph {\bibinfo {booktitle} {2006
  47th Annual IEEE Conference on Foundations of Computer Science}}}\ (\bibinfo
  {publisher} {IEEE Computer Society},\ \bibinfo {address} {Los Alamitos, CA,
  USA},\ \bibinfo {year} {2006})\ pp.\ \bibinfo {pages} {411--419}\BibitemShut
  {NoStop}%
\bibitem [{\citenamefont {Giles}\ and\ \citenamefont {Selinger}(2013)}]{gs13}%
  \BibitemOpen
  \bibfield  {author} {\bibinfo {author} {\bibfnamefont {B.}~\bibnamefont
  {Giles}}\ and\ \bibinfo {author} {\bibfnamefont {P.}~\bibnamefont
  {Selinger}},\ }\href {https://doi.org/10.1103/PhysRevA.87.032332} {\bibfield
  {journal} {\bibinfo  {journal} {Physical Review A}\ }\textbf {\bibinfo
  {volume} {87}},\ \bibinfo {pages} {032332} (\bibinfo {year} {2013})},\
  \Eprint {https://arxiv.org/abs/1212.0506} {arXiv:1212.0506} \BibitemShut
  {NoStop}%
\bibitem [{\citenamefont {Selinger}(2013)}]{s13}%
  \BibitemOpen
  \bibfield  {author} {\bibinfo {author} {\bibfnamefont {P.}~\bibnamefont
  {Selinger}},\ }\href {https://doi.org/10.1103/PhysRevA.87.042302} {\bibfield
  {journal} {\bibinfo  {journal} {Physical Review A}\ }\textbf {\bibinfo
  {volume} {87}},\ \bibinfo {pages} {042302} (\bibinfo {year} {2013})},\
  \Eprint {https://arxiv.org/abs/1210.0974} {arXiv:1210.0974} \BibitemShut
  {NoStop}%
\bibitem [{\citenamefont {Jones}(2013)}]{j13}%
  \BibitemOpen
  \bibfield  {author} {\bibinfo {author} {\bibfnamefont {C.}~\bibnamefont
  {Jones}},\ }\href {https://doi.org/10.1103/PhysRevA.87.022328} {\bibfield
  {journal} {\bibinfo  {journal} {Physical Review A}\ }\textbf {\bibinfo
  {volume} {87}},\ \bibinfo {pages} {022328} (\bibinfo {year} {2013})},\
  \Eprint {https://arxiv.org/abs/1212.5069} {arXiv:1212.5069} \BibitemShut
  {NoStop}%
\bibitem [{\citenamefont {Maslov}(2016)}]{m16}%
  \BibitemOpen
  \bibfield  {author} {\bibinfo {author} {\bibfnamefont {D.}~\bibnamefont
  {Maslov}},\ }\href {https://doi.org/10.1103/PhysRevA.93.022311} {\bibfield
  {journal} {\bibinfo  {journal} {Physical Review A}\ }\textbf {\bibinfo
  {volume} {93}},\ \bibinfo {pages} {022311} (\bibinfo {year} {2016})},\
  \Eprint {https://arxiv.org/abs/1508.03273} {arXiv:1508.03273} \BibitemShut
  {NoStop}%
\bibitem [{\citenamefont {Gidney}(2018)}]{g18}%
  \BibitemOpen
  \bibfield  {author} {\bibinfo {author} {\bibfnamefont {C.}~\bibnamefont
  {Gidney}},\ }\href {https://doi.org/10.22331/q-2018-06-18-74} {\bibfield
  {journal} {\bibinfo  {journal} {{Quantum}}\ }\textbf {\bibinfo {volume}
  {2}},\ \bibinfo {pages} {74} (\bibinfo {year} {2018})}\BibitemShut {NoStop}%
\bibitem [{\citenamefont {{Meuli}}\ \emph {et~al.}(2019)\citenamefont
  {{Meuli}}, \citenamefont {{Soeken}}, \citenamefont {{Campbell}},
  \citenamefont {{Roetteler}},\ and\ \citenamefont {{de Micheli}}}]{mscrd19}%
  \BibitemOpen
  \bibfield  {author} {\bibinfo {author} {\bibfnamefont {G.}~\bibnamefont
  {{Meuli}}}, \bibinfo {author} {\bibfnamefont {M.}~\bibnamefont {{Soeken}}},
  \bibinfo {author} {\bibfnamefont {E.}~\bibnamefont {{Campbell}}}, \bibinfo
  {author} {\bibfnamefont {M.}~\bibnamefont {{Roetteler}}},\ and\ \bibinfo
  {author} {\bibfnamefont {G.}~\bibnamefont {{de Micheli}}},\ }in\ \href
  {https://doi.org/10.1109/ICCAD45719.2019.8942093} {\emph {\bibinfo
  {booktitle} {2019 IEEE/ACM International Conference on Computer-Aided Design
  (ICCAD)}}}\ (\bibinfo {year} {2019})\ pp.\ \bibinfo {pages} {1--8},\ \Eprint
  {https://arxiv.org/abs/1908.01609} {arXiv:1908.01609} \BibitemShut {NoStop}%
\bibitem [{\citenamefont {{Soeken}}\ \emph {et~al.}(2019)\citenamefont
  {{Soeken}}, \citenamefont {{Roetteler}}, \citenamefont {{Wiebe}},\ and\
  \citenamefont {{Micheli}}}]{srwm19}%
  \BibitemOpen
  \bibfield  {author} {\bibinfo {author} {\bibfnamefont {M.}~\bibnamefont
  {{Soeken}}}, \bibinfo {author} {\bibfnamefont {M.}~\bibnamefont
  {{Roetteler}}}, \bibinfo {author} {\bibfnamefont {N.}~\bibnamefont
  {{Wiebe}}},\ and\ \bibinfo {author} {\bibfnamefont {G.~D.}\ \bibnamefont
  {{Micheli}}},\ }\href {https://doi.org/10.1109/TCAD.2018.2859251} {\bibfield
  {journal} {\bibinfo  {journal} {IEEE Transactions on Computer-Aided Design of
  Integrated Circuits and Systems}\ }\textbf {\bibinfo {volume} {38}},\
  \bibinfo {pages} {1675} (\bibinfo {year} {2019})}\BibitemShut {NoStop}%
\bibitem [{\citenamefont {Barenco}\ \emph {et~al.}(1995)\citenamefont
  {Barenco}, \citenamefont {Bennett}, \citenamefont {Cleve}, \citenamefont
  {DiVincenzo}, \citenamefont {Margolus}, \citenamefont {Shor}, \citenamefont
  {Sleator}, \citenamefont {Smolin},\ and\ \citenamefont
  {Weinfurter}}]{bbcdmsssw95}%
  \BibitemOpen
  \bibfield  {author} {\bibinfo {author} {\bibfnamefont {A.}~\bibnamefont
  {Barenco}}, \bibinfo {author} {\bibfnamefont {C.~H.}\ \bibnamefont
  {Bennett}}, \bibinfo {author} {\bibfnamefont {R.}~\bibnamefont {Cleve}},
  \bibinfo {author} {\bibfnamefont {D.~P.}\ \bibnamefont {DiVincenzo}},
  \bibinfo {author} {\bibfnamefont {N.}~\bibnamefont {Margolus}}, \bibinfo
  {author} {\bibfnamefont {P.}~\bibnamefont {Shor}}, \bibinfo {author}
  {\bibfnamefont {T.}~\bibnamefont {Sleator}}, \bibinfo {author} {\bibfnamefont
  {J.~A.}\ \bibnamefont {Smolin}},\ and\ \bibinfo {author} {\bibfnamefont
  {H.}~\bibnamefont {Weinfurter}},\ }\href
  {https://doi.org/10.1103/PhysRevA.52.3457} {\bibfield  {journal} {\bibinfo
  {journal} {Physical Review A}\ }\textbf {\bibinfo {volume} {52}},\ \bibinfo
  {pages} {3457} (\bibinfo {year} {1995})},\ \Eprint
  {https://arxiv.org/abs/quant-ph/9503016} {arXiv:quant-ph/9503016}
  \BibitemShut {NoStop}%
\bibitem [{\citenamefont {{DiVincenzo}}\ and\ \citenamefont
  {{Smolin}}(1994)}]{ds94}%
  \BibitemOpen
  \bibfield  {author} {\bibinfo {author} {\bibfnamefont {D.~P.}\ \bibnamefont
  {{DiVincenzo}}}\ and\ \bibinfo {author} {\bibfnamefont {J.}~\bibnamefont
  {{Smolin}}},\ }in\ \href {https://doi.org/10.1109/PHYCMP.1994.363704} {\emph
  {\bibinfo {booktitle} {Proceedings Workshop on Physics and Computation}}},\
  \bibinfo {series and number} {PhysComp '94}\ (\bibinfo {year} {1994})\ pp.\
  \bibinfo {pages} {14--23},\ \Eprint {https://arxiv.org/abs/9409111}
  {cond-mat:9409111} \BibitemShut {NoStop}%
\bibitem [{\citenamefont {Gosset}\ \emph {et~al.}(2014)\citenamefont {Gosset},
  \citenamefont {Kliuchnikov}, \citenamefont {Mosca},\ and\ \citenamefont
  {Russo}}]{tcount}%
  \BibitemOpen
  \bibfield  {author} {\bibinfo {author} {\bibfnamefont {D.}~\bibnamefont
  {Gosset}}, \bibinfo {author} {\bibfnamefont {V.}~\bibnamefont {Kliuchnikov}},
  \bibinfo {author} {\bibfnamefont {M.}~\bibnamefont {Mosca}},\ and\ \bibinfo
  {author} {\bibfnamefont {V.}~\bibnamefont {Russo}},\ }\href@noop {}
  {\bibfield  {journal} {\bibinfo  {journal} {Quantum Info. Comput.}\ }\textbf
  {\bibinfo {volume} {14}},\ \bibinfo {pages} {1261–1276} (\bibinfo {year}
  {2014})}\BibitemShut {NoStop}%
\bibitem [{\citenamefont {Amy}\ \emph {et~al.}(2014)\citenamefont {Amy},
  \citenamefont {Maslov},\ and\ \citenamefont {Mosca}}]{amm14}%
  \BibitemOpen
  \bibfield  {author} {\bibinfo {author} {\bibfnamefont {M.}~\bibnamefont
  {Amy}}, \bibinfo {author} {\bibfnamefont {D.}~\bibnamefont {Maslov}},\ and\
  \bibinfo {author} {\bibfnamefont {M.}~\bibnamefont {Mosca}},\ }\href
  {https://doi.org/10.1109/TCAD.2014.2341953} {\bibfield  {journal} {\bibinfo
  {journal} {IEEE Transactions on Computer-Aided Design of Integrated Circuits
  and Systems}\ }\textbf {\bibinfo {volume} {33}},\ \bibinfo {pages} {1476}
  (\bibinfo {year} {2014})},\ \Eprint {https://arxiv.org/abs/1303.2042}
  {arXiv:1303.2042} \BibitemShut {NoStop}%
\bibitem [{\citenamefont {Nam}\ \emph {et~al.}(2018)\citenamefont {Nam},
  \citenamefont {Ross}, \citenamefont {Su}, \citenamefont {Childs},\ and\
  \citenamefont {Maslov}}]{nrscm17}%
  \BibitemOpen
  \bibfield  {author} {\bibinfo {author} {\bibfnamefont {Y.}~\bibnamefont
  {Nam}}, \bibinfo {author} {\bibfnamefont {N.~J.}\ \bibnamefont {Ross}},
  \bibinfo {author} {\bibfnamefont {Y.}~\bibnamefont {Su}}, \bibinfo {author}
  {\bibfnamefont {A.~M.}\ \bibnamefont {Childs}},\ and\ \bibinfo {author}
  {\bibfnamefont {D.}~\bibnamefont {Maslov}},\ }\href
  {https://doi.org/10.1038/s41534-018-0072-4} {\bibfield  {journal} {\bibinfo
  {journal} {npj Quantum Information}\ }\textbf {\bibinfo {volume} {4}},\
  \bibinfo {pages} {23} (\bibinfo {year} {2018})},\ \Eprint
  {https://arxiv.org/abs/1710.07345} {arXiv:1710.07345} \BibitemShut {NoStop}%
\bibitem [{Note1()}]{Note1}%
  \BibitemOpen
  \bibinfo {note} {The minimum number of AND gates required to implement $f$
  over $\protect \{$AND, XOR, NOT$\protect \}$.}\BibitemShut {Stop}%
\bibitem [{\citenamefont {Berry}\ \emph {et~al.}(2019)\citenamefont {Berry},
  \citenamefont {Gidney}, \citenamefont {Motta}, \citenamefont {McClean},\ and\
  \citenamefont {Babbush}}]{bgmmb19}%
  \BibitemOpen
  \bibfield  {author} {\bibinfo {author} {\bibfnamefont {D.~W.}\ \bibnamefont
  {Berry}}, \bibinfo {author} {\bibfnamefont {C.}~\bibnamefont {Gidney}},
  \bibinfo {author} {\bibfnamefont {M.}~\bibnamefont {Motta}}, \bibinfo
  {author} {\bibfnamefont {J.~R.}\ \bibnamefont {McClean}},\ and\ \bibinfo
  {author} {\bibfnamefont {R.}~\bibnamefont {Babbush}},\ }\href
  {https://doi.org/10.22331/q-2019-12-02-208} {\bibfield  {journal} {\bibinfo
  {journal} {{Quantum}}\ }\textbf {\bibinfo {volume} {3}},\ \bibinfo {pages}
  {208} (\bibinfo {year} {2019})}\BibitemShut {NoStop}%
\bibitem [{\citenamefont {Soeken}\ and\ \citenamefont
  {Roetteler}(2020)}]{sr20}%
  \BibitemOpen
  \bibfield  {author} {\bibinfo {author} {\bibfnamefont {M.}~\bibnamefont
  {Soeken}}\ and\ \bibinfo {author} {\bibfnamefont {M.}~\bibnamefont
  {Roetteler}},\ }in\ \href {https://doi.org/10.1109/QCE49297.2020.00052}
  {\emph {\bibinfo {booktitle} {2020 IEEE International Conference on Quantum
  Computing and Engineering (QCE)}}}\ (\bibinfo {year} {2020})\ pp.\ \bibinfo
  {pages} {366--371}\BibitemShut {NoStop}%
\bibitem [{\citenamefont {Gidney}(2019)}]{g19}%
  \BibitemOpen
  \bibfield  {author} {\bibinfo {author} {\bibfnamefont {C.}~\bibnamefont
  {Gidney}},\ }\href@noop {} {\bibinfo {title} {Spooky pebble games and
  irreversible uncomputation}},\ \bibinfo {howpublished}
  {\url{https://algassert.com/post/1905}} (\bibinfo {year} {2019}),\ \bibinfo
  {note} {accessed: 2021-06-01}\BibitemShut {NoStop}%
\bibitem [{\citenamefont {Nielsen}\ and\ \citenamefont {Chuang}(2000)}]{nc00}%
  \BibitemOpen
  \bibfield  {author} {\bibinfo {author} {\bibfnamefont {M.~A.}\ \bibnamefont
  {Nielsen}}\ and\ \bibinfo {author} {\bibfnamefont {I.~L.}\ \bibnamefont
  {Chuang}},\ }\href@noop {} {\emph {\bibinfo {title} {{Quantum Computation and
  Quantum Information}}}},\ Cambridge Series on Information and the Natural
  Sciences\ (\bibinfo  {publisher} {Cambridge University Press},\ \bibinfo
  {year} {2000})\BibitemShut {NoStop}%
\bibitem [{\citenamefont {Amy}\ \emph {et~al.}(2018)\citenamefont {Amy},
  \citenamefont {Azimzadeh},\ and\ \citenamefont {Mosca}}]{aam17}%
  \BibitemOpen
  \bibfield  {author} {\bibinfo {author} {\bibfnamefont {M.}~\bibnamefont
  {Amy}}, \bibinfo {author} {\bibfnamefont {P.}~\bibnamefont {Azimzadeh}},\
  and\ \bibinfo {author} {\bibfnamefont {M.}~\bibnamefont {Mosca}},\ }\href
  {https://doi.org/10.1088/2058-9565/aad8ca} {\bibfield  {journal} {\bibinfo
  {journal} {Quantum Science and Technology}\ }\textbf {\bibinfo {volume}
  {4}},\ \bibinfo {pages} {015002} (\bibinfo {year} {2018})},\ \Eprint
  {https://arxiv.org/abs/1712.01859} {arXiv:1712.01859} \BibitemShut {NoStop}%
\bibitem [{\citenamefont {{Meuli}}\ \emph {et~al.}(2020)\citenamefont
  {{Meuli}}, \citenamefont {{Soeken}}, \citenamefont {{Roetteler}},\ and\
  \citenamefont {{De Micheli}}}]{msrd20-2}%
  \BibitemOpen
  \bibfield  {author} {\bibinfo {author} {\bibfnamefont {G.}~\bibnamefont
  {{Meuli}}}, \bibinfo {author} {\bibfnamefont {M.}~\bibnamefont {{Soeken}}},
  \bibinfo {author} {\bibfnamefont {M.}~\bibnamefont {{Roetteler}}},\ and\
  \bibinfo {author} {\bibfnamefont {G.}~\bibnamefont {{De Micheli}}},\ }in\
  \href {https://doi.org/10.1109/ISCAS45731.2020.9180792} {\emph {\bibinfo
  {booktitle} {2020 IEEE International Symposium on Circuits and Systems
  (ISCAS)}}}\ (\bibinfo {year} {2020})\ pp.\ \bibinfo {pages}
  {1--5}\BibitemShut {NoStop}%
\bibitem [{\citenamefont {Kissinger}\ and\ \citenamefont
  {Uijlen}(2017)}]{ku17}%
  \BibitemOpen
  \bibfield  {author} {\bibinfo {author} {\bibfnamefont {A.}~\bibnamefont
  {Kissinger}}\ and\ \bibinfo {author} {\bibfnamefont {S.}~\bibnamefont
  {Uijlen}},\ }in\ \href {https://doi.org/10.1109/LICS.2017.8005095} {\emph
  {\bibinfo {booktitle} {2017 32nd Annual ACM/IEEE Symposium on Logic in
  Computer Science (LICS)}}}\ (\bibinfo {year} {2017})\ pp.\ \bibinfo {pages}
  {1--12}\BibitemShut {NoStop}%
\end{thebibliography}
\end{document}